\documentclass[sigconf, nonacm]{acmart}
\pdfoutput=1





 

\usepackage{ifthen}
\newboolean{fullversion}
\setboolean{fullversion}{true}

\usepackage{balance}


\usepackage{amsmath}
\usepackage{amsthm}
\usepackage{geometry}
\usepackage{xcolor}
\usepackage{enumitem}
\usepackage{adjustbox}
\usepackage{calc}
\usepackage{tikz}
\usepackage{subcaption}
\usepackage{xspace}
\usepackage{url}
\usepackage{ulem} 
\newcommand{\nop}[1]{}

\usepackage[ruled,vlined,noend]{algorithm2e}
\usepackage{listings}
\usepackage{caption}

\usepackage{hyperref}

\usepackage{verbatimbox}







\usetikzlibrary{decorations.pathreplacing,angles,quotes,positioning}

\newtheorem{theorem}{Theorem}
\newtheorem{definition}[theorem]{Definition}

\newtheorem{lemma}[theorem]{Lemma}
\newtheorem{example}[theorem]{Example}

\newcommand{\removespacetocaption}{}



\newcommand{\db}{\textbf{D}}
\newcommand{\type}[1]{\text{type}(#1)}
\newcommand{\schRel}{\textsf{Rels}}

\newcommand{\objectsRes}[1]{\objects_{#1}}

\newcommand{\Attr}[1]{\ensuremath{\text{Attr}(#1)}}



\newcommand{\eg}{\textit{e.g.}}

\newcommand{\trans}[1][i]{T_{#1}}

\newcommand{\transset}{{\mathcal{T}}}
\newcommand{\schedule}{s}
\newcommand{\objects}{\textbf{Tuples}}

\newcommand{\variables}{\mathbf{Var}}
\newcommand{\templ}[1][i]{\tau_{#1}}
\newcommand{\workload}{\mathcal{P}}

\newcommand{\tmap}{\mu}


\newcommand{\cg}[1]{CG(#1)}

\newcommand{\dependson}[3][\schedule]{#2 \rightarrow_{#1} #3}


\newcommand{\prefix}[2]{{\normalfont\textsf{prefix}}_{#2}(#1)}

\newcommand{\postfix}[2]{{\normalfont\textsf{postfix}}_{#2}(#1)}





\newcommand{\mvrc}{RC\xspace}
\newcommand{\MVRC}{\mvrc}

\newcommand{\readmvcom}{\MVRC\xspace}
\newcommand{\snapshot}{SI\xspace}

\newcommand{\parsnapshot}{{\normalfont\textsc{parallel snapshot isolation}}\xspace}



\newcommand{\coNP}{{\rm co{\sc np}}\xspace}



\newcounter{conditioncounter}


\newcommand{\remove}[1]{}%
%
%
%

\newcommand{\myR}{\ensuremath{\mathtt{R}}}
\newcommand{\myW}{\ensuremath{\mathtt{W}}}
\newcommand{\myUP}{\ensuremath{\mathtt{U}}}
\newcommand{\R}[2][i]{\myR_{#1}\mathtt{[#2]}}
\newcommand{\W}[2][i]{\myW_{#1}\mathtt{[#2]}}
\newcommand{\UP}[2][i]{\myUP_{#1}\mathtt{[#2]}}
\newcommand{\CT}[1][i]{\mathtt{C}_{#1}}

\newcommand{\ReadSet}[1]{\ensuremath{\text{ReadSet}(#1)}}
\newcommand{\WriteSet}[1]{\ensuremath{\text{WriteSet}(#1)}}

\newcommand{\x}{\mathtt{t}}
\newcommand{\y}{\mathtt{v}}
\newcommand{\z}{\mathtt{q}}

\newcommand{\myvs}{S}
\newcommand{\vt}{\mathtt{T}}
\newcommand{\myvv}{\mathtt{V}}

\newcommand{\vx}{\mathtt{X}}
\newcommand{\vy}{\mathtt{Y}}
\newcommand{\vz}{\mathtt{Z}}


\newcommand{\ListAttr}[1]{\ensuremath{\{\text{#1}\}}}
\newcommand{\Account}{\ensuremath{\text{Account}}\xspace}
\newcommand{\Savings}{\ensuremath{\text{Savings}}\xspace}
\newcommand{\Checking}{\ensuremath{\text{Checking}}\xspace}
\newcommand{\Conflict}{\ensuremath{\text{Conflict}}\xspace}

\newcommand{\Balance}{\ensuremath{\text{Balance}}\xspace}
\newcommand{\DepositChecking}{\ensuremath{\text{DepositChecking}}\xspace}
\newcommand{\TransactSavings}{\ensuremath{\text{TransactSavings}}\xspace}
\newcommand{\Amalgamate}{\ensuremath{\text{Amalgamate}}\xspace}

\newcommand{\tpcckv}{TPC-Ckv}
\newcommand{\Warehouse}{\ensuremath{\text{Warehouse}}\xspace}
\newcommand{\District}{\ensuremath{\text{District}}\xspace}
\newcommand{\Customer}{\ensuremath{\text{Customer}}\xspace}
\newcommand{\Order}{\ensuremath{\text{Order}}\xspace}
\newcommand{\OrderLine}{\ensuremath{\text{OrderLine}}\xspace}
\newcommand{\Stock}{\ensuremath{\text{Stock}}\xspace}

\newcommand{\sstart}{\textit{op}_0}




\renewcommand{\vx}{\textsf{X}}

%
%
%
%
%
%

\lstset{
language=Python,
basicstyle=\footnotesize\ttfamily,
numbers=left,
numbersep=5pt,
xleftmargin=20pt,
frame=tb,
framexleftmargin=20pt,
tabsize=2,
breaklines=true
}



\newcommand{\ptrans}{transaction template}   
\newcommand{\ptranss}{transaction templates} 
\newcommand{\Ptranss}{Transaction templates} 
\newcommand{\PTranss}{Transaction Templates} 
\newcommand{\shortptrans}{template}   
\newcommand{\shortptranss}{templates} 
\newcommand{\shortPTranss}{Templates} 

\newcommand{\canmu}{\bar{\mu}}

\ifthenelse{\boolean{fullversion}}{}{}


\begin{document}

\leftmargini 2.9ex

\renewcommand{\emph}[1]{{\it #1}}

\title{Robustness against Read Committed for Transaction Templates}

\author{Brecht Vandevoort}
\authornote{PhD Fellow of the Research Foundation -- Flanders (FWO)}
\email{brecht.vandevoort@uhasselt.be}
\affiliation{%
	\institution{\hspace{-5mm} Hasselt University and Transnational University of Limburg}
}

\author{Bas Ketsman}
\email{bas.ketsman@vub.be}
\affiliation{%
   \institution{Vrije Universiteit Brussel}
}

\author{Christoph Koch}
\email{christoph.koch@epfl.ch}
\affiliation{%
   \institution{\'Ecole Polytechnique F\'ed\'erale de Lausanne}
}

\author{Frank Neven}
\email{frank.neven@uhasselt.be}
\affiliation{%
   \institution{\hspace{-5mm} Hasselt University and Transnational University of Limburg}
}

\begin{abstract}
The isolation level Multiversion Read Committed (RC), offered by many database systems, is known to trade consistency for increased transaction throughput.
Sometimes, transaction workloads can be safely executed under RC obtaining the perfect isolation of serializability at the lower cost of RC. To identify such cases, we introduce an expressive model of transaction programs to better reason about the serializability of transactional workloads. We develop tractable algorithms to decide whether any possible schedule of a workload executed under RC is serializable (referred to as the robustness problem).
Our approach yields robust subsets that are larger than those identified by previous methods. We provide experimental evidence that workloads that are robust against RC can be evaluated faster under RC compared to stronger isolation levels. We discuss techniques for making workloads robust against RC by promoting selective read operations to updates. Depending on the scenario, the performance improvements can be considerable. Robustness testing and safely executing transactions under the lower isolation level RC can therefore provide a direct way to increase transaction throughput without changing DBMS internals.
\end{abstract}


\maketitle





\section{Introduction}
\label{sec:intro}

Relational database systems provide the ability to trade off isolation guarantees for improved performance by offering a variety of isolation levels, the highest being serializability, which guarantees what is considered to be perfect isolation. Executing transactions concurrently under weaker isolation levels is not without risk, as it can introduce certain anomalies. Sometimes, however, a set of transactions can be executed at an isolation level lower than serializability without introducing any anomalies. This is a desirable scenario: a lower isolation level, usually implementable with a cheaper concurrency control algorithm, gives us the stronger isolation guarantees of serializability for free.
This formal property is called robustness \cite{DBLP:conf/pods/Fekete05,DBLP:conf/concur/0002G16}: a set of transactions $\transset$ is called \emph{robust against a given isolation level} if every possible interleaving of the transactions in $\transset$ that is {allowed} under the specified isolation level is serializable.

There is a famous example that is part of database folklore: the
TPC-C benchmark \cite{TPCC} is robust against Snapshot Isolation (SI), so there is no need to run a stronger, and more expensive, concurrency control algorithm than SI if the workload is just TPC-C. This has
played a role in the incorrect choice of SI as the general concurrency control algorithm for isolation level Serializable in Oracle and PostgreSQL
(before version 9.1, cf.\ 
\cite{DBLP:journals/tods/FeketeLOOS05}).

Robustness is, fundamentally, a static property of workloads, rather than a property detectable online, while a concrete transaction schedule unfolds.
It involves the static or offline analysis of {\it transaction programs}\/ (code) to decide whether all possible interleavings of transactions (that is, instantiations of transaction programs) at runtime are guaranteed to be robust.
Robustness received quite a bit of attention in the literature.
Most existing work focuses on SI~\cite{Alomari:2008:CSP:1546682.1547288,DBLP:conf/cav/BeillahiBE19,DBLP:conf/pods/Fekete05,DBLP:journals/tods/FeketeLOOS05} or higher isolation levels~\cite{DBLP:conf/concur/BeillahiBE19,DBLP:conf/concur/0002G16,DBLP:conf/concur/Cerone0G15,cerone_et_al:LIPIcs:2017:7794}. 
It is particularly interesting to consider robustness against lower level isolation levels like multi-version Read Committed (referred to as \MVRC from now on). Indeed,  \MVRC is widely available, often the default in database systems (see, e.g., [4]), and is generally expected to have better throughput than stronger isolation levels. The work by Alomari and Fekete~\cite{DBLP:conf/aiccsa/AlomariF15} studies
robustness against \MVRC and proposes ways to preanalyse (and then modify) the code of a set of applications allowing to run transactions under RC while still guaranteeing that all executions are serializable.

 In general, robustness is a hopelessly undecidable property and
previous work has therefore only dealt with very simple models of workloads.
In this paper, we focus on pushing the frontier of the robustness problem for \MVRC. Robustness for arbitrary database application code would require the full sophistication of state-of-the-art program analysis and theorem provers and would not allow us to distill general guarantees that can lead to simpler analysis algorithms. 
We take a middle road, proposing a more expressive model of workloads than previously considered, which lets us still craft a complete and tractable decision procedure for robustness. We will show by examples -- specifically the TPC-C and SmallBank benchmarks -- that our model allows us to significantly expand the reach of robustness testing, yielding guaranteed serializability at the cost of just \MVRC isolation for a much larger class of workloads.

Our approach is centered on a novel characterization of robustness against \MVRC in the spirit of \cite{DBLP:conf/pods/Fekete05,DBLP:conf/pods/Ketsman0NV20} that improves over the sufficient condition presented in \cite{DBLP:conf/aiccsa/AlomariF15}, and on a formalization of transaction programs, called {\it transaction templates}, facilitating fine-grained reasoning for robustness against \MVRC.
Key aspects of our formalization are the following:
\begin{itemize}
\addtolength{\topsep}{-0.3ex}
\addtolength{\labelsep}{-0.3ex}
\item
Conceptually, {\it transaction templates}
are functions with parameters, and can, for instance, be derived from stored procedures inside a database system. Our abstraction generalizes transactions as usually studied in concurrency control research -- sequences of read and write operations -- by making the objects worked on variable, determined by input parameters. 
Such parameters are {\it typed} to add additional power to the analysis.

\item
We support {\it atomic updates} (that is, a read followed by a write of the same database object, to make a relative change to its value) allowing us to identify some workloads as robust that otherwise would not be.

\item
Furthermore, we model database objects read and written at the granularity of fields, rather than just entire tuples, decoupling conflicts further and allowing to recognize additional cases that would not be recognizable as robust on the tuple level.
\end{itemize} 

There are also a few restrictions to the model. We assume there is a fixed set of read-only attributes that cannot be updated and which are used to select tuples for update.
The most typical example of this are primary key values passed to transaction templates as parameters.
The inability to update primary keys is not an important restriction in many workloads, where keys, once assigned, never get changed, for regulatory or data integrity reasons. 
{In general, this restriction on updating and query-based selection of the same fields deals with the fact that the static, workload-level analysis of the phantom problem quickly yields undecidability.}
This makes our results inapplicable in certain scenarios, but these assumptions are necessary to make robustness decidable for such a versatile class of workloads, and it seems an acceptable trade-off to obtain such a result. It can be hoped that future work will push this decidability frontier even further.
These choices provide an interesting tradeoff between tractability and the ability to model and decide the robustness of more realistic workloads, as will be argued and illustrated throughout the remainder of the paper (as in Section~\ref{sec:example} for the SmallBank benchmark). 

{The sufficiency of our test for robustness, and the modification techniques we introduce to make programs robust, are practically applicable to programs that fit our model of a template. Programs that contain reads based on a predicate, rather than lookups on unchanging attributes such as a primary key, will need further techniques. Also, the necessity we prove for our decision procedure is only valid within our definition of RC isolation. In practice, it is possible for a set of programs running on a particular platform to always generate serializable executions even if they do not meet our test, in the case that the platform's implementation of RC doesn't allow all the possible interleavings which are covered by our definition of RC.}

\noindent    
\paragraph{In summary, the technical contributions of this paper are the following.}
\smallskip
\noindent
(1) We provide a full characterization for robustness against \MVRC for a workload of mere transactions instances (i.e., in the absence of variables).
The characterization 
forms a main building block for the robustness results for transaction templates mentioned in (3) below. Our result is interesting in its own right as there are not many isolation levels for which complete characterizations are known. The seminal paper by Fekete~\cite{DBLP:conf/pods/Fekete05} was the first to provide a characterisation for SI.
More recently, such characterisations where obtained for \MVRC and Read Uncommitted under a lock based rather than a multiversion semantics~\cite{DBLP:conf/pods/Ketsman0NV20}. In fact, it was shown that robustness against \MVRC under a lock-based semantics is \coNP-complete which should be contrasted with the polynomial time algorithm for multiversion Read Committed obtained in this paper.

\smallskip   
\noindent
(2) We introduce the formalism of transaction templates and formally define how associated sets of workloads are defined. The new formalism takes into account the type of variables in operations, makes atomic updates explicit, and models database objects read and written at the granularity of fields rather than tuples. 

\smallskip
\noindent
{(3) We obtain a polynomial time decision procedure for robustness against \MVRC for workloads of transactions defined by \ptranss{}. This is the first time a sound and complete algorithm for robustness against \MVRC on the level of transaction programs is obtained -- that is, an algorithm that does not produce false positives nor false negatives. In this way, we extend the work in \cite{DBLP:conf/aiccsa/AlomariF15} that is based on a sufficient condition for robustness in the sense that false positives never occur but false negatives can.} We discuss the implications of our algorithm in detail in Section~\ref{sec:detecting:robust:sets}.

\smallskip
\noindent
(4) We assess the effectiveness of our approach by analyzing 
SmallBank and \tpcckv{} (based on TPC-C) 
showing that 
we can identify robust subsets that are 
 larger than those identified by previous methods. 
Still, neither SmallBank nor \tpcckv{} is robust against \MVRC when taking all \ptranss{} into account. We 
consider ways to make \ptranss{} robust by promoting selective read operations to update operations and assess the effectiveness of this method on both benchmarks. With these (save) adaptations, both full benchmarks become robust for RC.

\smallskip
\noindent
(5) We experimentally demonstrate, using these two benchmarks and a well-known and unmodified DBMS, that our approach leads to practical performance improvements compared to when executed under SI or serializable SI, and compared to other robustness techniques for \MVRC~\cite{DBLP:conf/aiccsa/AlomariF15}, especially under higher contention.\footnote{In the absence of contention, the three techniques -- all sharing a common MVCC code base in the DBMS we use for experimentation -- essentially perform the same instructions and no improvements can be expected.}

\ifthenelse{\boolean{fullversion}}{
    \noindent    
    \paragraph{Outline.}
    We provide an extended example illustrating our results in Section~\ref{sec:example} and discuss related work in Section~\ref{sec:relwork}. We introduce the necessary definitions in Section~\ref{sec:defs}. We obtain a characterization for robustness against \MVRC in Section~\ref{sec:robustness-transactions}. In Section~\ref{sec:templates} and \ref{sec:robustness-templates}, we define templates and present our results for deciding robustness for transaction templates. 
    We discuss how to detect robust subsets in Section~\ref{sec:detecting:robust:sets}.
    We experimentally validate our approach in Section~\ref{sec:experiments} and conclude in Section~\ref{sec:concl}.
}{}
\ifthenelse{\boolean{fullversion}}
{}
{
    \smallskip
    \noindent
    Due to space constraints, proofs have been deferred to the online available full version of this
    paper~\cite{fullversion}. 
}


\section{Motivating Example}
\label{sec:example}

The SmallBank~\cite{Alomari:2008:CSP:1546682.1547288} schema consists of the tables
Account(\underline{Name}, CustomerID),
Savings(\underline{CustomerID}, Balance), and
Checking(\uline{Cus\-tomerID}, Balance) (key attributes are underlined).
The \Account table associates customer names with IDs. The other tables contain the balance (numeric value) of the savings and checking accounts of customers identified by their ID. 
The application code interacts with the database via the following transaction programs:
Balance($N$) returns the total balance (savings and checking) for a customer with name $N$.
%
DepositChecking($N$,$V$) makes a deposit of amount $V$ 
in the checking account of the customer with name $N$ (see Figure~\ref{fig:intro:depositchecking}).
%
TransactSavings($N$,$V$) makes a deposit or withdrawal $V$ on the savings account of the customer with name $N$.
%
Amalgamate($N_1$,$N_2$) transfers all the funds from customer $N_1$ to customer $N_2$.
%
Finally, WriteCheck($N$,$V$) writes a check $V$ against the account of the customer with name $N$, penalizing if overdrawing.

\noindent
\textbf{\textit{Formalisation of transactions templates.}}
Figure~\ref{fig:smallbank-abstract-syntax} displays the transaction templates for SmallBank.
\ifthenelse{\boolean{fullversion}}
{
The corresponding SQL code is provided in Figure~\ref{fig:smallbank:SQL} in the appendix.
}
{
}
A \ptrans{} consists of a sequence of read, write, and update operations to a tuple $\vx$ in a specific relation. For instance, $\R[]{\vx :\Account\{N,C\}\}}$ indicates that a read operation is performed to a tuple in relation $\Account$ on the attributes Name and CustomerID. We abbreviate the names of attributes by their first letter to save space. The set $\{N,C\}$ is the read set of the read operation.
Similarly, $\myW$ and $\myUP$ refer to write and update operations to tuples of a specific relation. Write operations have an associated write set while update operations contain a read set followed by a write set: e.g.,
$\UP[]{\vz :\DepositChecking\{C,B\}\{B\}\}}$ first reads the CustomerID and Balance of tuple $\vz$ and then writes to the attribute Balance.
All $\myR$-, $\myW$- and $\myUP$-operations always access exactly one tuple. A $\myUP$-operation is an atomic update that first reads the tuple and then writes to it. {Templates serve as abstractions of transaction programs and represent an infinite number of possible workloads. For instance, disregarding attribute sets, $\{ R[\x]R[\y]R[\z]\allowbreak U[\z], R[\x']R[\y']R[\z']U[\z'], R[\x]U[\z]\}$ is a workload consistent with the SmallBank templates as it contains two instantiations of Write\-Check and one instantiation of DepositChecking;  $\{R[\x]R[\y]\allowbreak R[\z]\allowbreak U[\z']\}$
with $\z\ne\z'$ is not a valid workload as the two final operations in WriteCheck should be on the same object as required by the formalization.} Typed variables effectively enforce domain constraints as we assume that variables that range over tuples of different relations can never be instantiated by the same value. For instance, in the \ptrans{} for \DepositChecking in Figure~\ref{fig:smallbank-abstract-syntax}, $\vx$ and $\vz$ can not be interpreted to be the same object.

\noindent
{\textbf{\textit{Detecting more robust subsets.}}}
Figure~\ref{fig:table:robust} gives an overview of the maximal robust subsets that are detected using our methods for the SmallBank and \tpcckv{} benchmarks (\tpcckv{} is discussed in Section~\ref{sec:robustness-templates} and the  templates are given in Figure~\ref{fig:tpcc-abstract-syntax}). 
Transaction templates are presented in abbreviated form (e.g., Bal refers to Balance). To assess the effect of the different features of our abstraction, we consider different settings: `Only R \& W' is the 
setting where updates are modeled through a read followed by a write
and where read and write sets always specify the whole set of attributes (that is, conflicts are considered on the level of entire tuples). This setting can be seen to correspond to the one of \cite{DBLP:conf/aiccsa/AlomariF15} that only reports the set \{Balance\} as robust against \MVRC. 

The setting `Atomic Updates' is the extension that models updates explicitly as atomic updates and already allows to detect relatively large robust sets compared to the `Only R \& W' setting. Indeed, for SmallBank 
\{Am,DC,TS\} is a robust subset indicating that any schedule using any number of instantiations of just these three \shortptranss{} that satisfies \MVRC is serializable! Also for \tpcckv{} larger robust subsets are detected.

Finally, `Attr conflicts' no longer requires read and write sets to specify all attributes (that is, conflicts are specified on the level of attributes). 
To illustrate its importance, 
consider the operations 
$\R[]{\vx: \Warehouse\ListAttr{W, Inf}}$ and 
$\UP[]{\vx: \Warehouse\ListAttr{W, YTD}\ListAttr{YTD}}$ coming from templates
NewOrder and Payment, respectively, in the \tpcckv{} benchmark as given in Figure~\ref{fig:tpcc-abstract-syntax}. An instantiation of these template mapping $\vx$ in both operations to the same tuple $\x$, does not result in a conflict  as the read set of the former is disjoint from the write set of the latter. However, considering conflicts on the granularity of tuples, that is, read and write sets refer to all attributes, does result in a conflict.
This difference in granularity has a profound effect for \tpcckv{} as can be seen
in the last row of Figure~\ref{fig:table:robust}: a robust subset of four templates (out of five!) is found: \{Del,Pay,NO,SL\}. For SmallBank there is no improvement 
as tuple conflicts always imply attribute conflicts for this benchmark as all attribute conflicts are based on the same Balance attributes in \Savings{} and \Checking{}. We explain in Section~\ref{sec:detecting:robust:sets} how robustness on attribute-level conflicts implies robustness on systems whose concurrency control subsystem works at the granularity of tuples.

We do not claim that all features in our abstraction are novel. The novelty lies in their combination to push the frontier of the robustness problem for \MVRC. Indeed, Figure~\ref{fig:table:robust} clearly shows that when taken together in an explicit formalisation, larger sets of transaction workloads can be safely determined to be robust. This is relevant since robust workloads can be executed under \MVRC at increased throughput compared to SI
 or serializable SI (see Section~\ref{sec:exp:robust_subset}).

Earlier work on robustness against \mvrc~\cite{DBLP:conf/aiccsa/AlomariF15} 
based on counterflow dependencies did not consider atomic updates or attribute-level conflicts, but can be extended to these settings. The robust subsets that are detected by these extension are given in Figure ~\ref{fig:table:relworkrobust}. A comparison with Figure~\ref{fig:table:robust}
reveals that although larger subsets are detected, 
our analysis still detects more and even larger robust subsets for both benchmarks, under both `Atomic Updates' and `Attr conflicts'.

\begin{figure}[t]
\begin{minipage}[c]{0.99\columnwidth}
%
\begin{verbbox}[\small]
DepositChecking(N,V):
  SELECT CustomerId INTO :X FROM Account WHERE Name=:N;
  UPDATE Checking SET Balance = Balance+:V
    WHERE CustomerId=:X;
  COMMIT;
\end{verbbox}

\begin{center}
{
    \theverbbox
}
\end{center}

\removespacetocaption

    \caption{SQL code for \DepositChecking.}
    \label{fig:intro:depositchecking}



\end{minipage}%

\vspace{.5em}
\begin{minipage}[c]{0.99\columnwidth}
    \centering\small

\begin{minipage}[t]{0.5\textwidth-2ex}
\Balance: 
\[
\begin{array}{l}
\R[]{\vx: \Account\ListAttr{N, C}}\\
\R[]{\vy: \Savings\ListAttr{C, B}}\\
\R[]{\vz: \Checking\ListAttr{C, B}}\\
\end{array}
\]
DepositChecking: 
\[
\begin{array}{l}
\R[]{\vx: \Account\ListAttr{N, C}}\\
\UP[]{\vz: \Checking\ListAttr{C, B}\ListAttr{B}}\\
\end{array}
\]
TransactSavings: 
\[
\begin{array}{l}
\R[]{\vx: \Account\ListAttr{N, C}}\\
\UP[]{\vy: \Savings\ListAttr{C, B}\ListAttr{B}}\\
\end{array}
\]
\end{minipage}%
\hfill%
\begin{minipage}[t]{0.50\textwidth-2ex}
Amalgamate: 
\[
\begin{array}{l}
\R[]{\vx_1: \Account\ListAttr{N, C}}\\
\R[]{\vx_2: \Account\ListAttr{N, C}}\\
\UP[]{\vy_1: \Savings\ListAttr{C, B}\ListAttr{B}}\\
\UP[]{\vz_1: \Checking\ListAttr{C, B}\ListAttr{B}}\\
\UP[]{\vz_2: \Checking\ListAttr{C, B}\ListAttr{B}}\\
\end{array}
\]
WriteCheck: 
\[
\begin{array}{l}
\R[]{\vx: \Account\ListAttr{N, C}}\\
\R[]{\vy: \Savings\ListAttr{C, B}}\\
\R[]{\vz: \Checking\ListAttr{C, B}}\\
\UP[]{\vz: \Checking\ListAttr{C, B}\ListAttr{B}}\\
\end{array}
\]
\end{minipage}

\removespacetocaption

    \caption{Transaction templates for SmallBank. 
    }
    \label{fig:smallbank-abstract-syntax}

\end{minipage}

\vspace{.3em}

\begin{center}
\begin{small}
\begin{tabular}{l||l|l}
               & SmallBank 
                                 & \tpcckv{} 
                                 \\ \hline\hline
Only R \& W &  \{Bal\} & \{OS, SL\} \\  \hline                               
Atomic Updates &   \{Am,DC,TS\}, &  \{Del,Pay,SL\}, \{NO, SL\},\\
& \{Bal,DC\}, \{Bal,TS\} & \{Pay, OS, SL\} \\
\hline
Attr conflicts &  \{Am,DC,TS\}, & \{Del,Pay,NO,SL\}, \\
&  \{Bal,DC\}, \{Bal,TS\} & \{Pay, OS, SL\} \\
\end{tabular}
\end{small}
\end{center}

\removespacetocaption

\caption{Robust subsets by analysis setting. 
 \label{fig:table:robust}}

\vspace{.3em}

\begin{center}
\begin{small}
\begin{tabular}{l||l|l}
            & SmallBank 
                              & \tpcckv{} 
                              \\ \hline\hline
Only R \& W &  \{Bal\} & \{OS, SL\} \\  \hline                               
Atomic Updates &   \{Am,DC,TS\}, \{Bal\} &  \{Del,Pay,SL\}, \{NO\}, \{OS,SL\}\\
\hline
Attr conflicts &  \{Am,DC,TS\}, \{Bal\} & \{Del,Pay,SL\}, \{Del,Pay,NO\} \\
& & \{OS,SL\} \\
\end{tabular}
\end{small}
\end{center}

\removespacetocaption

\caption{Detection of robust subsets based on counterflow dependencies ~\cite{DBLP:conf/aiccsa/AlomariF15} extended to updates and attribute level conflicts as introduced in this paper. 
\label{fig:table:relworkrobust}}

\end{figure}

\ifthenelse{\boolean{fullversion}}
{
We refer to the appendix for a detailed robustness analysis for each combination of \ptranss{}.
}
{
}


\section{Related Work}
\label{sec:relwork}

\ifthenelse{\boolean{fullversion}}{
\subsection{Static robustness checking on the application level}
}{\smallskip \noindent {\bf Static robustness checking on the application level.}}
Previous work on static robustness testing~\cite{DBLP:journals/tods/FeketeLOOS05,DBLP:conf/aiccsa/AlomariF15} for transaction programs is based on the following key insight: when a \emph{schedule} is not serializable, then the dependency graph constructed from that schedule contains a cycle satisfying a condition specific to the isolation level at hand: dangerous structure for \snapshot and the presence of a counterflow edge for \MVRC. This is extended to 
a workload of \emph{transaction programs} via a
 so-called static dependency graph, where each program is represented by a node, and there is a conflict edge from one program to another if there can be a schedule that gives rise to that conflict.  The absence of a cycle satisfying the condition specific to that isolation level guarantees robustness, while the presence of a cycle does not necessarily imply non-robustness. {\it We provide a formal approach to static robustness testing by making underlying assumptions more explicit within the formalism of transaction templates and obtain a decision procedure that is sound and complete for robustness testing 
 against \MVRC, allowing to detect larger subsets of transactions to be robust as exemplified in Section~\ref{sec:example}.}

\ifthenelse{\boolean{fullversion}}{
Cerone et al.~\cite{DBLP:conf/concur/Cerone0G15} provide a framework for uniformly specifying different isolation levels in a declarative way. 
A key assumption is \emph{atomic visibility} requiring that either all or none of the updates of each transaction are visible to other transactions.
Based on this framework, Bernardi and Gotsman~\cite{DBLP:conf/concur/0002G16} provide sufficient conditions for robustness against these isolation levels. 
Similar to the work of Fekete et al.~\cite{DBLP:journals/tods/FeketeLOOS05}, they first identify specific properties admitted by cycles in the dependency graphs of schedules that are allowed by the isolation level but not serializable. While analyzing robustness for a given set of program instances, they assume that each program instance is overestimated by three sets of tuples: those that might be read or written to by the program instance, and those that must be written to by the program instance. based on these sets, a static dependency graph is constructed. Analogous to~\cite{DBLP:journals/tods/FeketeLOOS05}, the absence of cycles with the property related to an isolation level in this graph guarantees that the set of program instances is robust against that isolation level. When analyzing robustness for a set of programs instead of specific program instances, a summary dependency graph is constructed, where each program is represented by a node. This graph is similar to static dependy graphs, but has additional information on the edges related to how the programs should be instantiated to create a specific conflict. This additional information reduces the number of workloads that are falsely identified to be non-robust.
Continuing on this line of work, Cerone and Gotsman~\cite{Cerone:2018:ASI:3184466.3152396} later studied the problem of robustness against \parsnapshot towards \snapshot (i.e., whether for a given workload every schedule allowed under \parsnapshot is allowed under \snapshot).
\emph{This declarative framework cannot be used to study robustness against \MVRC, as \MVRC does not admit \emph{atomic visibility}.}
}
{
 Other work studies robustness within a framework for uniformly specifying different isolation levels in a declarative way~\cite{DBLP:conf/concur/Cerone0G15,DBLP:conf/concur/0002G16,Cerone:2018:ASI:3184466.3152396}. A key assumption here is \emph{atomic visibility} requiring that either all or none of the updates of each transaction are visible to other transactions.
 \emph{This aims at higher isolation levels and cannot be used for \MVRC, as \MVRC does not admit \emph{atomic visibility}.}
 }

{
Executing a non-robust workload under a lower isolation level usually increases throughput at the cost of increasing the number of anomalies. 
To better quantify this tradeoff for a given workload, Fekete et al.~\cite{DBLP:journals/pvldb/FeketeGA09} presented a probabilistic model that predicts the rate of integrity violations depending on specific workload configurations. \emph{This line of work is orthogonal to robustness,
as a robust workload will increase throughput without introducing anomalies.
}}

\ifthenelse{\boolean{fullversion}}{

}{
Transaction chopping splits transactions into smaller pieces to obtain performance benefits and is correct if, for every serializable execution of the chopping, there exists an equivalent serializable execution of the original transactions~\cite{DBLP:journals/tods/ShashaLSV95}.  Cerone et al. \cite{Cerone:2018:ASI:3184466.3152396,DBLP:conf/wdag/CeroneGY15}  studied chopping under various isolation levels. \emph{Transaction chopping has no direct relationship with robustness testing against \MVRC.}
}

\ifthenelse{\boolean{fullversion}}{
\subsection{Making transactions robust}
}{
\noindent {\bf Making transactions robust.}}
{When a workload is not robust against an isolation level, robustness can be achieved by modifying the transaction programs~\cite{DBLP:journals/tods/FeketeLOOS05, DBLP:conf/aiccsa/AlomariF15, DBLP:conf/dasfaa/AlomariCFR08,Alomari:2008:CSP:1546682.1547288, DBLP:conf/aiccsa/Alomari13},
using an external lock manager~\cite{DBLP:conf/icde/AlomariFR09, DBLP:conf/aiccsa/AlomariF15, DBLP:conf/aiccsa/Alomari13},
allocating some programs to higher isolation levels~\cite{DBLP:conf/pods/Fekete05,DBLP:conf/dasfaa/AlomariCFR08},
or even a combination of these techniques~\cite{DBLP:conf/aiccsa/Alomari13}.

For \snapshot, two code modification techniques to remove dangerous structures from the static dependency graph have been studied~\cite{DBLP:journals/tods/FeketeLOOS05,DBLP:conf/dasfaa/AlomariCFR08,Alomari:2008:CSP:1546682.1547288, DBLP:conf/aiccsa/Alomari13, DBLP:conf/icde/AlomariFR09}: materialization and promotion. The materialization technique materializes conflicts between two potentially concurrent transactions by adding a new tuple to the database symbolizing this conflict and a write to this tuple is added to both transactions enforcing them to be non-concurrent. Alternatively, an external lock manager can be used~\cite{DBLP:conf/icde/AlomariFR09}. 
The promotion technique promotes a read operation by adding
an identity write to the same object. On some DBMS's, promotion can be implemented by changing the \textsc{\texttt{SELECT}} statement to \textsc{\texttt{SELECT ... FOR UPDATE}}. An alternative to 
code modification techniques is to allocate some transactions to S2PL instead of \snapshot~\cite{DBLP:conf/pods/Fekete05}. Alomari~\cite{DBLP:conf/aiccsa/Alomari13} considered a refinement that adds an additional write to each transaction running under S2PL.

For RC, Alomari and Fekete [5] consider lock materialization to avoid counterflow dependencies using two approaches: (1) in-database, by adding a write on a newly introduced tuple at the start of each transaction; and, (2) introducing an external lock manager outside of the database that application programs need to access.
In contrast, \emph{we employ a code modification technique based on promotion as for \snapshot changing certain read operations to updates. We provide a comparison in Section~\ref{sec:promotion}.
}

\ifthenelse{\boolean{fullversion}}{
\subsection{Other approaches}
}{
}

\ifthenelse{\boolean{fullversion}}{
{Instead of weakening the isolation level, other approaches to increasing transaction throughput without sacrificing ACID guarantees have been studied as well.}
Transactions can for example be split in smaller pieces to obtain performance benefits. However, this approach poses a new challenge, as not every serializable execution of these chopped transactions is necessarily equivalent to some serializable execution over the original transactions. A chopping of a set of transactions is correct if for every serializable execution of the chopping there exists an equivalent serializable execution of the original transactions. Shasha et al.~\cite{DBLP:journals/tods/ShashaLSV95} provide a graph based characterization for this correctness problem.
{This problem has been studied for different isolation levels such as \snapshot~\cite{Cerone:2018:ASI:3184466.3152396} and \parsnapshot~\cite{DBLP:conf/wdag/CeroneGY15} as well. However, in this case a correct chopping does not guarantee serializability. Instead, it verifies whether every execution of the chopped transactions allowed under an isolation level is equivalent to some execution of the original transactions allowed under this isolation level.}
\emph{Transaction chopping has no direct relationship with robustness testing against \MVRC.}

{Another approach is to modify existing algorithms that guarantee serializability. One notable example is a modification of S2PL where a transaction might release some locks before it acquired all locks. Wolfson~\cite{DBLP:journals/jal/Wolfson86,DBLP:journals/jal/Wolfson87} uses a sufficient condition to determine for a given workload at which point each lock acquired by a transaction might be released without risking anomalies.}

{When semantic knowledge of the transaction programs is available, it can be used to weaken the serializability requirement. Farrag and \"Ozsu~\cite{DBLP:journals/tods/FarragO89} use semantic knowledge of allowed interleavings between transactions to construct a new concurrency control algorithm that guarantees relatively consistent schedules. These relatively consistent schedules always preserve consistency, but do not necessarily guarantee serializability. Lu et al.~\cite{DBLP:journals/tkde/LuBL04} provide sufficient conditions under which every execution over a set of transactions under a given lock-based isolation level is semantically correct. A schedule is semantically correct if it has the same semantic effect as a serial schedule. As such, semantic correctness does not necessarily guarantee traditional serializability.
}
}{}

Many approaches to increase transaction throughput have been proposed: improved or novel pessimistic (cf., e.g., ~\cite{DBLP:journals/pvldb/YanC16,DBLP:journals/pvldb/TianHMS18,DBLP:conf/sigmod/RenFA16,DBLP:journals/pvldb/RenTA12,DBLP:journals/pvldb/JohnsonPA09})
or optimistic (cf., e.g., ~\cite{DBLP:conf/sigmod/SharmaSD18,DBLP:conf/sigmod/YuPSD16,DBLP:journals/pvldb/GuoCWQZ19,DBLP:journals/pvldb/HuangQKLS20,DBLP:journals/pvldb/LarsonBDFPZ11,DBLP:conf/sigmod/DiaconuFILMSVZ13,DBLP:conf/sigmod/BernsteinDDP15,DBLP:conf/cidr/BernsteinRD11,DBLP:journals/pvldb/SadoghiCBNR14,DBLP:conf/cloud/DingKDG15,DBLP:conf/sigmod/0001MK15,DBLP:conf/sigmod/KimWJP16,DBLP:conf/sigmod/LimKA17,DBLP:conf/sigmod/JonesAM10,DBLP:journals/pvldb/YuanWLDXBZ16})
algorithms, 
as well as approaches based on coordination avoidance (cf., e.g., ~\cite{DBLP:journals/pvldb/FaleiroAH17,DBLP:conf/sigmod/PrasaadCS20,DBLP:journals/pvldb/LuYCM20,DBLP:conf/sigmod/ShengTZP19,DBLP:journals/pvldb/FaleiroA15,DBLP:journals/pvldb/RenLA19,DBLP:conf/sigmod/ThomsonDWRSA12}). 
\emph{We do not compare to these as our focus lies on a technique that can be applied to standard DBMS's without any modifications to the database internals.}


\section{Definitions}
\label{sec:defs}

\smallskip
\noindent {\bf Databases.}
A \emph{relational schema} is a set $\schRel$ of relation names, and for
each $R\in\schRel$, $\Attr{R}$ is the finite set of associated attribute names.
For every relation $R\in\schRel$, we fix an infinite set $\objectsRes{R}$ of abstract objects called tuples.
We assume that $\objectsRes{R} \cap \objectsRes{S} = \emptyset$ for all  $R,S\in\schRel$ with $R\neq S$. We then denote by $\objects$ the set $\bigcup_{R\in\schRel} \objectsRes{R}$ of all possible tuples. By definition, for every $\x\in\objects$ there is a unique relation $R \in \schRel$ such that $\x\in\objectsRes{R}$. 
In that case, we say that $\x$ is of \emph{type} $R$ and denote the latter by $\type{\x}=R$. 
A \emph{database} $\db$ over schema $\schRel$ assigns to every relation name $R\in \schRel$ a finite set $R^{\db} \subset\objectsRes{R}$.

\smallskip
\noindent {\bf Transactions and Schedules.}
For a tuple $\x \in \objects$, we distinguish three operations $\R[]{\x}$, $\W[]{\x}$, and $\UP[]{\x}$ on $\x$, denoting that tuple $\x$ is read, written, or updated, respectively. {We say that the operation is on the tuple $\x$.}
The operation $\UP[]{\x}$ is an atomic update and should be viewed as an atomic sequence of a read of $\x$ followed by a write to $\x$.
We will use the following terminology: a \emph{read operation} is an $\R[]{\x}$ or a $\UP[]{\x}$, and a \emph{write operation} is a $\W[]{\x}$ or a $\UP[]{\x}$. Furthermore, an \myR-operation is an $\R[]{\x}$, a \myW-operation is a $\W[]{\x}$, and a \myUP-operation is a $\UP[]{\x}$.
We also assume a special \emph{commit} operation denoted $\CT[]$.
{To every operation $o$ on a tuple of type $R$, we associate the set of attributes
$\ReadSet{o}\subseteq\Attr{R}$ and $\WriteSet{o}\subseteq\Attr{R}$ containing, respectively, the set of attributes that $o$ reads from and writes to. When $o$ is a $\myR$-operation then $\WriteSet{o}=\emptyset$. Similarly, when $o$ is a $\myW$-operation then $\ReadSet{o}=\emptyset$. 
} 

A \emph{transaction} $\trans[]$ 
is a sequence of read and write operations 
followed by a commit. 
Formally, we model a transaction as a linear order $(\trans[],\leq_{\trans[]})$, where $\trans[]$ is the set of (read, write and commit) operations occurring in the transaction and $\leq_{\trans[]}$ encodes the ordering of the operations. As usual, we use $<_{\trans[]}$ to denote the strict ordering.

When considering a set $\transset$ of transactions, we assume that every transaction in the set has a unique id $i$ and write $\trans$ to make this id explicit. Similarly, to distinguish the operations of different transactions, we add this id as a subscript to the operation. {That is, we write $\W{\x}$, $\R{\x}$, and $\UP{\x}$ to denote a $\W[]{\x}$, $\R[]{\x}$, and $\UP[]{\x}$ occurring in transaction $\trans$; similarly $\CT[i]$ denotes the commit operation in transaction $T_i$. }
This convention is consistent with the literature (see, \eg\
\cite{DBLP:conf/sigmod/BerensonBGMOO95,DBLP:conf/pods/Fekete05}). 
To avoid ambiguity of notation, we assume that a transaction performs at most one write, one read, and one update per tuple.
The latter is a common assumption (see, \eg~\cite{DBLP:conf/pods/Fekete05}). All our results carry over to the more general setting in which multiple writes and reads per tuple are allowed.

A \emph{(multiversion) schedule} $\schedule$ over a set $\transset$ of transactions is a tuple $(O_\schedule, \leq_\schedule, {\ll_\schedule,} v_\schedule)$ where $O_\schedule$ is the set 
 containing all operations of transactions in $\transset$ as well as a special operation $\sstart$ conceptually writing the initial versions of all existing tuples, $\leq_\schedule$ encodes the ordering of these operations, {$\ll_\schedule$ is a \emph{version order} providing for each tuple $\x$ a total order over all write operations on $\x$ occurring in $\schedule$,} and $v_\schedule$ is a \emph{version function} mapping each read operation $a$ in $\schedule$ to either $\sstart$ or to a write\footnote{Recall that a write operation is either a $\W[]{x}$ or a $\UP[]{x}$.} operation different from $a$ in $\schedule$. 
We require that $\sstart \leq_\schedule a$ for every operation $a \in {O_\schedule}$, {$\sstart \ll_\schedule a$ for every write operation $a \in {O_\schedule}$}, and that $a <_{\trans[]} b$ implies $a <_\schedule b$ for every $\trans[] \in \transset$ and every $a,b \in \trans[]$. 
We furthermore require that for every read operation $a$, $v_\schedule(a) <_\schedule a$ and, if $v_\schedule(a) \neq \sstart$, then the operation $v_\schedule(a)$ is on the same tuple as $a$.
Intuitively, $\sstart$ indicates the start of the schedule, the order of operations in $s$ is consistent with the order of operations in every transaction $\trans[]\in\transset$, and the version function maps each read operation $a$ to the operation that wrote the version observed by $a$.
If $v_\schedule(a)$ is $\sstart$, then $a$ observes the initial version of this tuple.
{The version order $\ll_\schedule$ represents the order in which different versions of a tuple are installed in the database. For a pair of write operations on the same tuple, this version order does not necessarily coincide with $\leq_\schedule$. For example, under \mvrc the version order is based on the commit order instead.}

A schedule $\schedule$ is a \emph{single version schedule} if {$\ll_\schedule$ coincides with $\leq_\schedule$ and} every read operation always reads the last written version of the tuple. Formally, {for each pair of write operations $a$ and $b$ on the same tuple, $a \ll_\schedule b$ iff $a <_\schedule b$, and} for every read operation $a$ there is no write operation $c$ on 
the same tuple as $a$
with $v_\schedule(a) <_\schedule c  <_\schedule a$. 
A single version schedule over a set of transactions $\transset$ is \emph{single version 
serial
}  if its transactions are not interleaved with operations from other transactions. That is, for every $a,b,c \in {O_\schedule}$ with $a <_{\schedule}
b<_{\schedule} c$ and $a,c \in \trans[]$ implies $b \in \trans[]$ for every
$\trans[] \in \transset$. 

\ifthenelse{\boolean{fullversion}}{
{The absence of aborts in our definition of schedule is consistent with the common assumption~\cite{DBLP:conf/pods/Fekete05,DBLP:conf/concur/0002G16} that an underlying recovery mechanism will rollback aborted transactions. We only consider isolation levels that only read committed versions. Therefore there will never be cascading aborts.}
}{}

\smallskip
\noindent {\bf Conflict Serializability.}
{Let $a_j$ and $b_i$ be two operations on the same tuple from different transactions $\trans[j]$ and $\trans[i]$ in a {set of transactions $\transset$}. We then say that {$a_j$ is \emph{conflicting} with $b_i$} if:
\begin{itemize}
	\item \emph{(ww-conflict)} $\WriteSet{a_j} \cap \WriteSet{b_i} \neq \emptyset$; or,
	\item \emph{(wr-conflict)} $\WriteSet{a_j} \cap \ReadSet{b_i} \neq \emptyset$; or, 
    \item \emph{(rw-conflict)} $\ReadSet{a_j} \cap \WriteSet{b_i} \neq \emptyset$.
\end{itemize}}
{In this case, we also say that $a_j$ and $b_i$ are conflicting operations.}
Furthermore, commit operations and the special operation $\sstart$ never conflict with any other operation.
When $a_j$ and $b_i$ are conflicting operations in $\transset$, we say that $a_j$ \emph{depends on} $b_i$ in a schedule $\schedule$ over $\transset$, denoted $b_i \rightarrow_\schedule a_j$ if:\footnote{Throughout the paper, we adopt the following convention:  a $b$ operation can be understood as a `before' while an $a$ can be interpreted as an `after'.}
\begin{itemize}
    \item \emph{(ww-dependency)} {{$b_i$ is ww-conflicting with $a_j$} and $b_i \ll_{\schedule} a_j$}; or, 
    \item \emph{(wr-dependency)} {{$b_i$ is wr-conflicting with $a_j$} and $b_i = v_\schedule(a_j)$ or $b_i \ll_{\schedule} v_\schedule(a_j)$}; or, 
    \item \emph{(rw-antidependency)} {{$b_i$ is rw-conflicting with $a_j$} and $v_\schedule(b_i) \ll_{\schedule} a_j$.}
\end{itemize}

Intuitively, a ww-dependency from $b_i$ to $a_j$ implies that $a_j$ writes a version of a tuple {that is installed} after the version written by $b_i$.
A wr-dependency from $b_i$ to $a_j$ implies that $b_i$ either writes the version observed by $a_j$, or it writes a version that is {installed} before the version observed by $a_j$.
A rw-antidependency from $b_i$ to $a_j$ implies that $b_i$ observes a version {installed} before the version written by $a_j$.

Two schedules $\schedule$ and $\schedule'$ are \emph{conflict equivalent} if they are over the same set $\transset$ of transactions and for every pair of conflicting operations $a_j$ and $b_i$, $b_i \rightarrow_\schedule a_j$ iff $b_i \rightarrow_{\schedule'} a_j$.

\begin{definition}
    A schedule $\schedule$ is \emph{conflict serializable} if it is conflict equivalent to a single version serial schedule.
\end{definition}

A \emph{conflict graph} $\cg{\schedule}$ for schedule $\schedule$ over a set of transactions $\transset$ is the graph whose nodes are the transactions in $\transset$ and where there is an edge from $T_i$ to $T_j$ if $T_i$ has an operation $b_i$ that conflicts with an operation $a_j$ in $T_j$ and $b_i \rightarrow_\schedule a_j$.
The following is immediate from \cite{DBLP:books/cs/Papadimitriou86}:
\begin{theorem}\label{theo:not-conflict-serializable}
    A schedule $\schedule$ is conflict serializable iff the conflict graph for
    $\schedule$ is acyclic.
\end{theorem}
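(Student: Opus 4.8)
The plan is to adapt the classical Papadimitriou-style argument to the multiversion bookkeeping of our definitions; the only genuine work is checking that each of the three dependency types (ww, wr, rw) behaves as expected under single-version serial execution.

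For the ``only if'' direction I would assume $\schedule$ is conflict equivalent to a single-version serial schedule $\schedule'$ and observe that conflict equivalence preserves the dependency relation on conflicting pairs, so $\cg{\schedule}$ and $\cg{\schedule'}$ have the same edge set; it then suffices to show $\cg{\schedule'}$ is acyclic. Writing $T_{\pi(1)}, \dots, T_{\pi(n)}$ for the order in which the transactions appear in $\schedule'$, I would show every edge of $\cg{\schedule'}$ goes from an earlier to a later $T_{\pi(k)}$, so that $\pi$ is a topological order and the graph is acyclic. Concretely, for a conflicting pair $b_i \in T_i$, $a_j \in T_j$ with $b_i \rightarrow_{\schedule'} a_j$, I would use that in a single-version schedule $\ll_{\schedule'}$ coincides with $<_{\schedule'}$ and every read observes the last preceding write on its tuple, and check that each of the ww-, wr-, and rw-type dependencies forces the relevant operations of $T_i$ and $T_j$ to be ordered in $\schedule'$ so that $T_i$ precedes $T_j$. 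The rw case is the only one needing a small contradiction argument: if $a_j$ preceded $b_i$ in $\schedule'$, then $a_j$ would be a write on $b_i$'s tuple before $b_i$, so $a_j \leq_{\schedule'} v_{\schedule'}(b_i)$, contradicting $v_{\schedule'}(b_i) \ll_{\schedule'} a_j$.

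For the ``if'' direction I would assume $\cg{\schedule}$ is acyclic, fix a topological order $T_{\pi(1)}, \dots, T_{\pi(n)}$, and let $\schedule'$ be the single-version serial schedule over $\transset$ that runs the transactions in this order; its version order and version function are then forced. I would then show $\schedule$ and $\schedule'$ are conflict equivalent, noting first that the set of conflicting pairs is the same for both since it depends only on the operations, not on the schedule. The key preliminary observation is that every conflicting pair $p \in T_i$, $q \in T_j$ induces a dependency in $\schedule$ in at least one direction: if both are writes, $\ll_\schedule$ orders them; if one writes and the other reads, the reader either reads a version not strictly before the writer's (a wr-dependency) or reads one strictly before it (an rw-antidependency, using $\sstart \ll_\schedule p$ for the $\sstart$ subcase). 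Hence for any two transactions sharing a conflicting pair, acyclicity of $\cg{\schedule}$ pins down a single consistent direction, which must agree with $\pi$; say $T_i$ precedes $T_j$. It then remains to verify, case by case, that for such a pair the dependency in $\schedule$ (necessarily from $p$ to $q$) is matched by a dependency from $p$ to $q$ in $\schedule'$ — which follows because $T_i$ runs and commits entirely before $T_j$ in $\schedule'$ — and that neither schedule has the reverse dependency (from $q$ to $p$): true in $\schedule'$ because it is serial, and in $\schedule$ because $\cg{\schedule}$ has no edge $T_j \to T_i$. This gives $b_i \rightarrow_\schedule a_j \iff b_i \rightarrow_{\schedule'} a_j$ for every conflicting pair, i.e.\ conflict serializability.

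I expect the main obstacle to be the mismatch between the transaction-level conflict graph and the operation-level definition of conflict equivalence: one must be careful to argue both that every conflicting pair yields a dependency in some direction (so that acyclicity of $\cg{\schedule}$ really determines a consistent order between any two transactions) and that the canonical single-version serial schedule reproduces that order operation by operation. The friction relative to the textbook statement comes entirely from the multiversion setting — the distinction between $\ll_\schedule$ and $\leq_\schedule$ (which coincide only for single-version schedules) and the bookkeeping around the initial operation $\sstart$.
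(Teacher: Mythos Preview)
Your proof is correct and considerably more detailed than what the paper offers. The paper does not prove this theorem at all: it simply states that the result ``is immediate from \cite{DBLP:books/cs/Papadimitriou86}'' and moves on. You have actually carried out the adaptation of the classical argument to the paper's multiversion framework (with $\ll_\schedule$, $v_\schedule$, attribute-level read/write sets, and the $\myUP$-operations), which the paper leaves implicit.

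One small point worth making explicit in your ``if'' direction: when you argue that acyclicity forces a single consistent direction between any two transactions sharing a conflicting pair, you should note that a single conflicting pair $(p,q)$ can in principle carry dependencies in \emph{both} directions simultaneously in a multiversion schedule (e.g.\ two $\myUP$-operations where $p \ll_\schedule q$ gives a ww-dependency $p\to q$, yet $v_\schedule(q) \ll_\schedule p$ gives an rw-antidependency $q\to p$). Your argument still goes through because such a bidirectional pair would by itself create a 2-cycle in $\cg{\schedule}$, contradicting acyclicity --- but it is worth saying this rather than relying on the reader to notice that ``at least one direction'' does not exclude ``both directions.'' Apart from this, your handling of the three dependency types, the $\sstart$ bookkeeping, and the coincidence of $\ll_{\schedule'}$ with $<_{\schedule'}$ in the single-version serial witness is clean and complete.
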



Our formalisation of transactions and conflict serializability is based on \cite{DBLP:conf/pods/Fekete05}, generalized to operations over attributes of tuples and extended with $\myUP$-operations that combine $\myR$- and $\myW$-operations into one atomic operation. These definitions are closely related to the formalization presented by Adya et al.~\cite{DBLP:conf/icde/AdyaLO00}, but we assume a total rather than a partial order over the operations in a schedule. 

\ifthenelse{\boolean{fullversion}}
{
We do not concern ourselves with predicate reads here, as our workload model, formalized in Section~\ref{sec:templates}, assumes that the selection of tuples is exclusively on attributes that do not get written.
(See the remarks on this restriction in Section~\ref{sec:intro}.)
Since predicate reads do not influence conflict serializability in our setting, we omit them in our notation to facilitate presentation. This assumption is in line with other work on robustness (e.g.~\cite{DBLP:conf/aiccsa/AlomariF15,DBLP:conf/pods/Fekete05,DBLP:conf/concur/0002G16}).
}{}


\smallskip
\noindent {\bf Multiversion Read Committed.}
Let $\schedule$ be a schedule for a set $\transset$ of transactions.
Then, $\schedule$ \emph{exhibits a dirty write}
     iff there are two {ww-conflicting} operations $a_j$ and $b_i$ in $\schedule$ on the same tuple $\x$
     with $a_j \in \trans[j]$, $b_i \in \trans[i]$ and $\trans[j] \neq \trans[i]$
     such that 
    \[b_i <_\schedule a_j <_\schedule \CT[i].\]
    That is, transaction $T_j$ writes to {an attribute of} a tuple that has
     been modified earlier by $T_i$, but $T_i$ has not yet issued a commit.

{For a schedule $\schedule$, the version order $\ll_\schedule$ corresponds to the commit order in $\schedule$ if for every pair of write operations $a_j \in \trans[j]$ and $b_i \in \trans[i]$, $b_i \ll_\schedule a_j$ iff $\CT[i] <_\schedule a_j$.}
We say that a schedule $\schedule$ is \emph{read-last-committed (RLC)} if {$\ll_\schedule$ corresponds to the commit order and} for every read operation $a_j$ in $\schedule$ on some tuple $\x$ the following holds:
\begin{itemize}
    \item $v_\schedule(a_j) = \sstart$ or $\CT[i] <_\schedule a_j$ with $v_\schedule(a_j) \in \trans[i]$; and
    \item there is no write\footnote{Recall that a write operation is either a $\myW$ or a $\myUP$-operation.} operation $c_k \in \trans[k]$ on $\x$ with $\CT[k] <_\schedule a_j$ and $v_\schedule(a_j) {\ll_\schedule} c_k$.
\end{itemize}
 That is, $a_j$ observes the most recent version of $\x$ {(according to the order of commits)} that is committed before $a_j$. Note in particular that a schedule cannot exhibit dirty reads, defined in the traditional way~\cite{DBLP:conf/sigmod/BerensonBGMOO95}, if it is read-last-committed.

\begin{definition} \label{def:isolationlevels}
A schedule is \emph{allowed under isolation level} read committed (RC) if
it is read-last-committed and does not exhibit dirty writes. 
\end{definition}

\smallskip
\noindent {\bf Robustness.} 
The robustness property~\cite{DBLP:conf/concur/0002G16} (also called \emph{acceptability} in~\cite{DBLP:conf/pods/Fekete05,DBLP:journals/tods/FeketeLOOS05}) guarantees serializability for all schedules of a given set of transactions for a given isolation level.

\begin{definition}[Robustness]
\label{def:robustness}
    A set\/ $\transset$ of transactions is \emph{robust} against RC
    if every schedule for\/ $\transset$ that is allowed under RC is
    conflict serializable.
\end{definition}

It is beneficial to model operations on the granularity of the attributes that are read or written.

\begin{example} 
Consider transactions $\trans[1]:\, \R[1]{\x\{a, b, c\}} \,  \W[1]{\y\{a\}} \, \CT[1]$ and $\trans[2]:\, \R[2]{\y\{b\}} \, \W[2]{\x\{a,b,d\}} \, \CT[2]$.
Here, for example, $\R[1]{\x\{a, b, c\}}$ is shorthand for operation
$\R[1]{\x}$ with read set $\{ a,b,c \}$.
The two operations on $\y$ are in conflict if the concurrency control system of the DBMS works with tuple-level objects, but are not conflicting on the level of attributes.
The workload is not robust on the tuple-level,
{as witnessed by the following schedule that is not (tuple-)conflict equivalent to a serial schedule $\schedule:\, \R[1]{\x\{a, b, c\}} \, \R[2]{\y\{b\}} \, \W[2]{\x\{a,b,d\}} \, \CT[2]\allowbreak \, \W[1]{\y\{a\}} \, \CT[1].$
However,} these two transactions are robust against \mvrc at attribute-level granularity.\footnote{This is under the reasonable assumption that a database system can read and/or update all attributes of a tuple in one atomic step.}
The order of the two operations on tuple $\x$ determines the order of the transactions in a conflict equivalent single version serial schedule.
{For example, the schedule $\schedule$ is conflict equivalent to the serial schedule $\trans[1] \cdot \trans[2]$.}
{So, modeling conflicts on the level of attributes allows to identify more workloads as robust.}
\hfill $\Box$
\end{example}


\section{Robustness for Transactions}
\label{sec:robustness-transactions}

Before introducing our formalisation for \ptranss{} in the next section, we start by studying the robustness problem for 
transactions. The results of the present section serve as a building block for our robustness algorithm for \ptranss{}. 

A naive way to decide the robustness property for a set of transactions is to iterate over all possible schedules allowed under \MVRC 
and verify that none violates conflict serializability. 
We show in the present section that only schedules with a very particular structure have to be considered which form the basis of a tractable decision procedure. We call these schedules \emph{multiversion split schedules}.

\begin{figure}

\begin{tikzpicture}
    [
        t0/.style={color=black,fill=black!10},
        t1/.style={color=red,fill=red!30},
        t2/.style={color=blue,fill=blue!30},
        t3/.style={color=orange,fill=orange!30},
        t4/.style={color=pink,fill=pink!30}
    ]

    \draw[t0](.5,1) rectangle (2.3,.7);         \draw[t0](5.8,1) rectangle (7.5,.7);
    \draw[t0](2.4,.6) rectangle (3.5,.3);
    \draw[t0](3.6,.2) rectangle (4.7,-.1);
    \draw[t0](4.8,-.2) rectangle (5.7,-.5);
    \node[anchor=west] at (0,.8) {$T_1$};
    \node[anchor=west] at (0,.4) {$T_2$};
    \node[anchor=west] at (0,.0) {$T_3$};
    \node[anchor=west] at (0,-.4) {$T_4$};

    \node(Aa) at (2.1,.85) {\footnotesize{$b_1$}};          \node(Ab) at (6.75,.85) {\footnotesize{$a_1$}};
    \node(Ba) at (3,.45) {\footnotesize{$b_2$}};
    \node(Bb) at (3.3,.45) {\footnotesize{$a_2$}};
    \node(Ca) at (4.15,.05) {\footnotesize{$a_3$}};
    \node(Cb) at (4.45,.05) {\footnotesize{$b_3$}};
    \node(Da) at (5.2,-.35) {\footnotesize{$a_4$}};
    \node(Db) at (5.5,-.35) {\footnotesize{$b_4$}};
   
   \path[->](Aa) edge[bend left] node{} (Bb);
   \path[->](Ba) edge[bend right] node{} (Ca);
   \path[->](Cb) edge[bend left] node{} (Da);
   \path[->](Db) edge[bend right] node{} (Ab);
   
    \draw[->](0,-.7) -- (7.5,-.7);
\end{tikzpicture}

\removespacetocaption

\caption{\label{fig:mvschedule} Multiversion split schedule for four transactions.}
\end{figure}
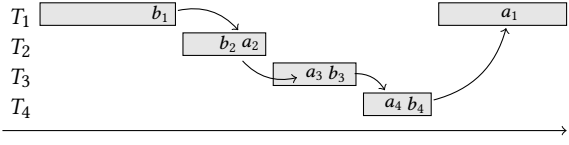

In the next definition, we represent conflicting operations from transactions in a set  $\transset$ 
as quadruples $(T_i, b_i, a_j, T_j)$ with $b_i$ and $a_j$ conflicting operations, and $T_i$ and $T_j$ their respective transactions in $\transset$. We call these quadruples \emph{conflict quadruples} for $\transset$. 
Further, for an operation $b\in \trans[]$, we denote by $\prefix{\trans[]}{b}$ the restriction of $\trans[]$ to all operations that are before or equal to $b$ according to $\leq_{\trans[]}$. Similarly, we denote by $\postfix{\trans[]}{b}$ the restriction of $\trans[]$ to all operations that are strictly after $b$ according to $\leq_{\trans[]}$. Throughout the paper, we interchangeably consider transactions both as linear orders as well as sequences.
Therefore, $\trans[]$ is then equal to the sequence $\prefix{\trans[]}{b}$ followed by $\postfix{\trans[]}{b}$ which we denote by $\prefix{\trans[]}{b}\cdot \postfix{\trans[]}{b}$ for every $b\in T$.

\begin{definition}[Multiversion split schedule]\label{def:mvsplitschedule}
Let $\transset$ be a set of transactions and $C = (T_1, b_1, a_2, T_2), (T_2, b_2, a_3, T_3), \ldots, (T_m,\allowbreak b_m,\allowbreak a_1,\allowbreak T_1)$ a sequence of  conflict quadruples for $\transset$ s.t.\ each transaction in $\transset$ occurs in at most two quadruples. A \emph{multiversion split schedule} for $\transset$ based on $C$ is a multiversion 
schedule that has the form
    $$ \prefix{\trans[1]}{b_1}\cdot \trans[2]\cdot \ldots \cdot \trans[m] \cdot \postfix{\trans[1]}{b_1}\cdot \trans[m+1] \cdot \ldots \cdot \trans[n],$$
    where
    {\begin{enumerate}
        \item \label{c:1} there is no write operation in $\prefix{\trans[1]}{b_1}$ ww-conflicting with a write operation in any of the transactions $T_2, \ldots, T_m$; 
        \item \label{c:2} $b_1 <_{\trans[1]} a_1$ or $b_{m}$ is rw-conflicting with $a_{1}$; and,
        \item \label{c:3} $b_{1}$ is rw-conflicting with $a_{2}$.
    \end{enumerate}}
    Furthermore, $\trans[m+1],\ldots,\trans[n]$ are the remaining transactions in $\transset$ (those not mentioned in $C$) in an arbitrary order.

\end{definition}

Figure~\ref{fig:mvschedule} depicts a schematic multiversion split schedule. The name stems from the fact that the schedule is obtained by splitting one transaction in two ($T_1$ at operation $b_1$ in Figure~\ref{fig:mvschedule}) and placing all other transactions in $C$ in between. The figure does not display the trailing transactions $\trans[m+1], \trans[m+2], \ldots$ and assumes $b_1<_{T_1} a_1$. 
{Intuitively, Condition~(\ref{c:1}) guarantees that $\schedule$ is allowed under \mvrc, while Condition~(\ref{c:2}) and (\ref{c:3}) ensure that $C$ corresponds to a cycle in $\cg{\schedule}$.}

The following theorem characterizes non-robustness in terms of the existence of a multiversion split schedule. The proof shows that for any 
counterexample schedule allowed under RC, a counterexample schedule  can be constructed that is a multiversion split schedule, and that, conversely, any multiversion split schedule $\schedule$ gives rise to a cycle in the conflict-graph $\cg{\schedule}$.
\begin{theorem} \label{theo:characterization:split-shedules}
    For a set of transactions $\transset$, this is equivalent:
    \begin{enumerate}
        \item \label{theo:char:split1} $\transset$ is not robust against \MVRC; 

        \item \label{theo:char:split3} there is a multiversion split schedule $\schedule$ for $\transset$ {based on some $C$}. 
    \end{enumerate}
\end{theorem}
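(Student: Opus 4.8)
The plan is to prove the two implications separately. The easy direction is $(\ref{theo:char:split3}) \Rightarrow (\ref{theo:char:split1})$: given a multiversion split schedule $\schedule$ based on a cyclic sequence $C$ of conflict quadruples, I must show $\schedule$ is allowed under \MVRC{} but not conflict serializable. That $\schedule$ is read-last-committed should follow directly from how the version function and version order are chosen in the definition (each read sees the most recently committed version; since transactions $T_2,\dots,T_m$ run serially in full before $\postfix{\trans[1]}{b_1}$, and the trailing transactions afterward, this is essentially forced). Absence of dirty writes is where Condition~(\ref{c:1}) is used: the only place two uncommitted writes to the same tuple could interleave is between $\prefix{\trans[1]}{b_1}$ and $T_2,\dots,T_m$, and (\ref{c:1}) rules out ww-conflicts there. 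For non-serializability, I would walk around $C$ and check that each quadruple $(T_i,b_i,a_{i+1},T_{i+1})$ yields an edge $T_i \to_\schedule T_{i+1}$ in $\cg{\schedule}$: for $i=2,\dots,m-1$ this is immediate since $b_i$ precedes $a_{i+1}$ and both transactions are fully serialized; the delicate endpoints are the edge into $T_2$ (handled by Condition~(\ref{c:3}): $b_1$ being rw-conflicting with $a_2$, and $\postfix{\trans[1]}{b_1}$ coming late, forces a rw-antidependency $T_1 \to T_2$) and the edge from $T_m$ back to $T_1$ (handled by Condition~(\ref{c:2}): either $b_1 <_{T_1} a_1$ so $a_1$ lies in the postfix and we again get a dependency, or $b_m$ is rw-conflicting with $a_1$). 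This closes a cycle, so by Theorem~\ref{theo:not-conflict-serializable} the schedule is not conflict serializable.

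The hard direction is $(\ref{theo:char:split1}) \Rightarrow (\ref{theo:char:split3})$. Here I start from an arbitrary RC-allowed schedule $\schedule'$ that is not conflict serializable; by Theorem~\ref{theo:not-conflict-serializable}, $\cg{\schedule'}$ contains a cycle, and I may take a \emph{shortest} such cycle $T_1 \to T_2 \to \cdots \to T_m \to T_1$, which guarantees each transaction appears at most twice when we list the witnessing conflict quadruples, and in fact each transaction node appears once in the cycle. I then want to surgically transform $\schedule'$ into a multiversion split schedule based on the corresponding sequence $C$ of conflict quadruples. The construction: pick the operation $b_1$ in $T_1$ witnessing the conflict $T_1 \to T_2$ (taking $b_1$ to be, say, the relevant conflicting operation, adjusted so that Condition~(\ref{c:3}) holds — one argues the cycle can be rotated/the witnessing operation chosen so the first edge is a rw-antidependency, using that along any cycle in an RC schedule's conflict graph at least one edge must be an rw-antidependency, which is the \MVRC-specific structural fact analogous to the "counterflow edge" of \cite{DBLP:conf/aiccsa/AlomariF15}); then reassemble the operations as $\prefix{\trans[1]}{b_1}\cdot \trans[2]\cdots \trans[m]\cdot \postfix{\trans[1]}{b_1}\cdot \trans[m+1]\cdots \trans[n]$ with an appropriate version function and version order.

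The main obstacle — and where I'd spend the most care — is verifying that this reassembled schedule is still \emph{allowed under RC} and still \emph{retains the cycle}. Establishing Condition~(\ref{c:1}) requires showing that $\prefix{\trans[1]}{b_1}$ has no write ww-conflicting with a write in $T_2,\dots,T_m$; this needs an argument that if such a conflict existed, either it would have created a dirty write in the original schedule (contradiction) or it would give a shorter cycle or let us reroute, contradicting minimality — this kind of "no bad chord" argument driven by cycle-minimality plus the dirty-write-freeness of $\schedule'$ is the technical heart. One must also re-choose the version function so that reads in the moved block $\postfix{\trans[1]}{b_1}$ and in the trailing transactions still read last-committed versions after the reordering, and check the dependencies witnessing the cycle edges survive the new version order (defined to match the new commit order). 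I would structure this as a sequence of lemmas: (a) minimal cycles have the at-most-twice property and contain an rw-antidependency edge; (b) the reordering preserves RC-allowedness (dirty-write-freeness via (\ref{c:1}), RLC via choice of $v_\schedule$); (c) the reordering preserves a cycle via Conditions~(\ref{c:2}),(\ref{c:3}). Routine bookkeeping about $\leq_\schedule$ consistency with each $\leq_{T_i}$ and with $\sstart$ is straightforward and I would only sketch it.
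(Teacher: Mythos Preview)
Your $(\ref{theo:char:split3}) \Rightarrow (\ref{theo:char:split1})$ direction is fine and matches the paper. For $(\ref{theo:char:split1}) \Rightarrow (\ref{theo:char:split3})$ you have the right skeleton (take a minimal cycle, split $T_1$ at $b_1$, verify the three conditions) but the key organizing device is missing: the paper does \emph{not} rotate the cycle so that the first edge is an rw-antidependency; it rotates so that $T_2$ is the transaction that \emph{commits first} in $\schedule'$ among all transactions on the cycle. This single choice drives all three conditions. Condition~(\ref{c:3}) then falls out (a ww- or wr-dependency $b_1 \to a_2$ would force $\CT[1] <_{\schedule'} a_2 <_{\schedule'} \CT[2]$, contradicting that $T_2$ commits first). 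Condition~(\ref{c:2}) falls out similarly (if $a_1 \leq_{T_1} b_1$ and $b_m \to a_1$ is not rw, it must be wr by Condition~(\ref{c:1}), forcing $\CT[m] <_{\schedule'} a_1 \leq_{\schedule'} b_1 <_{\schedule'} \CT[2]$, again contradicting first commit). And Condition~(\ref{c:1}): a ww-conflict between $c_1 \in \prefix{T_1}{b_1}$ and $c_i \in T_i$ forces either $\CT[i] <_{\schedule'} c_1 <_{\schedule'} \CT[2]$ (contradiction) or $\CT[1] <_{\schedule'} c_i$, hence an edge $T_1 \to T_i$, hence $i=2$ by minimality, hence $\CT[1] <_{\schedule'} \CT[2]$ (contradiction).

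Your weaker rotation (``pick any rw edge as the first one'') does not suffice, and your sketched minimality argument for Condition~(\ref{c:1}) has a real gap: the ww-conflict can produce a chord $T_1 \to T_2$ or $T_m \to T_1$ that \emph{coincides} with an existing cycle edge, so minimality yields no contradiction. You also do not address Condition~(\ref{c:2}) concretely. Finally, note that in this direction you need not re-verify that the reassembled schedule is RC-allowed and cyclic: once you establish the three conditions of Definition~\ref{def:mvsplitschedule}, the other implication (which you already proved) handles that, so the paper simply checks (\ref{c:1})--(\ref{c:3}) and stops.
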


\newcommand{\prefixfreegraph}{\text{prefix-conflict-free-graph}}

The above characterization for robustness against \mvrc leads to a polynomial time algorithm that cycles through all possible split schedules.
For this, we need to introduce the following notion. For a transaction $T_1$, an operation $b_1\in T_1$ and a set of transactions $\transset$ with $T_1\not\in\transset$, define $\prefixfreegraph(b_1,T_1, \transset)$ as the graph containing as nodes all transactions in $\transset$ that do not contain a {ww-}conflict with an operation in $\prefix{T_1}{b_1}$. Furthermore, there is an edge between two transactions $T_i$ and $T_j$  if $T_i$ has an operation that conflicts with an operation in $T_j$.

\begin{theorem}\label{theo:robustness-transactions:complexity}
Algorithm~\ref{alg:transaction_robustness} decides whether a  
set of transactions $\transset$ is robust against \MVRC in
time ${O}(\text{max}\{k.|\transset|^3, k^3.\ell\})$,
with $k$ the total number of operations in $\transset$ and $\ell$ the maximum number of operations in a transaction in $\transset$.
\end{theorem}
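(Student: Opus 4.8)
## Proof Plan for Theorem~\ref{theo:robustness-transactions:complexity}

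The plan is to analyze the running time of Algorithm~\ref{alg:transaction_robustness} by exploiting the structural characterization of non-robustness from Theorem~\ref{theo:characterization:split-shedules}. Correctness reduces to showing the algorithm faithfully searches for a multiversion split schedule; the bulk of the argument is the complexity bound. First I would recall that by Theorem~\ref{theo:characterization:split-shedules}, $\transset$ is non-robust iff there is a multiversion split schedule based on some sequence $C$ of conflict quadruples forming a cycle $T_1 \to T_2 \to \cdots \to T_m \to T_1$ with each transaction appearing in at most two quadruples. The key observation is that such a cycle is completely determined by a small amount of data: the "split" transaction $T_1$, the split operation $b_1 \in T_1$, the closing operation $a_1 \in T_1$, and the operation $a_2 \in T_2$ that $b_1$ is rw-conflicting with (Condition~(\ref{c:3})). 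Once $(T_1, b_1, a_1, a_2)$ is fixed, the remaining question is whether there is a path in $\prefixfreegraph(b_1, T_1, \transset \setminus \{T_1\})$ from $T_2$ back to some transaction $T_m$ that has an operation $b_m$ rw-conflicting with $a_1$ (or, in the alternative of Condition~(\ref{c:2}), directly $b_1 <_{T_1} a_1$ with $T_m = T_1$, handled separately).

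Next I would bound the number of iterations of the outer loops. There are $O(\ell)$ choices for the pair $(b_1, a_1)$ within a fixed $T_1$ once the rest is pinned down, but it is cleaner to count globally: the choice of the triple consisting of the conflict quadruple $(T_1, b_1, a_2, T_2)$ plus the operation $a_1 \in T_1$ ranges over at most $O(k^2 \cdot \ell)$ possibilities in the worst case, though a sharper count matching the claimed bound comes from separating the two regimes. For each fixed choice, the inner work is: (i) construct $\prefixfreegraph(b_1, T_1, \transset)$, which requires, for every transaction, checking whether it ww-conflicts with $\prefix{T_1}{b_1}$ and computing conflict edges between all pairs of transactions; and (ii) run a reachability check (BFS/DFS) from $T_2$ to the set of candidate "closing" transactions. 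Building the conflict graph costs $O(|\transset|^2)$ edge tests, each of which, done naively by scanning operation sets, contributes the operation-count factor; the reachability check costs $O(|\transset|^2)$ or $O(|\transset| + |E|)$. Multiplying the number of outer iterations by the per-iteration cost, and carefully splitting into the term dominated by the number of split-points ($k \cdot |\transset|^3$, from $k$ choices of split configuration times cubic graph work) versus the term dominated by intra-transaction conflict scanning ($k^3 \cdot \ell$, from $k^2$ pairs of conflicting operations across $C$ times $\ell$ for scanning each transaction's operations), yields the stated $O(\max\{k \cdot |\transset|^3,\, k^3 \cdot \ell\})$.

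For correctness, I would argue both directions. Soundness: if the algorithm reports non-robustness, it has exhibited a concrete triple and a path in the prefix-conflict-free graph; I would check that reassembling this into the schedule $\prefix{T_1}{b_1} \cdot T_2 \cdots T_m \cdot \postfix{T_1}{b_1} \cdot T_{m+1} \cdots T_n$ satisfies Conditions~(\ref{c:1})--(\ref{c:3}) of Definition~\ref{def:mvsplitschedule} — Condition~(\ref{c:1}) is exactly the defining property of the nodes of $\prefixfreegraph$, Condition~(\ref{c:3}) is how $a_2$ was chosen, and Condition~(\ref{c:2}) is ensured by the choice of closing transaction $T_m$ — so Theorem~\ref{theo:characterization:split-shedules} gives non-robustness. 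Completeness: if $\transset$ is non-robust, Theorem~\ref{theo:characterization:split-shedules} gives a multiversion split schedule based on some $C$; its split transaction $T_1$, operations $b_1, a_1$, and first conflict $a_2$ form one of the triples the algorithm enumerates, and the cyclic sequence of quadruples (with intermediate transactions not conflicting with $\prefix{T_1}{b_1}$) witnesses exactly the required path in $\prefixfreegraph(b_1, T_1, \transset)$, so the algorithm finds it.

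The main obstacle I anticipate is the bookkeeping in the complexity analysis: getting the two terms in the $\max$ to come out exactly right requires being precise about what is enumerated in the outer loop versus recomputed inside, and about how the cost of a single conflict test between two transactions is charged (amortized over attribute sets versus treated as $O(\ell)$ per test). A secondary subtlety is the disjunctive Condition~(\ref{c:2}): the case $b_1 <_{T_1} a_1$ (a "short" cycle that need not return via an rw-conflict into $a_1$) must be treated as a separate branch in the reachability target set, and one must make sure neither branch is missed nor double-counted in the iteration bound. Neither of these is deep, but both are exactly the places where an off-by-a-factor error would creep in.
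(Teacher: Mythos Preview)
Your approach is essentially the paper's: correctness via Theorem~\ref{theo:characterization:split-shedules} in both directions, complexity by counting the outer enumeration of $(T_1,b_1)$ against the per-iteration cost of building the graph, computing reachability, and scanning the inner triples $(a_1,a_2,b_m)$. Two points to tighten. First, your parenthetical ``with $T_m=T_1$, handled separately'' is wrong: in a multiversion split schedule $T_m$ is always a transaction distinct from $T_1$; the disjunct $b_1<_{T_1}a_1$ in Condition~(\ref{c:2}) only relaxes the \emph{type} of conflict required between $b_m$ and $a_1$ (any conflict suffices, not just rw), it does not collapse $T_m$ into $T_1$. Second, the cubic term $k\cdot|\transset|^3$ in the paper comes from computing the full reflexive-transitive closure (Floyd--Warshall) once per choice of $(T_1,b_1)$, not from repeated BFS per $T_2$; your write-up mixes $O(|\transset|^2)$ edge construction with ``cubic graph work'' without saying where the extra $|\transset|$ factor arises, so make that explicit.
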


\begin{algorithm}[t]
\SetKwInOut{Input}{Input}
\SetKwInOut{Output}{Output}
\SetAlgoLined

\Input{\ Set of transactions $\transset$}
\Output{\ \emph{True} iff $\transset$ is robust against \MVRC} 
\BlankLine

\For{$T_1 \in \transset$}{
    \For{$b_1$ a {read operation} 
                   in $T_1$}{
        $G := \prefixfreegraph(b_1,T_1,\transset\setminus\{T_1\})$\;
        $TC := \text{reflexive-transitive-closure of } G$\;
        \For{$(T_2, T_m)$ in $TC$}{
                \For{$a_1 \in T_1$, $a_2 \in T_2$, $b_m \in T_m$}{

                \If{$a_1$ conflicts with $b_m$ {\bf and} {$b_1$ is rw-conflicting with $a_2$} 
                {\bf and} ($b_1 <_{T_1} a_1$ {\bf or} {$b_m$ is rw-conflicting with $a_1$} 
                )}
                {\Return{False}}}
            
        }    
    }  
}
\Return{True}

\caption{\label{alg:transaction_robustness} Deciding transaction robustness against RC.}
\end{algorithm}


\section{\PTranss{}}
\label{sec:templates}

\Ptranss{} are transactions where operations are defined over typed variables. 
Types of variables are relation names in $\schRel$ and indicate that variables can only be instantiated by tuples from the respective type.

We fix an infinite set of variables $\variables$ that is disjoint from $\objects$. Every variable $\vx\in\variables$ has an associated relation name in $\schRel$ as type that we denote by $\type{\vx}$.

\begin{definition}\label{def:template}
A \emph{\ptrans{}} $\tau$ is a transaction over $\variables$. 
{In addition, for every operation $o$ in $\tau$ over a variable $\vx$,
$\ReadSet{o}\subseteq \Attr{\type{\vx}}$ and $\WriteSet{o}\subseteq \Attr{\type{\vx}}$}.
\end{definition}
Notice that operations in \ptranss{} are defined over typed variables whereas they are over $\objects$ in transactions. Indeed, the \ptrans{} for Balance in Figure~\ref{fig:smallbank-abstract-syntax} contains a read operation $o=\R[]{\vx: \Account\{N,C\}}$.
{As explained in Section~\ref{sec:example}, the notation $\mathtt{\vx}: \Account\{N,C\}$ is a shorthand for $\type{\mathtt{\vx}}=\Account$ and $\ReadSet{o}=\{N,C\}$.}

Recall that we denote variables by capital letters $\mathtt{\vx},\mathtt{\vy},
\mathtt{\vz}$ and tuples by small letters $\x,\y$. 
A variable assignment $\mu$ is a mapping from $\variables$ to $\objects$
such that $\mu(\vx)\in \objects_{\type{\vx}}$.
By $\tmap (\templ[])$, we denote the transaction obtained by replacing each variable $\vx$ in $\templ[]$ with $\tmap (\vx)$. {A variable assignment
for a database $\db$ maps every variable to a tuple occurring in a relation in $\db$.}

A set of transactions $\transset$ is \emph{consistent} with a set of \ptranss{} $\workload$ {and 
database $\db$}, if for every transaction $\trans[]$ in $\transset$ there is a \ptrans{} $\tau\in \workload$ and a variable assignment $\mu_{\trans[]}$ 
for $\db$ such that $\mu_{\trans[]}(\tau) = \trans[]$.

Let $\workload$ be a set of \ptranss{} and $\db$ be a database.
Then, $\workload$ is \emph{robust against RC over $\db$} if for every set of transactions $\transset$ that is consistent with $\workload$ and $\db$, it holds that $\transset$ is robust against RC.

\begin{definition}[Robustness]\label{def:template_robustness}
 A set of \ptranss{} 
$\workload$ is \emph{robust} against RC if\/ $\workload$ is robust against RC for every database $\db$.
\end{definition}

{
\begin{example}\label{ex:parameterized_transactions}
Consider the 
database $\db$ over the SmallBank schema: 
$\Account^{\bf D}=\{a_1,a_2\}$; 
$\Savings^{\bf D}=\{s_1,s_2\}$; and \allowbreak
$\Checking^{\bf D}=\{c_1,\allowbreak c_2\}$.
For simplicity, we ignore read and write sets.
 Let $\transset_1 = \{ \R[]{a_1}\allowbreak \R[]{s_1}\R[]{c_1},\allowbreak \R[]{a_1}\R[]{a_2}\UP[]{s_1}\UP[]{c_1}\UP[]{c_2}\}$.
 Then $\transset_1$ is consistent with the SmallBank \ptranss{} and $\db$ as witnessed by the \ptranss{} Balance and Amalgamate, and the variable assignments $\mu_1 = \{\vx\to a_1,\vy\to s_1,\vz\to c_1\}$ and $\mu_2 = \{\vx_1\to a_1,\vx_2\to a_2,\allowbreak \vy_1\to s_1,\vy_2\to s_2,\vz_2\to c_2\}$.
    The set $\{\Balance, \Amalgamate\}$ is not robust against \mvrc, witnessed by $\db$ and $\transset_1$. Indeed, we can construct a multiversion split schedule over $\transset_1$:
	\begin{center}
	\begin{tabular}{l @{\hspace{.5em}} l @{\hspace{.5em}} l}
		$\trans[1]: \ \R[1]{a_1}\, \R[1]{s_1}$ & & $\R[1]{c_1} \, \CT[1]$\\
		$\trans[2]:$ & $\R[2]{a_1}\R[2]{a_2}\UP[2]{s_1}\UP[2]{c_1}\UP[2]{c_2} \CT[2]$ &
	\end{tabular}
	\end{center}
	\vspace{-\baselineskip}\hfill $\Box$
\end{example}
}


\section{Robustness for \shortPTranss{}}
\label{sec:robustness-templates}

\newcommand{\context}{context\xspace}
\newcommand{\mygraph}[1]{\textsf{Context}(#1)}
\newcommand{\mygraphext}[1]{\textsf{TGE}(#1)}
\newcommand{\mycomponent}[1]{\textsf{component}(#1)}
\newcommand{\mygraphb}[2][\workload]{\mathcal{G}_{#1}(#2)}
\newcommand{\prefixfreegraphB}{\text{pt-prefix-conflict-free-graph}}
\newcommand{\myvar}[1]{\text{Var}(#1)}

Algorithm~\ref{alg:transaction_robustness} cannot be applied directly to test robustness for transaction templates as there are infinitely many sets of transactions $\transset$ consistent with a given set of transaction templates $\workload$. 
We use a different approach that resembles Algorithm~\ref{alg:transaction_robustness} but that operates directly over \ptranss{}.

Central to the proposed algorithm (Algorithm~\ref{alg:ptime:template}) is a generalization of conflicting operations: For \ptranss{} $\tau_i$ and $\tau_j$ in $\workload$, we say that an operation $o_i \in \tau_i$ is \emph{potentially conflicting} with an operation $o_j \in \tau_j$ if {$o_i$ and $o_j$ are operations over a variable of the same type, and at least one of the following holds:}
\begin{itemize}
    \item $\WriteSet{o_i}\cap \WriteSet{o_j}\neq \emptyset$ (potentially ww-conflicting);
    \item $\WriteSet{o_i}\cap \ReadSet{o_j}\neq \emptyset$ (potentially wr-conflicting); or
    \item $\ReadSet{o_i}\cap \WriteSet{o_j}\neq \emptyset$ (potentially rw-conflicting).
\end{itemize}

Intuitively, potentially conflicting operations lead to conflicting operations when the variables of these operations are mapped to the same tuple by a variable assignment. Analogously to conflicting quadruples over a set of transactions {as in Definition~\ref{def:mvsplitschedule}}, we consider \emph{potentially conflicting quadruples} $(\tau_i, o_i, p_j,  \tau_j)$ over a set of \ptranss{} $\workload$ with $\tau_i, \tau_j \in \workload$, and $o_i \in \tau_i$ an operation that is potentially conflicting with an operation $p_j \in \tau_j$.
A sequence of potentially conflicting quadruples 
$D = (\tau_1, o_1, \allowbreak p_2,\allowbreak \tau_2), \ldots, (\tau_m, o_m, p_1, \tau_1)$ over $\workload$  (in which multiple occurrences of the same \ptrans{} are allowed) {induces} a sequence of conflicting quadruples $C = (\trans[1], b_1, a_2, \trans[2]),\allowbreak \ldots,\allowbreak (\trans[m], b_m, a_1, \trans[1])$ by  applying a variable mapping $\mu_i$ to {each $\tau_i$ in $D$}.
{We call such a set of variable mappings simply
a \emph{variable mapping} for $D$, denoted $\bar{\mu}$, and write $\bar{\mu}(D) = C$.}

{A basic insight is the following: if there is a multiversion split schedule $\schedule$ for some $C$ over a set of transactions $\transset$ consistent with $\workload$ and a database $\db$
with the properties of Definition~\ref{def:mvsplitschedule}, {then there is a sequence of potentially conflicting quadruples $D$ such that $\bar{\mu}(D) = C$ for some $\bar{\mu}$. }
The approach followed in Algorithm~\ref{alg:ptime:template} is then to enumerate sequences $D$ together with mappings $\bar{\mu}$ in search of $\bar{\mu}(D)$ for which the conditions of Definition~\ref{def:mvsplitschedule} are satisfied.} 
If a counterexample exists, Algorithm~\ref{alg:ptime:template} needs at most three tuples per type to construct a counterexample.
We encode this choice for each variable by assigning the numbers $1$ to $3$ to specific operations.

To cycle through all possible sequences $D$, Algorithm~\ref{alg:ptime:template} iterates over the possible split \ptranss{} $\tau_1\in \workload$ and its possible operations $o_1,p_1\in\tau_1$, and relies on a graph referred to as 
$\prefixfreegraphB({o_1},\allowbreak {p_1}, h, \tau_1, \workload)$.
Here, $h\in\{1,2\}$ signals that the prefix and suffix of the split of $\tau_1$ use the same tuple of each type when $h=1$ and that the suffix uses the second tuple of each type when $h=2$.
The graph has as nodes the quadruples $(\tau, o, i, j)$ with $\tau \in \workload$, $o \in \tau$, $i\in\{1,2,3\}$ and $j \in \{\text{in},\text{out}\}$.
Here, $i\in\{1,2,3\}$ encodes the tuple assigned to $o$ in $\tau$.
There will be two types of edges: (1) inner edges $(\tau,o,i,\text{in}) \to (\tau, p,i',\text{out})$ that stay within the same transaction $\tau$ and indicate how the chosen tuple version changes (or stays the same) from $c_i$ for $o$ to $c_{i'}$ for $p$; and (2) outer edges $(\tau,o,i,\text{out}) \to (\tau', p,i',\text{in})$ between different occurrences of \ptranss{} encoding a potentially conflicting quadruple $(\tau,o,p,\tau')$ and maintaining information on the chosen tuple as well.

More formally, a quadruple node $(\tau, o, i, j)$ in the graph satisfies the following properties:
\begin{itemize}
\item[(a)] $i=1$ implies that there is no operation $o_1' \in \prefix{\tau_1}{o_1}$ over the same variable as ${o}_1$ in
$\tau_1$ s.t.\ $o_1'$ is potentially ww-conflicting with an operation over the same variable as $o$ in $\tau$.
\item[(b)] $i=h$ implies that there is no operation $o_1' \in \prefix{\tau_1}{o_1}$ over the same variable as ${p}_1$ in
    $\tau_1$ s.t.\ $o_1'$ is potentially ww-conflicting with an operation over the same variable as $o$ in $\tau$. 
\end{itemize}
Conditions (a) and (b) on the nodes ensure that condition (1) of
Definition~\ref{def:mvsplitschedule} is always guaranteed for all possible variable mappings that are consistent with the particular choice of tuples.  
Furthermore, two nodes $(\tau, o, i, j)$ and $(\tau', o', i', j')$ are connected by a directed edge if either
\begin{itemize}
    \item[$(\dagger)$] $\tau = \tau'$, $j = \text{in}$, $j' = \text{out}$, and
    if $o$ and $o'$ are over the same variable in $\tau$, then $i = i'$
        (i.e., remain within the same transaction and
        change the chosen tuple version only
        when $o$ and $o'$ are not over the same variable); or,
    \item[$(\ddagger)$] $j = \text{out}$, $j' = \text{in}$, $i = i'$ and $o$ and $o'$ are
        potentially conflicting (i.e., the analogy of
        $b$ and $a$ for consecutive transactions in a
        split schedule, but here defined for transaction templates).
\end{itemize}

\begin{algorithm}[t]
\SetKwInOut{Input}{Input}
\SetKwInOut{Output}{Output}
\SetAlgoLined

\Input{\ Set of \ptranss{} $\workload$}
\Output{\ \emph{True} iff $\workload$ is robust against \MVRC} 
\BlankLine

\For{$\tau_1 \in \workload$}{
    \For{$o_1$ an operation in $\tau_1$, $(p_1, i) \in \tau_1\times \{1,2\}$}{
        $G := \prefixfreegraphB(o_1, p_1, i, \tau_1, \workload)$\;
        $TC := \text{transitive-closure of } G$\; 
        \For{$\tau_2, \tau_m$ in $\workload$}{
            \For{ $p_2 \in \tau_2$, $o_m \in
            \tau_m$}{

                \If{$p_1$ is potentially conflicting with $o_m$ {\bf and} $o_1$ is potentially
                rw-conflicting with $p_2$ {\bf and} ($o_1 <_{\tau_1} p_1$ {\bf
                or} $o_m$ is potentially rw-conflicting with $p_1$) {\bf and} $\langle(\tau_2,
                p_2,1,\text{in}), (\tau_m, o_m, i, \text{out})\rangle$ in $TC$ }                                 {\Return{False}}}
            
        }    
    }  
}
\Return{True}
\caption{\label{alg:ptime:template} Deciding \shortptrans{} robustness against RC.}
\end{algorithm}

\begin{theorem}\label{theo:ptime:noconstraints}
    Algorithm~\ref{alg:ptime:template} 
    decides whether a set of \ptranss{}
     $\workload$ is robust against RC in time 
    $\mathcal{O}(k^4.\ell)$
    with $k$ the total number of operations in $\workload$ and $\ell$ the
    maximum number of operations in transactions of $\workload$.
\end{theorem}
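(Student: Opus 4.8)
The plan is to establish two things: (i) that Algorithm~\ref{alg:ptime:template} is correct — it returns \emph{False} iff $\workload$ is not robust against RC — and (ii) that it runs in time $\mathcal{O}(k^4.\ell)$. For correctness I would argue soundness and completeness separately, leaning on Theorem~\ref{theo:characterization:split-shedules}, which reduces non-robustness to the existence of a multiversion split schedule.

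\textit{Completeness (if $\workload$ is not robust, the algorithm returns \emph{False}).} By Definition~\ref{def:template_robustness} there is a database $\db$ and a set $\transset$ consistent with $\workload$ and $\db$ that is not robust; by Theorem~\ref{theo:characterization:split-shedules} there is a multiversion split schedule $\schedule$ for $\transset$ based on some sequence of conflict quadruples $C = (T_1,b_1,a_2,T_2),\ldots,(T_m,b_m,a_1,T_1)$. The first step is to lift $C$ to a sequence $D$ of potentially conflicting quadruples over $\workload$ together with a variable mapping $\bar\mu$ with $\bar\mu(D)=C$; this is immediate since each $T_i = \mu_i(\tau_i)$ for some template $\tau_i$, and conflicting operations in $\bar\mu(D)$ arise precisely from potentially conflicting operations in $D$ mapped onto common tuples. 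The key reduction is then to show that at most three tuples per relation type suffice: the prefix $\prefix{T_1}{b_1}$ interacts with the rest of the schedule only through ww-conflicts (Condition~(1) of Definition~\ref{def:mvsplitschedule}), so each template occurrence needs to "know" only whether its variables coincide with (1) the prefix's version, (2) the suffix's version, or (3) are fresh — this is what the labels $i\in\{1,2,3\}$ and the parameter $h\in\{1,2\}$ encode. One then checks that the walk through the transactions $T_2,\ldots,T_m$ in $\schedule$ induces a path in $\prefixfreegraphB(o_1,p_1,i,\tau_1,\workload)$ from $(\tau_2,p_2,1,\text{in})$ to $(\tau_m,o_m,i,\text{out})$: the inner edges $(\dagger)$ track how the chosen version stays fixed within a transaction unless the variable changes, the outer edges $(\ddagger)$ encode the successive potential conflicts, and conditions~(a),(b) on the nodes encode exactly the "no ww-conflict with the prefix" requirement. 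Finally, the \textbf{if}-test in the inner loop of Algorithm~\ref{alg:ptime:template} checks Conditions~(2) and~(3) of Definition~\ref{def:mvsplitschedule} at the template level, which are satisfied by construction; hence the algorithm returns \emph{False}.

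\textit{Soundness (if the algorithm returns \emph{False}, $\workload$ is not robust).} Here I would go in the reverse direction: from a successful \textbf{if}-test, extract the witnessing $\tau_1,o_1,p_1,i$, the $TC$-path from $(\tau_2,p_2,1,\text{in})$ to $(\tau_m,o_m,i,\text{out})$, and read off the sequence $D$ of potentially conflicting quadruples. One then constructs a concrete database $\db$ — allocating (at most) three tuples per relation type — and a variable mapping $\bar\mu$ that maps variables to tuple $1$, $2$, or $3$ according to the node labels along the path, taking care that $h=1$ merges the prefix and suffix versions while $h=2$ keeps them distinct. Applying $\bar\mu$ yields $\transset$ consistent with $\workload$ and $\db$ and a sequence $C$ of conflict quadruples, and the definitions ensure $C$ satisfies Conditions~(1)–(3): condition~(1) from node conditions (a),(b), conditions~(2),(3) from the \textbf{if}-test. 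Assembling the split schedule $\prefix{T_1}{b_1}\cdot T_2\cdots T_m\cdot\postfix{T_1}{b_1}\cdot T_{m+1}\cdots T_n$ gives a multiversion split schedule, so by Theorem~\ref{theo:characterization:split-shedules}, $\transset$ — and therefore $\workload$ — is not robust.

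\textit{Complexity.} The graph $G$ has $\mathcal{O}(k)$ nodes (one per operation, times the constant factors $3$ and $2$), so $\mathcal{O}(k^2)$ potential edges, each checkable in $\mathcal{O}(\ell)$ time (comparing read/write sets and variable identities within a transaction of size at most $\ell$); its transitive closure costs $\mathcal{O}(k^3)$ — or $\mathcal{O}(k^2\ell)$ to build plus $\mathcal{O}(k^3)$ for closure. The outer loops range over $\tau_1$ ($\mathcal{O}(|\workload|)$), over $(o_1,p_1,i)$ ($\mathcal{O}(k)$ choices effectively, since $\sum_\tau |\tau|=k$), and over $(\tau_2,\tau_m,p_2,o_m)$ ($\mathcal{O}(k^2)$), with each iteration's body dominated by building $G$ and its closure. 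A careful accounting — the $G$/$TC$ computation is done once per $(\tau_1,o_1,p_1,i)$ triple, i.e.\ $\mathcal{O}(k^2)$ times, each at cost $\mathcal{O}(k^3 + k^2\ell)$, while the innermost double loop contributes $\mathcal{O}(k^2 \cdot k^2 \cdot \ell)$ in total — yields the bound $\mathcal{O}(k^4.\ell)$.

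\textit{Main obstacle.} The routine part is the complexity bound; the delicate part is the "three tuples per type suffice" argument inside completeness, i.e.\ showing that an arbitrary multiversion split schedule over an arbitrary database can be faithfully simulated by one in which every relation holds at most three tuples, with the correspondence between the version-choice labels $\{1,2,3\}$, the parameter $h$, and the actual tuple identities kept consistent along the entire cyclic walk $D$. In particular one must verify that collapsing many distinct tuples down to three never spuriously creates a ww-conflict with the prefix (which would break Condition~(1)) nor destroys a needed rw-conflict — this is exactly what node conditions (a),(b) and the separate bookkeeping for $p_1$ versus $o_1$ are designed to guarantee, and making that precise is where the real work lies.
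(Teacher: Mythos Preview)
Your overall strategy matches the paper's: correctness via Theorem~\ref{theo:characterization:split-shedules}, lifting a split schedule to a sequence $D$ of potentially conflicting quadruples, collapsing the database to a bounded number of tuples per type via a canonical variable mapping, and encoding the search for $D$ as a reachability query in $\prefixfreegraphB$. The paper factors this through two lemmas (one establishing the canonical mapping, one linking it to the graph), but the content is the same as what you sketch.

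There is, however, a concrete error in your canonical-mapping step. You claim that \emph{three} tuples per type suffice, and in the soundness direction you propose allocating at most three. The paper needs \emph{four}: $c_1$ for variables connected to $o_1$, $c_2$ for variables connected to $p_1$ but not $o_1$, $c_3$ for the remaining variables in $\tau_2,\ldots,\tau_m$, and a separate $c_4$ for the remaining variables in $\tau_1$. Merging $c_3$ and $c_4$ can create a spurious ww-conflict between an operation in $\prefix{T_1}{b_1}$ and an operation in some $T_j$, violating Condition~(1) of Definition~\ref{def:mvsplitschedule} and invalidating the constructed counterexample. The paper gives an explicit two-template example where exactly this happens, so three tuples are provably insufficient. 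The reason the graph only carries labels $\{1,2,3\}$ is that $c_4$ applies solely to variables in $\tau_1$ that are neither the variable of $o_1$ nor of $p_1$, and such variables never appear in the graph nodes (which track only $\tau_2,\ldots,\tau_m$); the fourth tuple is still needed in the instantiation. This is precisely the ``collapsing never spuriously creates a ww-conflict'' step you flag as the main obstacle, and your proposed collapse does spuriously create one.

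A minor point on complexity: your count of $\mathcal{O}(k^2)$ for the outer $(\tau_1,o_1,p_1,i)$ iterations is too loose and, combined with the $\mathcal{O}(k^3)$ closure cost, would give $\mathcal{O}(k^5)$ rather than $\mathcal{O}(k^4\ell)$. The correct count is $\sum_{\tau}|\tau|^2 \le \ell\sum_{\tau}|\tau| = k\ell$, which is what the paper uses.
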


\ifthenelse{\boolean{fullversion}}{
    \begin{example}
    We illustrate Algorithm~\ref{alg:ptime:template} via an example run on the SmallBank benchmark.
    Take $\tau_1 = \Amalgamate$, $o_1 = \R[]{\vz}$, $p_1 = \UP[]{\vz}$ and $i = 1$. Then we can choose $\tau_2 = \tau_m = \DepositChecking$ and $p_2 = o_m = \UP[]{\vz}$ to satisfy all conditions in Algorithm~\ref{alg:ptime:template}.
    Notice in particular that there is an edge from $(\tau_2, \UP[]{\vz}, 1, \textit{in})$ to $(\tau_m, \UP[]{\vz}, 1, \textit{out})$ in $\prefixfreegraphB(o_1, p_1, 1, \tau_1, \workload)$ by $(\dagger)$. The corresponding counterexample is the multiversion split schedule based on instances of respectively \Amalgamate and \DepositChecking over the same customer, where \Amalgamate is split after $\R[]{\vz}$.
    \hfill$\Box$
    \end{example}
}{}


\section{Detecting robust sets}
\label{sec:detecting:robust:sets}
\label{sec:isolation-level:attribute-granularity}

{As every subset of a robust set of templates is robust as well, maximal robust subsets of a workload $\workload$ can be detected by running Algorithm~\ref{alg:ptime:template} first on $\workload$ itself and if necessary on smaller subsets. Even though there are exponentially many possible subsets, $\workload$ is expected to be small and robustness tests can be performed in a static and offline analysis phase.}

Algorithm~\ref{alg:ptime:template} allows for a complete characterization of robustness at attribute-level granularity. We discuss the ramifications of using these results with a DBMS whose concurrency control subsystem works at the granularity of tuples.
In this case, an \MVRC implementation isolates more strongly than actually needed to assure serializability on workloads our techniques identify as robust.\footnote{For instance, \MVRC in PostgreSQL acquires locks on the granularity of tuples rather than attributes  -- see Section~\ref{sec:postgres} for a more detailed description.}

There are two ways to employ our decision procedures in this case.
The first is to simply coarsen the workload model to the tuple-level by
setting, for each operation, the read and write sets to all the attributes of 
the tuple. In this way, our algorithms give a correct and complete answer 
at tuple-level granularity. {As discussed in Section~\ref{sec:example}, the row `Atomic updates' in Figure~\ref{fig:table:robust} indicates which sets are robust under this method for SmallBank and \tpcckv{} and, how this improves over considering only reads and writes.}

The second approach is to simply work with the attribute-level model
and accept that the DBMS is more conservative than necessary.
When our algorithm determines a workload to be robust, that workload will still be robust on systems that assure \MVRC with tuple-level database objects, for the simple reason that every conflict on the granularity of attributes implies a conflict on the granularity of tuples.  As a result, every schedule that can be created by these systems is allowed under our definition of \MVRC.
However, when our algorithm determines a workload \emph{not}\/ to be robust,
they may be too conservative: they might do so by identifying a complete set of counterexample schedules, none of which may actually be allowed under RC at the granularity of tuples.
Thus, \emph{our attribute-level algorithm technically provides only a 
{sufficient}\/ rather than a complete condition for robustness on such systems.}
The second technique nevertheless strictly dominates the first {on SmallBank and \tpcckv{} (as can be seen in the row `Attr conflicts' in Figure~\ref{fig:table:robust})}, even when the DBMS works with tuple-level objects. It detects all the robust cases of the former approach, plus potentially additional ones that can only be found by attribute-level analysis, but which still are robust on a DBMS with tuple-level concurrency control.
The latter approach leads to a more general observation with practical value: our algorithm provides a sufficient condition to guarantee serializability for every implementation that can only generate a subset of the schedules allowed by \mvrc.


\section{Experiments}
\label{sec:experiments}

We discussed the effectiveness of our approach in detecting larger robust subsets in comparison with \cite{DBLP:conf/aiccsa/AlomariF15} 
at the end of Section~\ref{sec:example}. 
  We focus here on how robustness can improve transaction throughput.

\begin{figure*}[t]
    \centering \footnotesize

\begin{minipage}[t]{0.37\textwidth-2ex}    
NewOrder:
\[
\begin{array}{l}
\R[]{\vx: \Warehouse\ListAttr{W, Inf}}\\
\UP[]{\vy: \District\ListAttr{W, D, Inf, N}\ListAttr{N}}\\
\R[]{\vz: \Customer\ListAttr{W, D, C, Inf}}\\
\W[]{\myvs: \Order\ListAttr{W, D O, C, Sta}}\\
\UP[]{\vt_1: \Stock\ListAttr{W, I, Qua}\ListAttr{Qua}}\\
\W[]{\myvv_1: \OrderLine\ListAttr{W, D, O, OL, I, Del, Qua}}\\
\UP[]{\vt_2: \Stock\ListAttr{W, I, Qua}\ListAttr{Qua}}\\
\W[]{\myvv_2: \OrderLine\ListAttr{W, D, O, OL, I, Del, Qua}}\\
\end{array}
\]
\end{minipage}
\begin{minipage}[t]{0.63\textwidth}
\begin{minipage}[t]{\textwidth/2-1ex}    
Delivery:
\[
\begin{array}{l}
\UP[]{\myvs: \Order\ListAttr{W, D, O}\ListAttr{Sta}}\\
\UP[]{\myvv_1: \OrderLine\ListAttr{W, D, O, OL, Del}\ListAttr{Del}}\\
\UP[]{\myvv_2: \OrderLine\ListAttr{W, D, O, OL, Del}\ListAttr{Del}}\\
\UP[]{\vz: \Customer\ListAttr{W, D, C, Bal}\ListAttr{Bal}}\\
\end{array}
\]
\end{minipage}
\begin{minipage}[t]{\textwidth/2-1ex}    
OrderStatus:
\[
\begin{array}{l}
\R[]{\vz: \Customer\ListAttr{W, D, C, Inf, Bal}}\\
\R[]{\myvs: \Order\ListAttr{W, D, O, C, Sta}}\\
\R[]{\myvv_1: \OrderLine\ListAttr{W, D, O, OL, I, Del, Qua}}\\
\R[]{\myvv_2: \OrderLine\ListAttr{W, D, O, OL, I, Del, Qua}}\\
\end{array}
\]
\end{minipage}

\medskip

\begin{minipage}[t]{\textwidth/2-1ex}    
Payment:
\[
\begin{array}{l}
\UP[]{\vx: \Warehouse\ListAttr{W, YTD}\ListAttr{YTD}}\\
\UP[]{\vy: \District\ListAttr{W, D, YTD}\ListAttr{YTD}}\\
\UP[]{\vz: \Customer\ListAttr{W, D, C, Bal}\ListAttr{Bal}}\\
\end{array}
\]
\end{minipage}
\begin{minipage}[t]{\textwidth/2-1ex}
StockLevel:
\[
\begin{array}{l}
\R[]{\vt: \Stock\ListAttr{W, I, Qua}}\\
\end{array}
\]
\end{minipage}
\end{minipage}

\removespacetocaption

    \caption{Abstraction for the \tpcckv{} \ptranss{}. Attribute names are abbreviated.}
    \label{fig:tpcc-abstract-syntax}
\end{figure*}

\subsection{Experimental Setup}
\label{exp:setup}

\subsubsection{PostgreSQL}
\label{sec:postgres}
We used PostgreSQL 12.4 as a database engine. PostgreSQL uses multiversion concurrency control to implement three different isolation levels: Read Committed (\mvrc), Snapshot isolation (SI), and Serializable Snapshot Isolation (SSI)~\cite{DBLP:journals/tods/FeketeLOOS05}.\footnote{In PostgreSQL 12.4, these three isolation levels are referred to as Read Committed, Repeatable Read, and Serializable, respectively.}
When reading a tuple, RC reads the last committed version before this read operation, whereas SI and SSI see the last committed version before the start of the transaction.
All isolation levels use write locks to avoid dirty writes. If a transaction $\trans[1]$ wants to update a tuple that has been changed by a concurrent transaction $\trans[2]$, transaction $\trans[1]$ will wait for $\trans[2]$ to commit or abort, thereby releasing the write lock, before proceeding. Notice that in specific cases this can lead to deadlocks, e.g. when multiple concurrent transactions try to update the same set of tuples. Under SI and SSI, $\trans[1]$ will abort if $\trans[2]$ successfully committed, according to the first-updater-wins principle.
When using SSI, PostgreSQL will furthermore monitor for possible conditions that could lead to unserializable executions, and possibly abort transactions to preserve serializability.

The database system runs on a server with two 2.3 GHz Xeon Gold 6140 CPUs with 18 cores each, 192 GB RAM, and a 200 GB SSD local disk. A separate machine is used to issue the transactional workload to the database system through a low-latency connection.
The workload is supplied via
a number of concurrently running client processes. Each client sequentially runs transactions from randomly selected \ptranss{} through this same database connection. When a transaction is aborted, the client immediately retries this transaction with the same parameters, until it eventually commits. For 60 seconds, we measure the number of transactions that are committed and the number of aborts. Each experiment is repeated 5 times. The graphs in this section show both the average values, as well as 95\% confidence intervals.

\subsubsection{SmallBank benchmark (see Section~\ref{sec:example})}
\label{sec:smallbank}
The database is populated with 18000 randomly generated accounts with corresponding checking and savings accounts -- as in earlier experiments on the SmallBank benchmark in~\cite{DBLP:conf/aiccsa/AlomariF15, Alomari:2008:CSP:1546682.1547288}.
Each client uses a uniform distribution when selecting one of the possible \shortptranss{}. To select which accounts to address, we considered two approaches. The first approach fixes a small subset of accounts, referred to as the \emph{hotspot}, and a probability for an account selected for use in a transaction to be from among the hotspot accounts, referred to as the \emph{hotspot probability}. Within the hotspot, each account has an equal probability of being selected. The second approach uses a Zipfian distribution to randomly select accounts~\cite{DBLP:conf/sigmod/GraySEBW94}.

\subsubsection{\tpcckv{} benchmark}
\label{sec:tpcc}
The second benchmark is based on the TPC-C benchmark~\cite{TPCC}. We modified the schema and \shortptranss{} to turn all predicate reads into key-based accesses.
The schema consists of six relations:
\sloppy
\begin{itemize}
    \item Warehouse(\underline{WarehouseID}, Info, YTD),
    \item District(\underline{WarehouseID}, \underline{DistrictID}, Info, YTD, NextOrderID),
    \item Customer(\underline{WarehouseID}, \underline{DistrictID}, \underline{CustID}, Info, Balance),
    \item Order(\underline{WarehouseID}, \underline{DistrictID}, \underline{OrderID}, CustID, Status),
    \item OrderLine(\underline{WarehouseID}, \underline{DistrictID}, \underline{OrderID}, \underline{OrderLineID}, ItemID, DeliveryInfo, Quantity), and
    \item Stock(\underline{WarehouseID}, \underline{ItemID}, Quantity).
\end{itemize}
\fussy

We focus on five different \ptranss{}:
%
\begin{enumerate}
    \item NewOrder($W$, $D$, $C$, $I_1$, $Q_1$, $I_2$, $Q_2$, \ldots): creates a new order for the customer identified by $(W,D,C)$. The id for this order is obtained by increasing the NextOrderID attribute of the District tuple identified by $(W,D)$ by one. Each order consists of a number of items $I_1, I_2, \ldots$ with respectively quantities $Q_1, Q_2, \ldots$. For each of these items, a new OrderLine tuple is created and the related stock quantity is decreased.
    \item Payment($W$, $D$, $C$, $A$): represents a customer identified by $(W,D,\allowbreak C)$ paying an amount $A$. This payment is reflected in the database by increasing the balance of this customer by $A$. This amount is furthermore added to the YearToDate (YTD) income of both the related warehouse and district.
    \item OrderStatus($W$, $D$, $C$, $O$): requests information about the current status of the order identified by $(W,D,O)$. This \ptrans{} collects information of the customer identified by $(W,D,C)$ who created the order, the order itself, and the different OrderLine tuples related to this order.
    \item Delivery($W$, $D$, $C$, $O$): delivers the order represented by $(W,D,O)$. The status of the order is updated, as well as the DeliveryInfo attribute of each OrderLine tuple related to this order. The total price of the order is deduced from the balance of the customer who made this order, identified by $(W,D,C)$.
    \item StockLevel($W$, $I$): returns the current stock level of item $I$ in $W$. 
\end{enumerate}

An abstraction of each \ptrans{} is given in Figure~\ref{fig:tpcc-abstract-syntax}.
The experiments adhere to the requirements of the official TPC-C benchmark~\cite{TPCC}, with a scaling factor of 25 warehouses. This means that the database is populated with 25 warehouses, where each warehouse is assigned 10 districts and 100000 different stock items. Each district has 3000 customers, and each customer initially has 10 orders. We randomly assign between 5 and 15 orderlines per order
{(Figure~\ref{fig:tpcc-abstract-syntax} shows only two orderlines per order to simplify presentation).}
Each client uses a uniform distribution when selecting one of the possible \shortptranss{}.
When generating parameters for each transaction, we remain consistent with the TPC-C benchmark. That is, we use a uniform distribution to randomly pick warehouses, districts, items within a warehouse and orders for a customer. 
Customers within a district are non-uniformly selected based on a Zipfian distribution. We consider one additional setting where warehouses are selected according to a Zipfian distribution.

\subsection{Robust workloads}
\label{sec:exp:robust_subset}

\begin{figure}[t]
    \centering
    \begin{subfigure}[t]{0.495\linewidth}
        \centering
        \includegraphics[width=\linewidth]{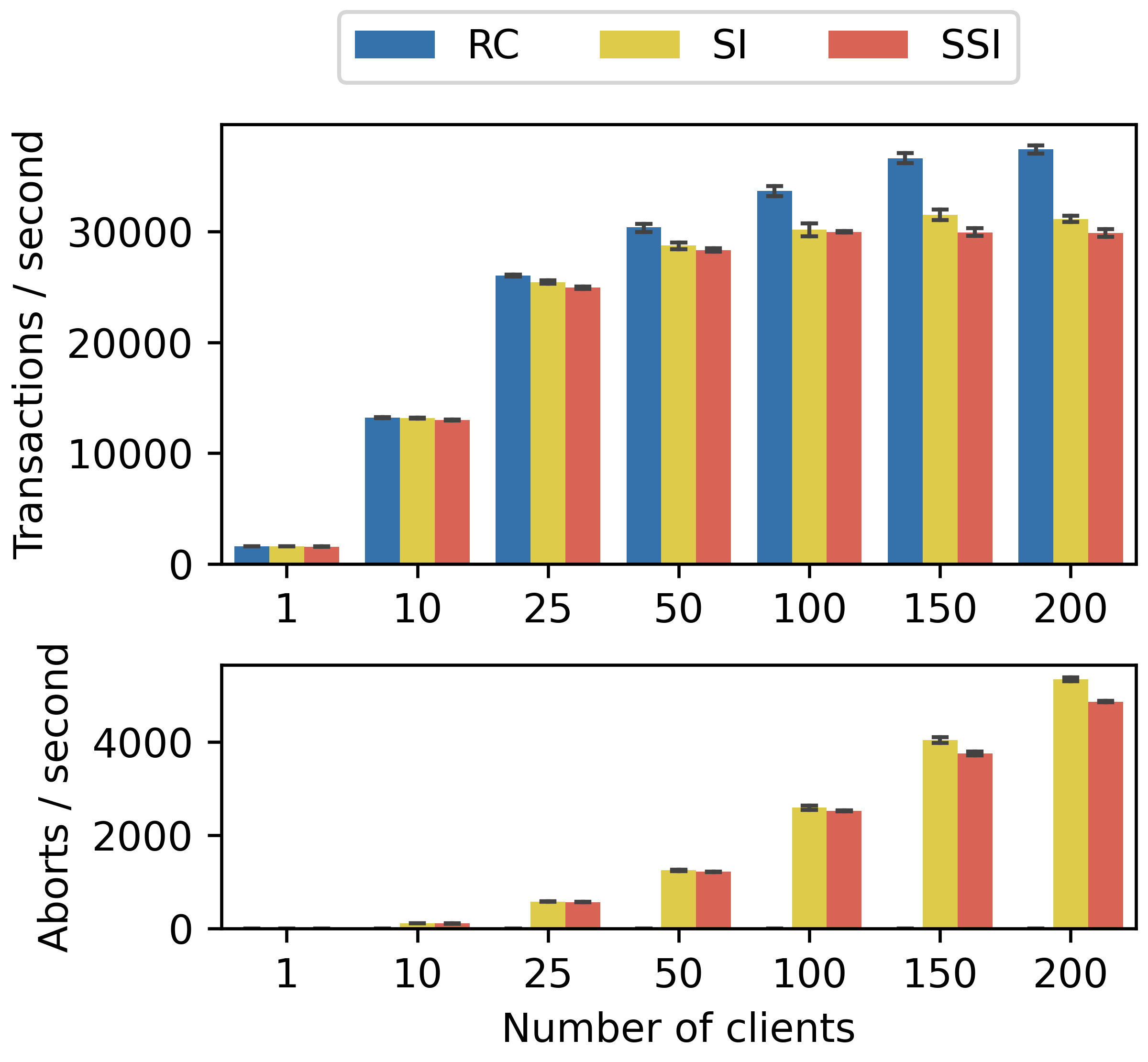}
        \caption{Robust subset.}
        \label{fig:exp1}
    \end{subfigure}
    \hfill
    \begin{subfigure}[t]{0.495\linewidth}
        \centering
        \includegraphics[width=\linewidth]{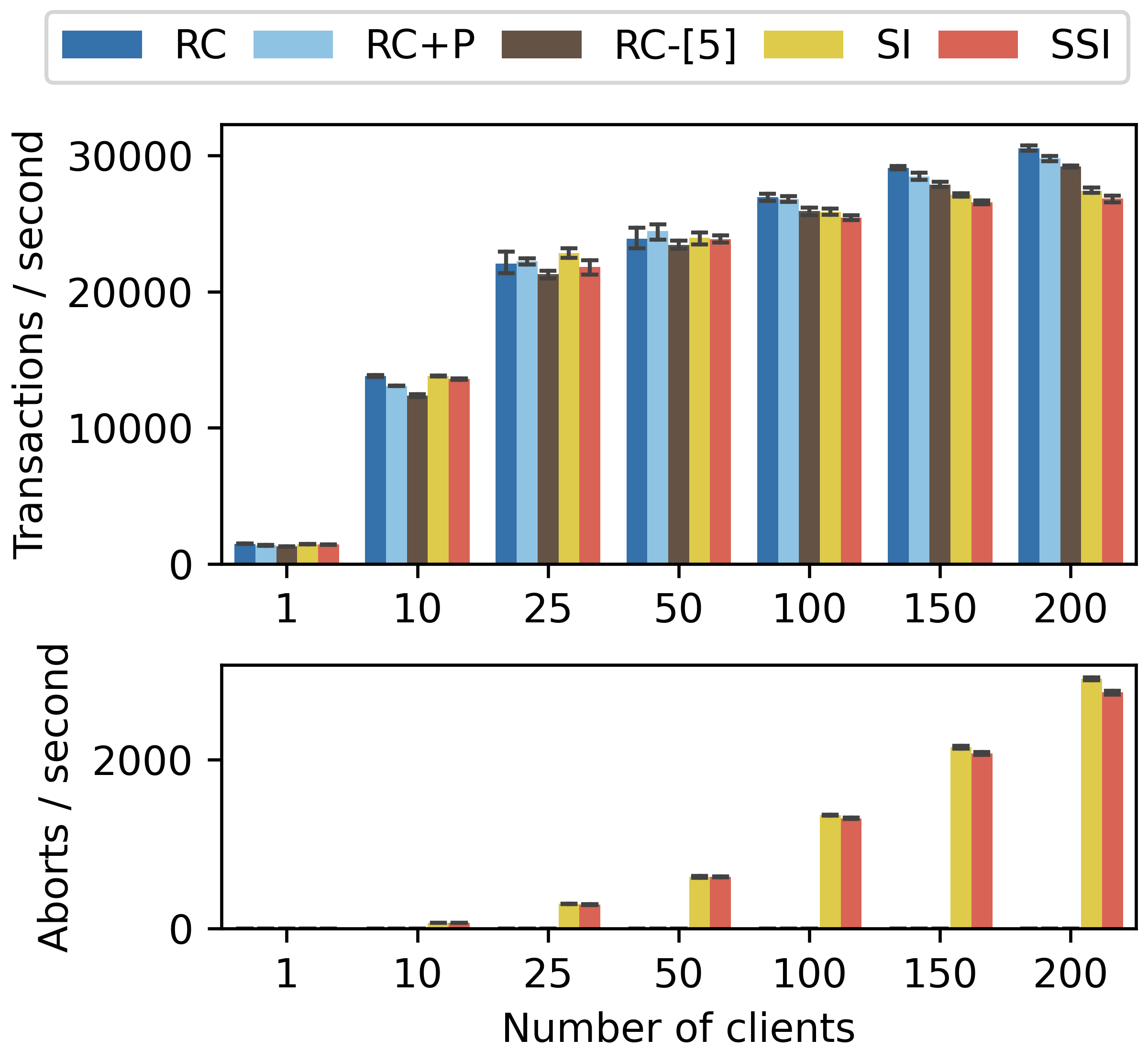}
        \caption{{Complete benchmark.}}
        \label{fig:exp8}
    \end{subfigure}
    \caption{Throughput and abort rate
per number of concurrent clients for (a subset of) SmallBank. The hotspot consists of 1000 accounts with a hotspot probability of 90\%.}
    \label{fig:exp1and8}

\end{figure}

\begin{figure*}[t]
    \centering
    \begin{subfigure}[t]{0.24\linewidth}
        \centering
        \includegraphics[width=.95\linewidth]{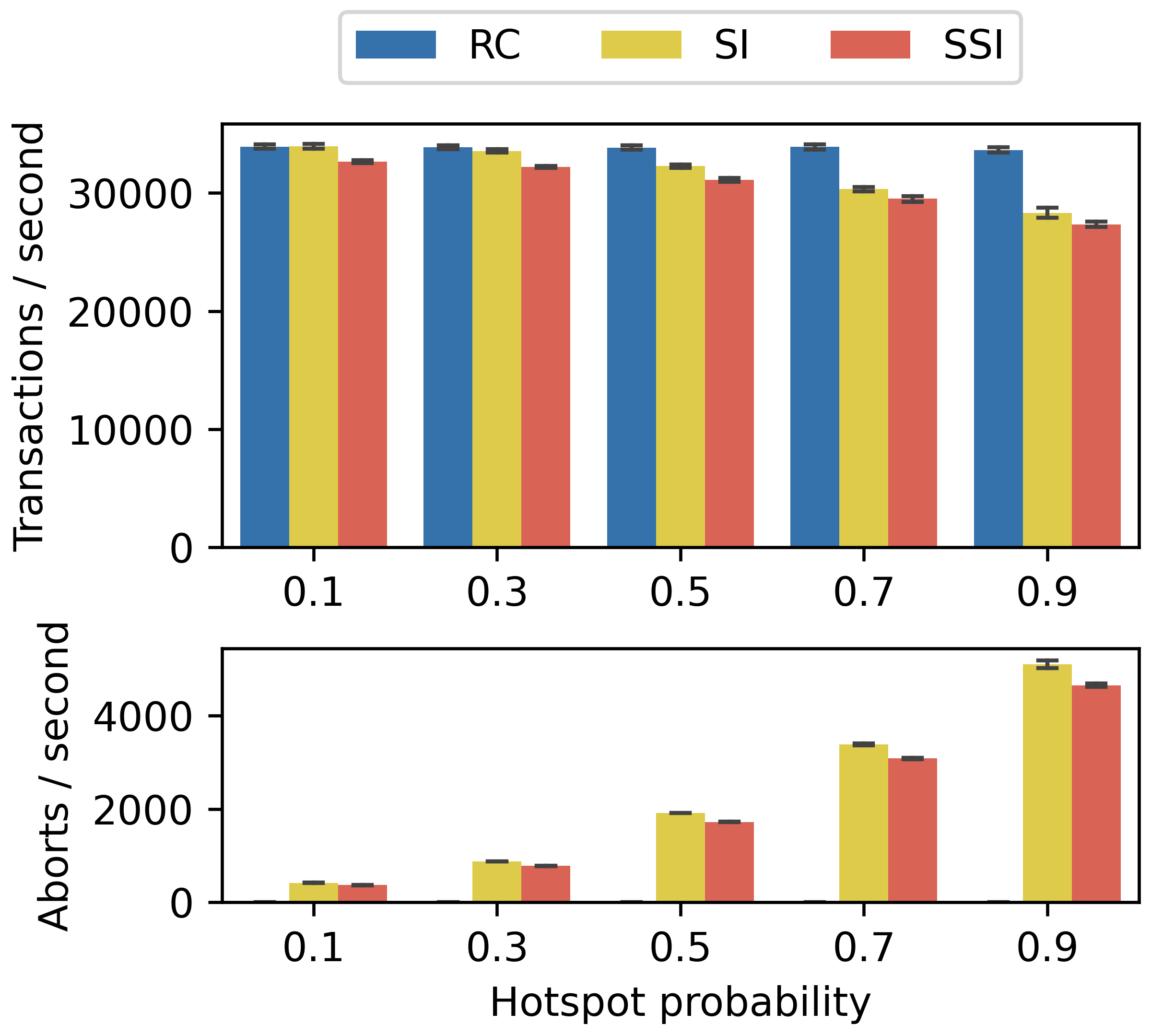}
        \caption{Hotspot of 1000 accounts}
        \label{fig:exp2_1000}
    \end{subfigure}
    \hfill
    \begin{subfigure}[t]{0.24\linewidth}
        \centering
        \includegraphics[width=.95\linewidth]{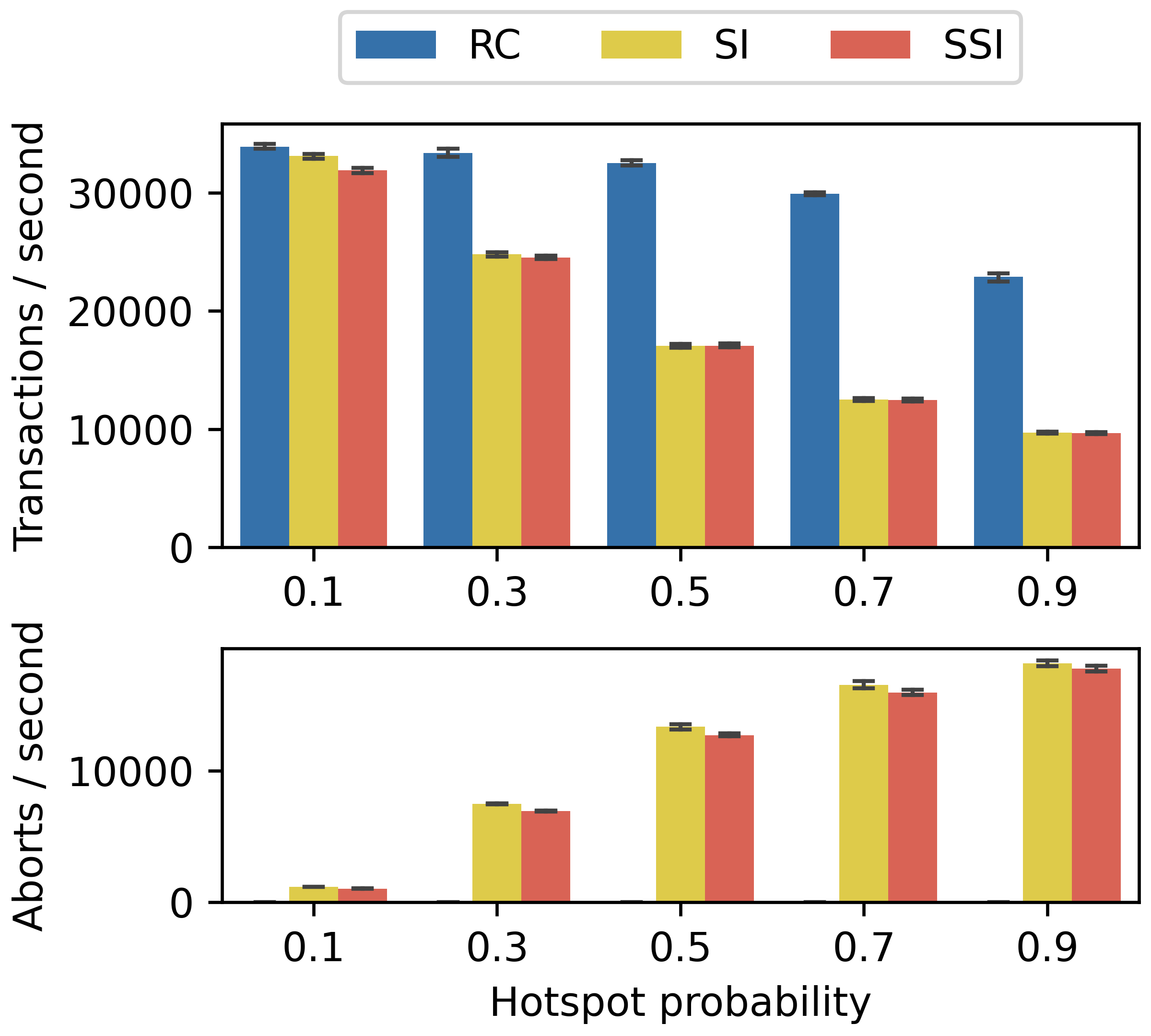}
        \caption{Hotspot of 100 accounts}
        \label{fig:exp2_100}
    \end{subfigure}
    \hfill
    \begin{subfigure}[t]{0.24\linewidth}
        \centering
        \includegraphics[width=.95\linewidth]{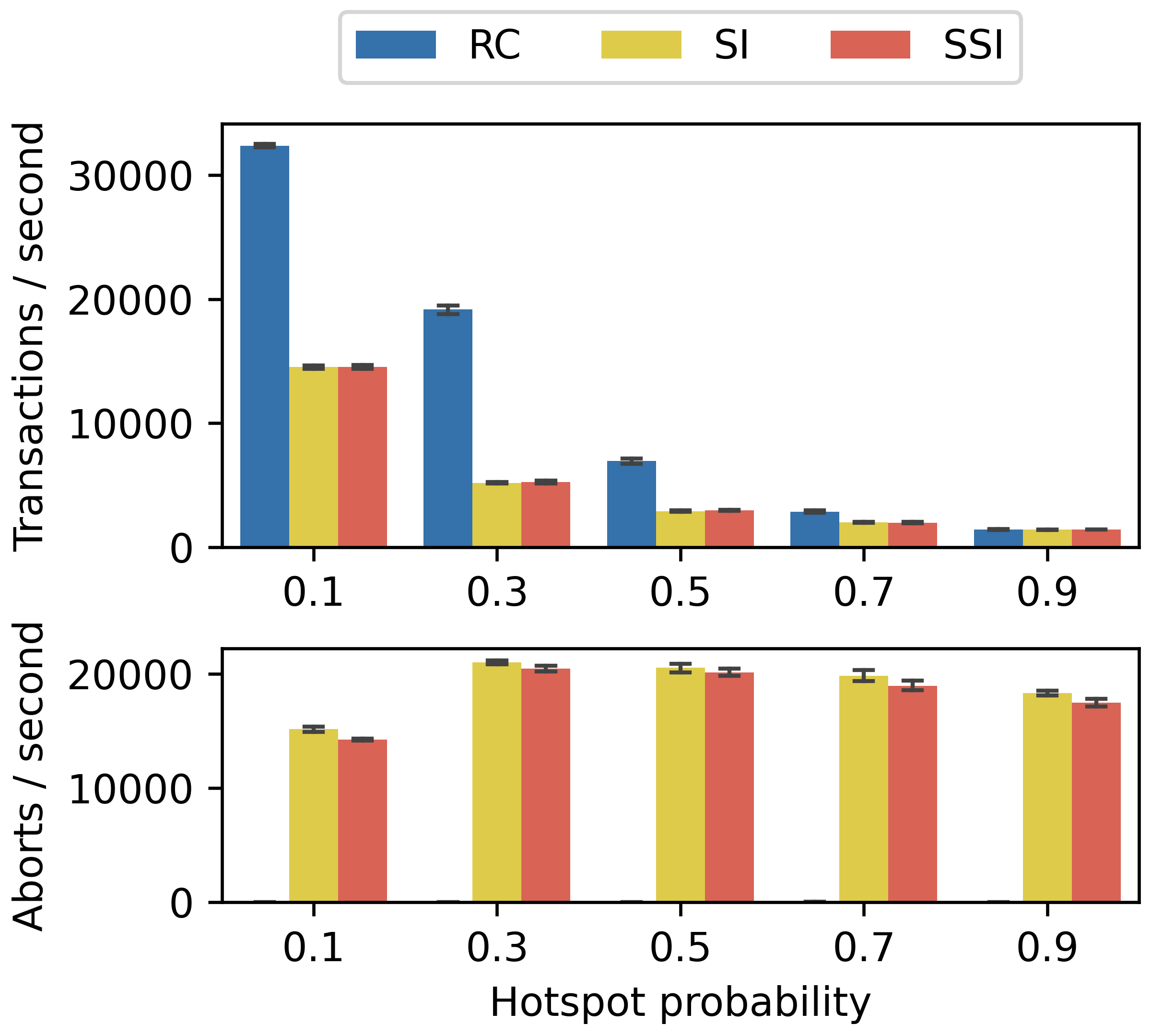}
        \caption{Hotspot of 10 accounts}
        \label{fig:exp2_10}
    \end{subfigure}
    \hfill
    \begin{subfigure}[t]{0.24\linewidth}
        \centering
        \includegraphics[width=.95\linewidth]{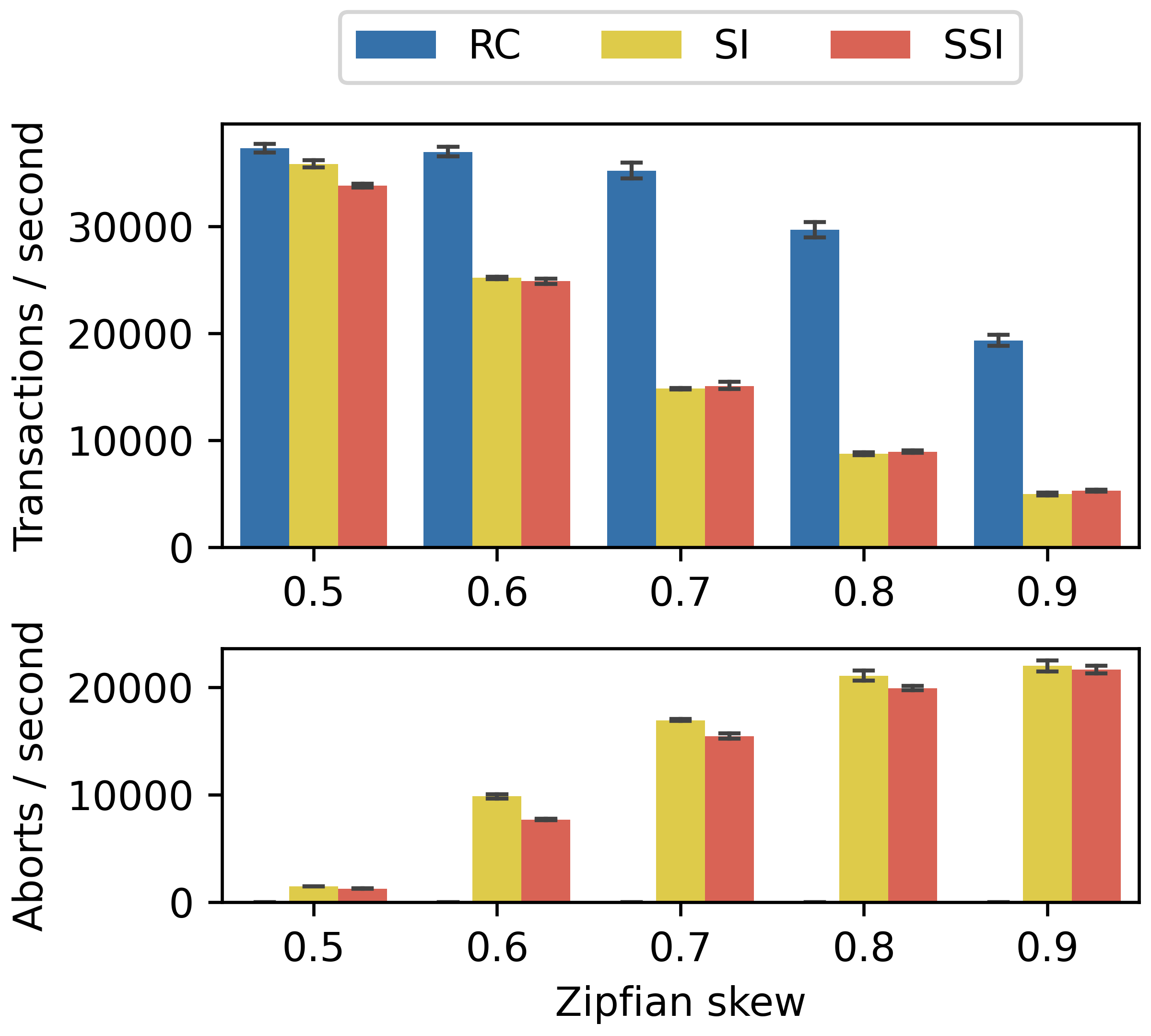}
        \caption{Zipfian distribution}
        \label{fig:exp3}
    \end{subfigure}
    \removespacetocaption
    \caption{[Robust subset of SmallBank] 
    Throughput and abort rate 
    with 200 clients 
    and different contention parameters.}
    \label{fig:exp2_all}
\end{figure*}

{

In the experiments below, we show the potential performance benefits of using a lower isolation level over a robust subset of the SmallBank benchmark.}
The first experiment explores the influence of the number of concurrent clients on both throughput and abort rate. For this experiment, we used a workload of three \ptranss{} \{DepositChecking, TransactSavings and Amalgamate\}, since this workload is the largest subset of the Smallbank benchmark that is robust against \mvrc. For this experiment, a hotspot size of 1000 accounts with
a hotspot probability of 90\% was used. The results of this experiment are shown in Figure~\ref{fig:exp1}.
When the number of clients is low, the different isolation levels result in a similar throughput. However, if the number of concurrent clients increases, \mvrc clearly outperforms both SI and SSI. This is to be expected, since the high number of concurrent clients leads to more concurrent transactions trying to update the same tuple, and consequently more aborts under SI and SSI due to the first-updater-wins principle. It should be noted that under \mvrc, aborts can still occur due to deadlocks, 
but these aborts are quite rare. In this experiment, the number of aborts under \mvrc never exceeded 0.15 aborts per second.

We next consider different levels of data skew on the dataset. Figure~\ref{fig:exp2_1000}, Figure~\ref{fig:exp2_100} and Figure~\ref{fig:exp2_10} show the throughput for different hotspot probabilities when there are respectively 1000, 100 and 10 accounts in the hotspot. Figure~\ref{fig:exp3} shows the throughput for different skew parameters when using a Zipfian distribution. When the data skew increases, \mvrc greatly outperforms the other two isolation levels. However, when contention further increases, the throughput of \mvrc decreases drastically due to transactions waiting for write locks to be released. In Figure~\ref{fig:exp2_10}, the number of aborts under \mvrc due to detected deadlocks increases to around 33 aborts per second when the hotspot probability is 90\%.

Similar findings are obtained when considering maximal subsets of the \tpcckv{} benchmark, for instance, \{Payment, OrderStatus and StockLevel\}, that are robust against \MVRC.

\paragraph{Conclusion.}
When a set of \ptranss{} is robust against \mvrc, choosing this lower isolation level never results in a performance loss. This is to be expected, since SI and SSI require additional overhead when checking for possible serialization failures that require an abort. \mvrc greatly outperforms the other isolation levels for settings with higher contention. Indeed, due to the first-updater-wins principle, SI and SSI need to abort a transaction when two concurrent transactions write to the same object. Higher contention increases this probability, resulting in an increased abort rate.

\subsection{Promoted workloads}
\label{sec:promotion}

\begin{figure}[t]
\footnotesize
    \begin{minipage}[t]{0.37\columnwidth-2ex}
    \Balance (RC+P): 
    \[
    \begin{array}{l}
    \R[]{\vx: \Account\ListAttr{N,C}}\\
    \UP[]{\vy: \Savings\ListAttr{C,B}\ListAttr{B}}\\
    \UP[]{\vz: \Checking\ListAttr{C,B}\ListAttr{B}}\\
    \end{array}
    \]
    \end{minipage}%
    \begin{minipage}[t]{0.39\columnwidth-2ex}
    WriteCheck (RC+P): 
    \[
    \begin{array}{l}
    \R[]{\vx: \Account\ListAttr{N,C}}\\
    \UP[]{\vy: \Savings\ListAttr{C,B}\ListAttr{B}}\\
    \UP[]{\vz: \Checking\ListAttr{C,B}\ListAttr{B}}\\
    \UP[]{\vz: \Checking\ListAttr{C,B}\ListAttr{B}}\\
    \end{array}
    \]
    \end{minipage}%
    \vline%
    \begin{minipage}[t]{0.32\columnwidth-2ex}
    \, \Balance (RC-\cite{DBLP:conf/aiccsa/AlomariF15}): 
    \[
    \begin{array}{l}
    \W[]{\myvv: \Conflict\ListAttr{N}}\\
    \R[]{\vx: \Account\ListAttr{N,C}}\\
    \R[]{\vy: \Savings\ListAttr{C,B}}\\
    \R[]{\vz: \Checking\ListAttr{C,B}}\\
    \end{array}
    \]
    \end{minipage}
    \removespacetocaption

    \caption{Left: \emph{all} modified templates in RC+P. Right: Example of \emph{one} template modification for RC-\cite{DBLP:conf/aiccsa/AlomariF15}.}
    \label{fig:balance-promoted}
\end{figure}

\begin{figure*}[h]
    \centering
    \begin{subfigure}[t]{0.24\linewidth}
        \centering
        \includegraphics[width=.95\linewidth]{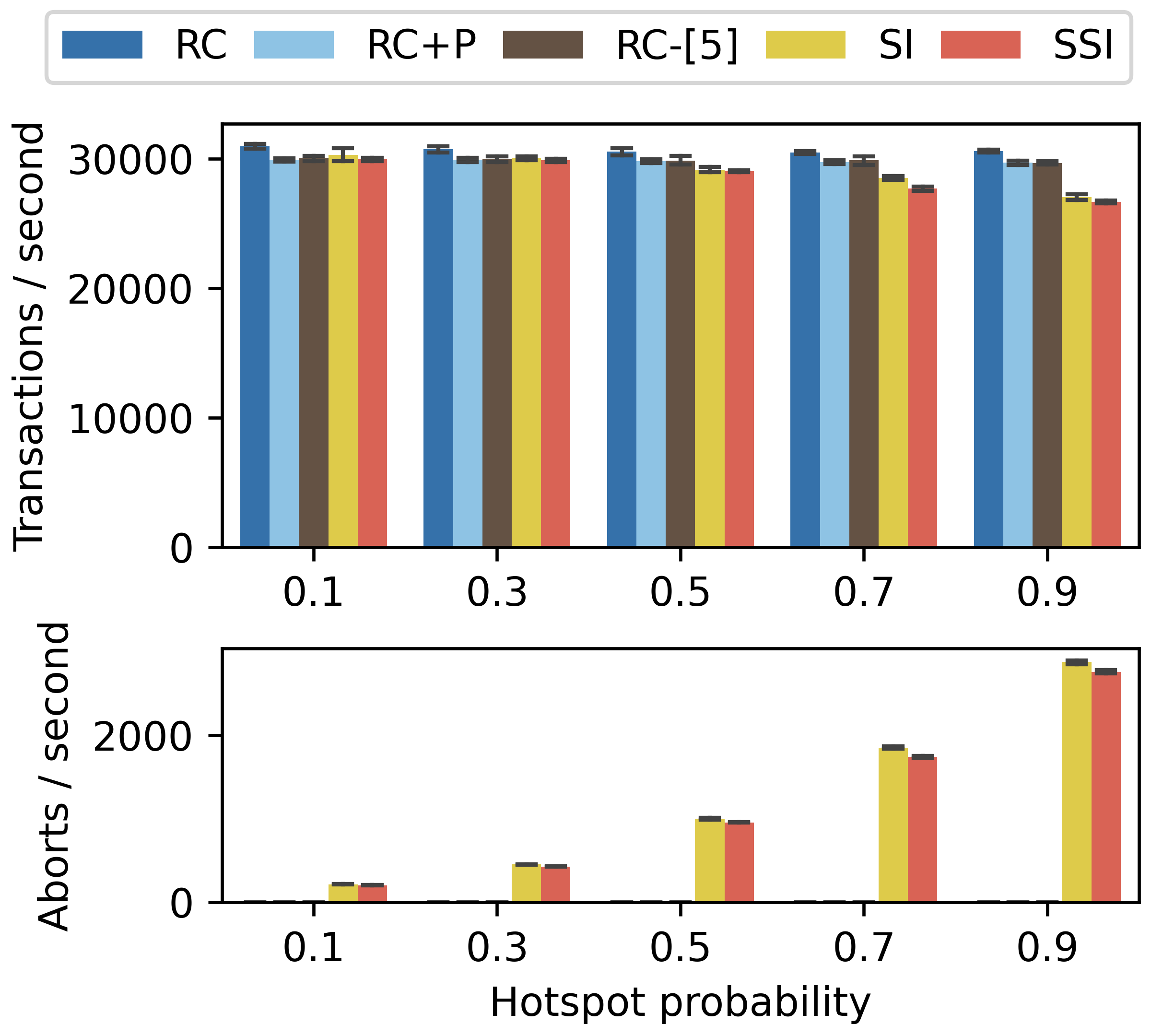}
        \caption{Hotspot of 1000 accounts}
        \label{fig:exp4_1000}
    \end{subfigure}
    \hfill
    \begin{subfigure}[t]{0.24\linewidth}
        \centering
        \includegraphics[width=.95\linewidth]{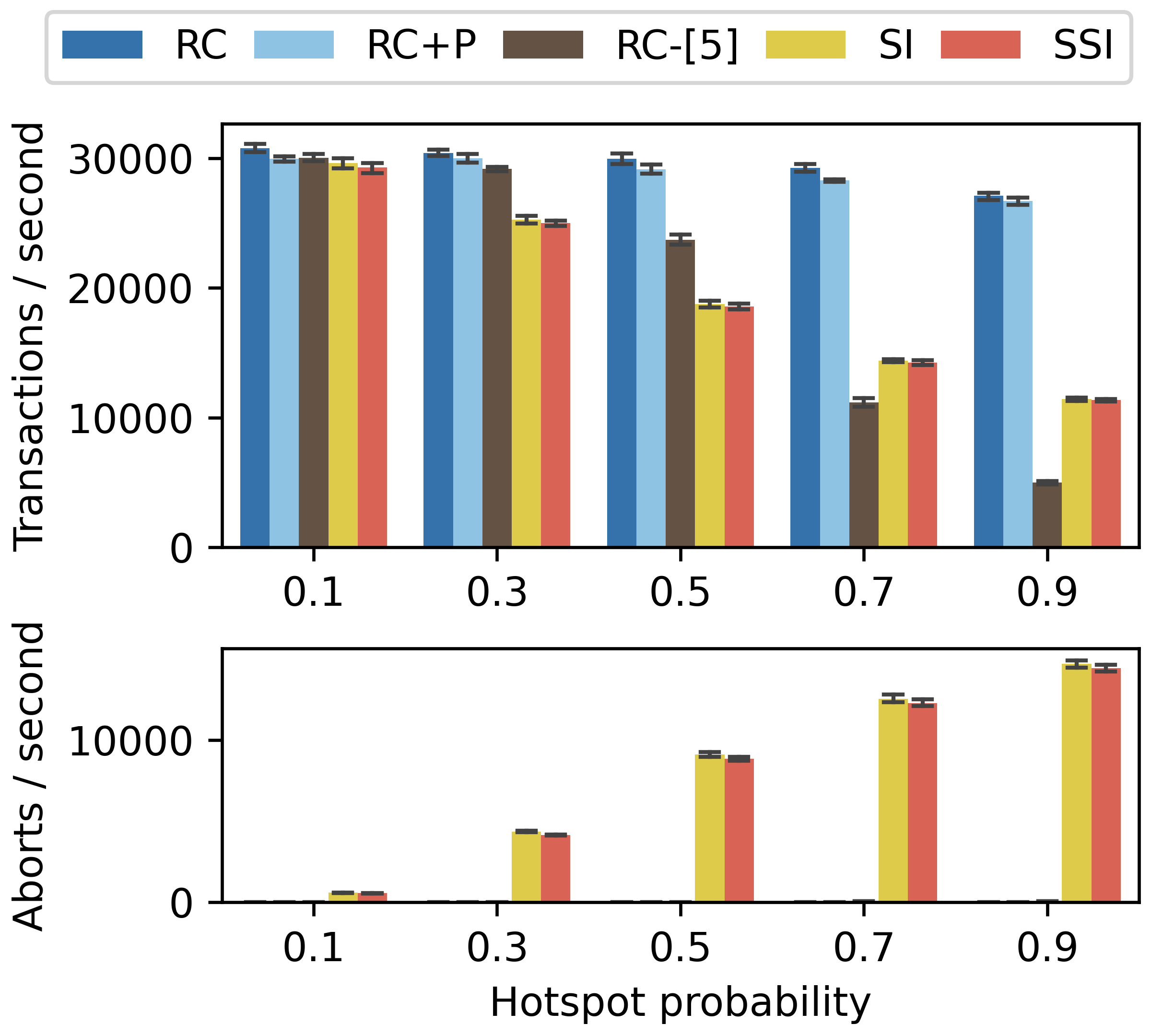}
    \caption{Hotspot of 100 accounts}
        \label{fig:exp4_100}
    \end{subfigure}
    \hfill
    \begin{subfigure}[t]{0.24\linewidth}
        \centering
        \includegraphics[width=.95\linewidth]{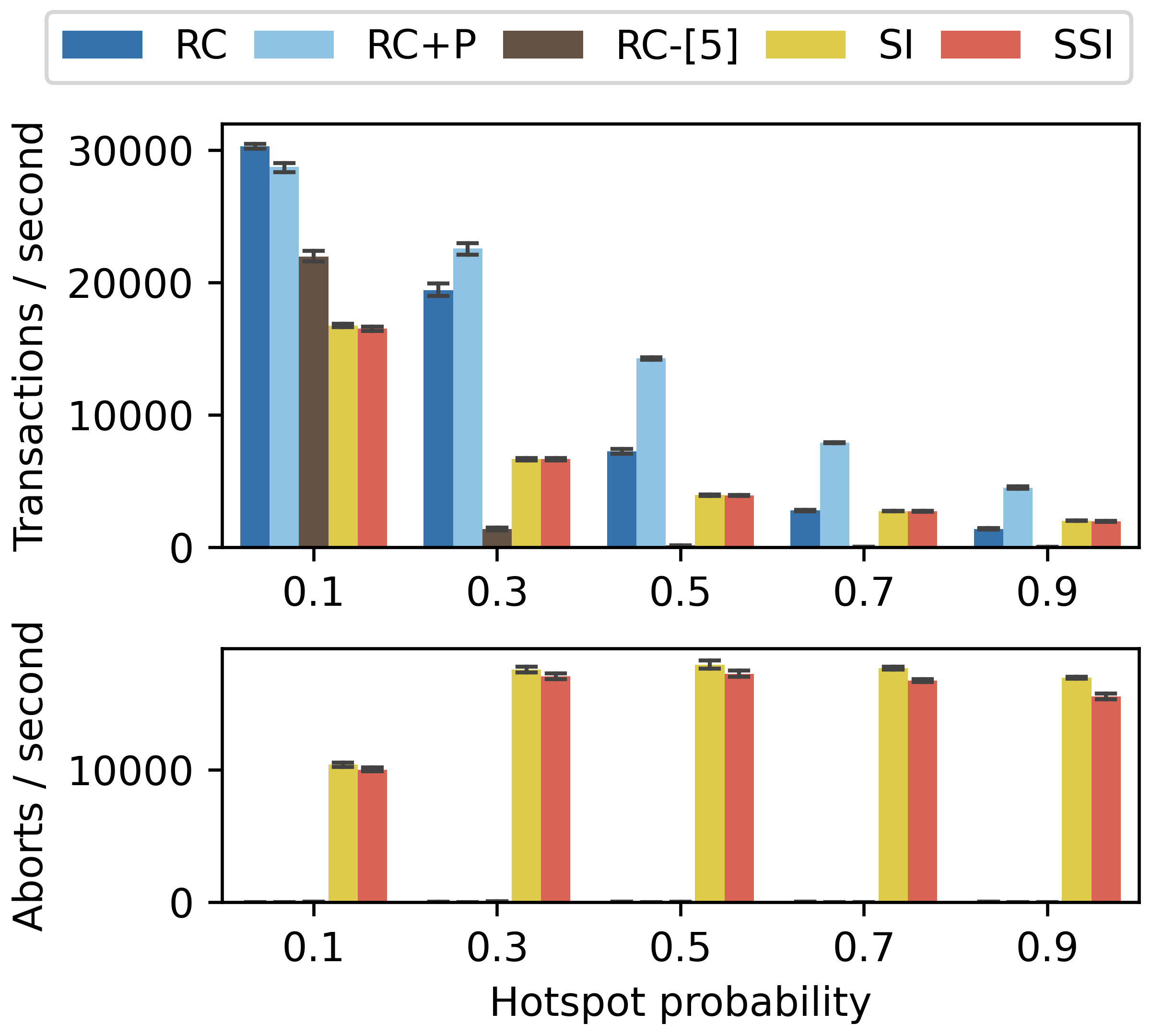}
    \caption{Hotspot of 10 accounts}
        \label{fig:exp4_10}
    \end{subfigure}
    \hfill
    \begin{subfigure}[t]{0.24\linewidth}
        \centering
        \includegraphics[width=.95\linewidth]{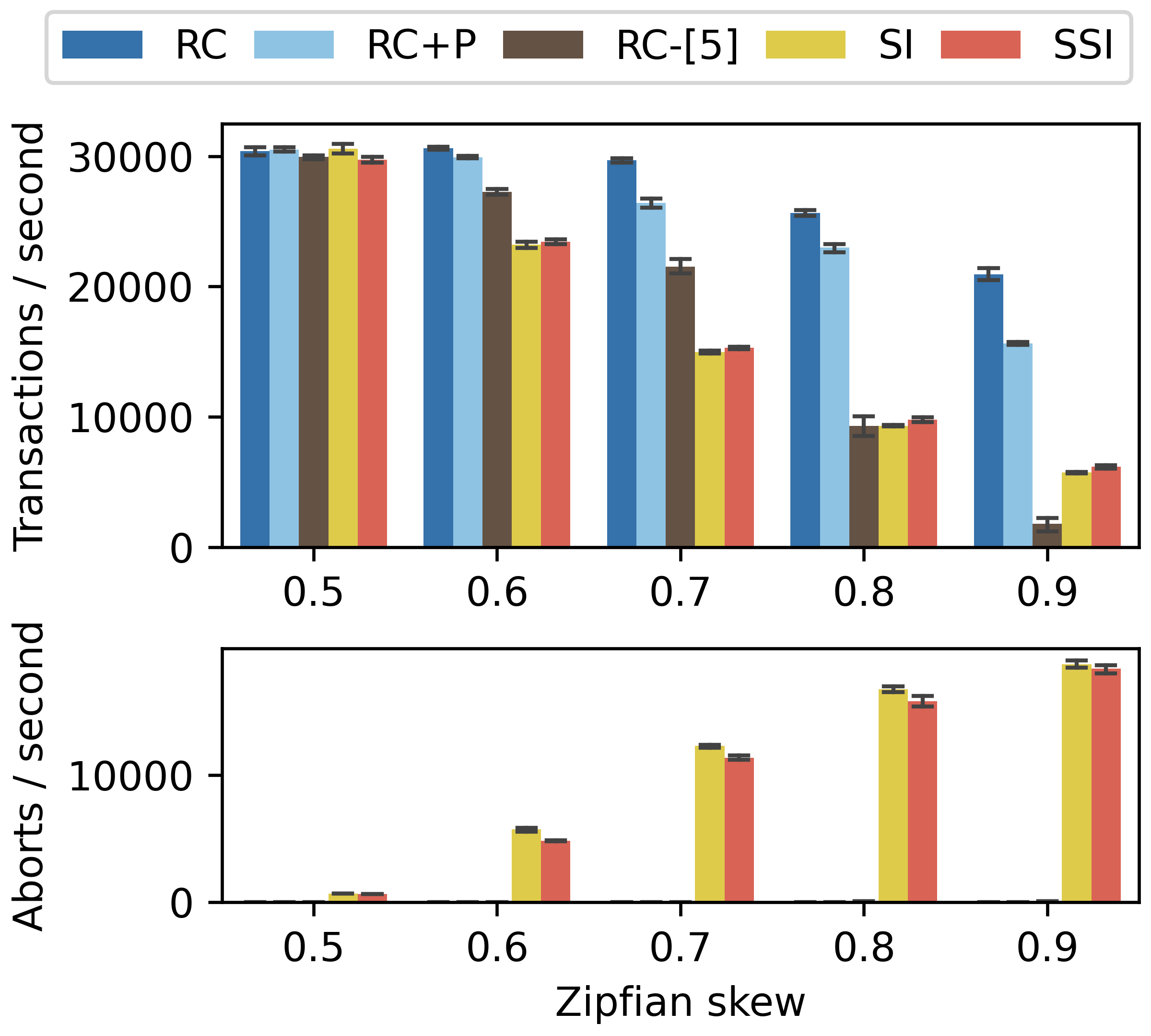}
    \caption{Zipfian distribution}
        \label{fig:exp5}
    \end{subfigure}
    \removespacetocaption
    \caption{[SmallBank 
    with promotion] Throughput and abort rate 
    with 200 clients 
    and different contention parameters.}
    \label{fig:exp45_all}
\end{figure*}

\begin{figure*}[h]
    \centering
    \begin{subfigure}[t]{0.27\linewidth}
        \centering
        \includegraphics[width=.95\linewidth]{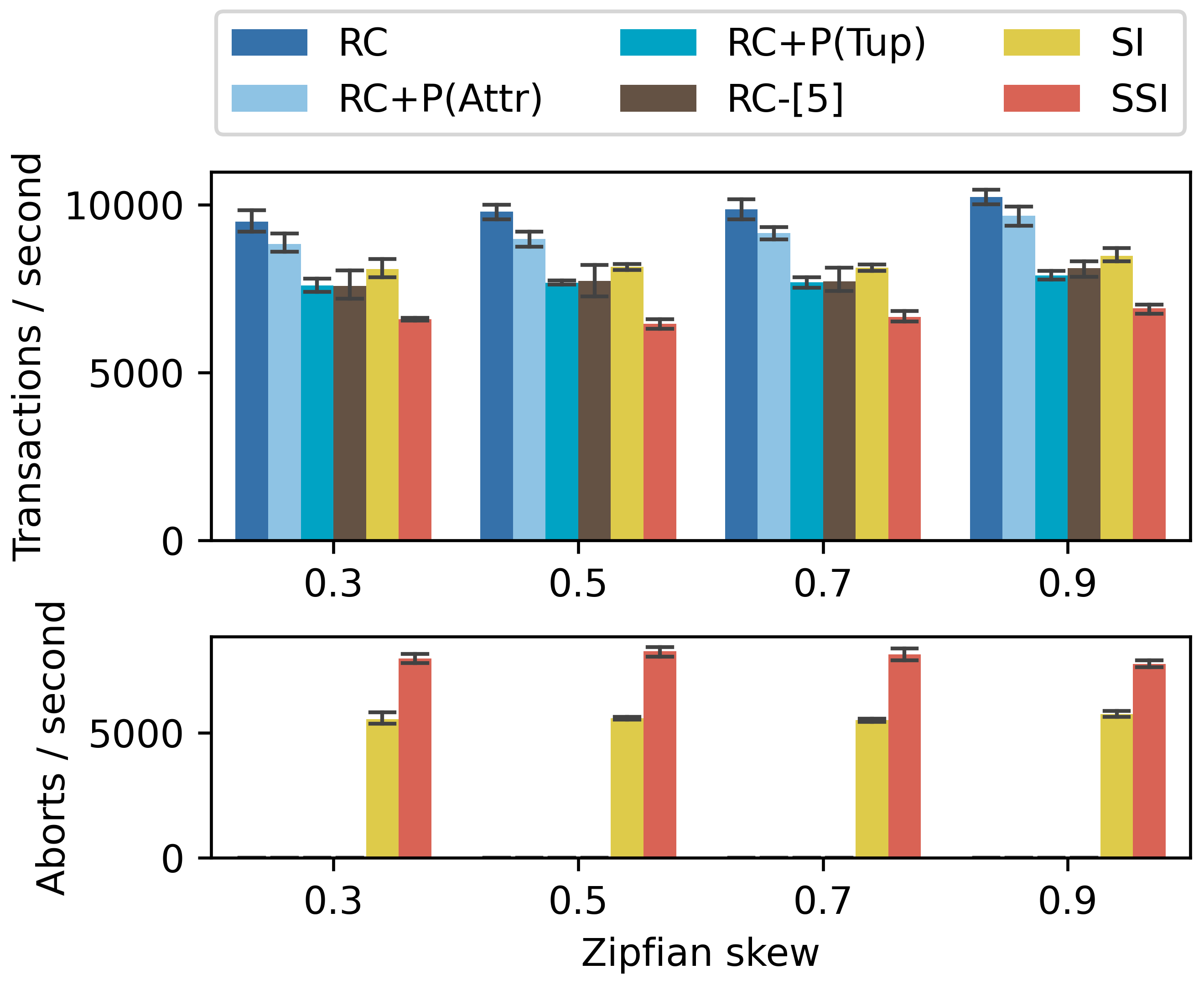}
        \caption{Influence of Zipfian skew on customers; 200 concurrent clients,  uniform distribution over warehouses.}
        \label{fig:tpcc_exp6}
    \end{subfigure}
    \hfill
    \begin{subfigure}[t]{0.27\linewidth}
        \centering
        \includegraphics[width=.95\linewidth]{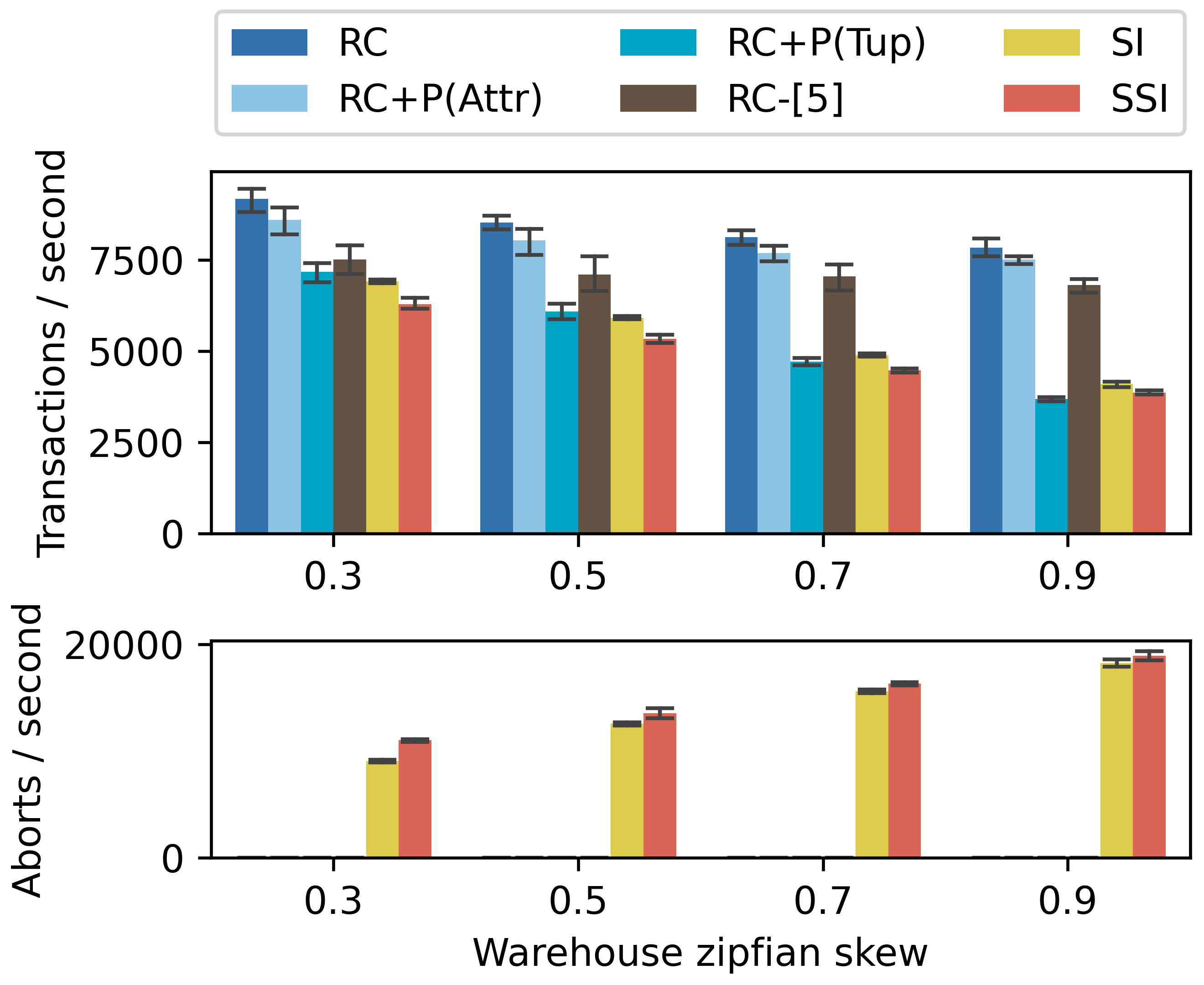}
        \caption{Influence of Zipfian skew over warehouses; fixed Zipfian skew of 0.7 over customers, 200 clients.}
        \label{fig:tpcc_exp10}
    \end{subfigure}
    \hfill
    \begin{subfigure}[t]{0.405\linewidth}
        \centering
        \includegraphics[width=.95\linewidth]{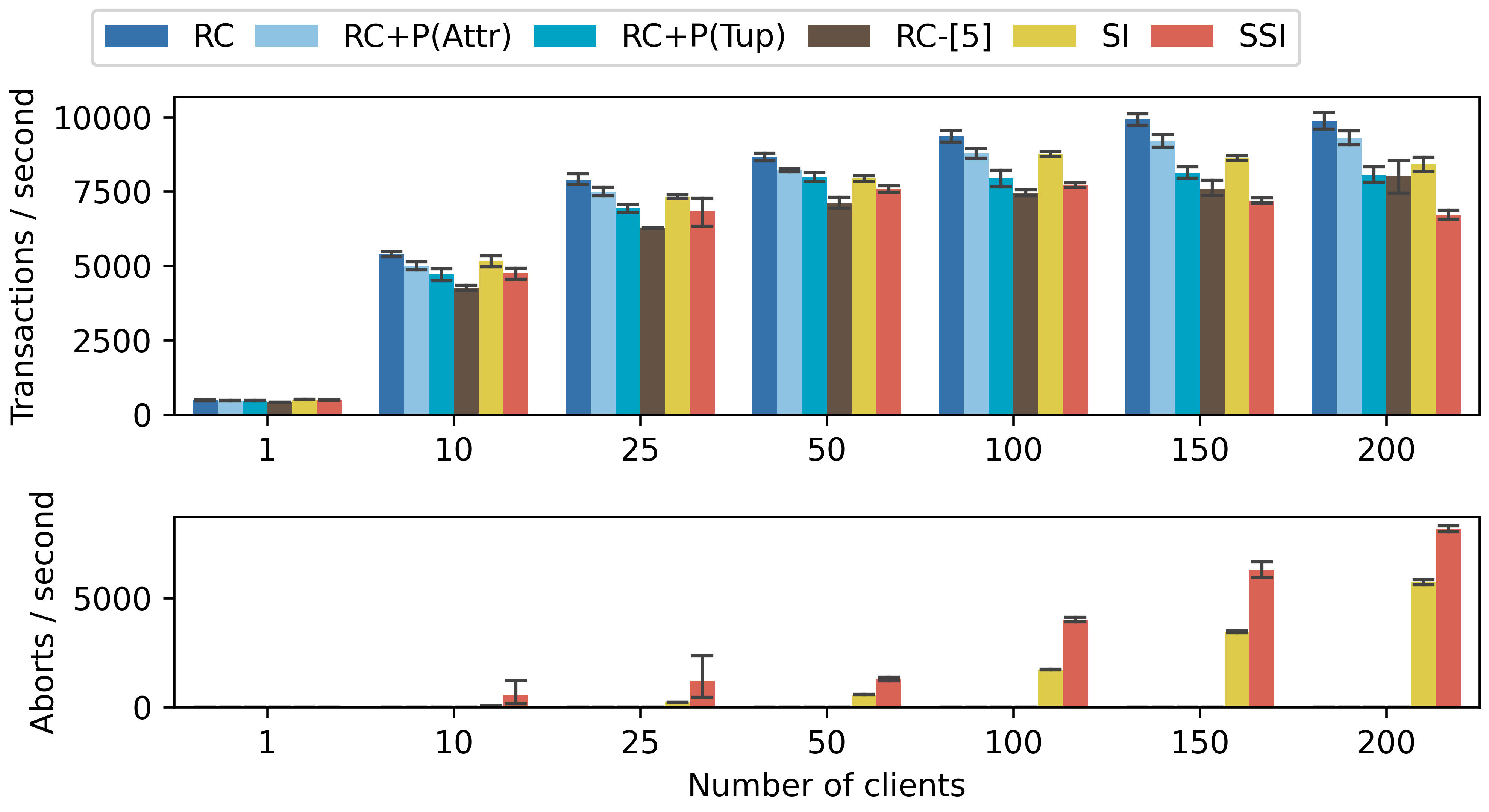}
        \caption{Influence of the number of clients; Zipfian skew of 0.7 over customers, uniform distribution over warehouses.}
        \label{fig:tpcc_exp2}
    \end{subfigure}
    \removespacetocaption
    \caption{[\tpcckv{} 
    with promotion] Throughput and abort rate 
    for 25 warehouses 
    and different contention parameters.}
    \label{fig:tpcc_all}
\end{figure*}

\begin{figure}[h]
    \centering
    \begin{subfigure}[t]{0.480\linewidth}
        \centering
    \includegraphics[width=\linewidth]{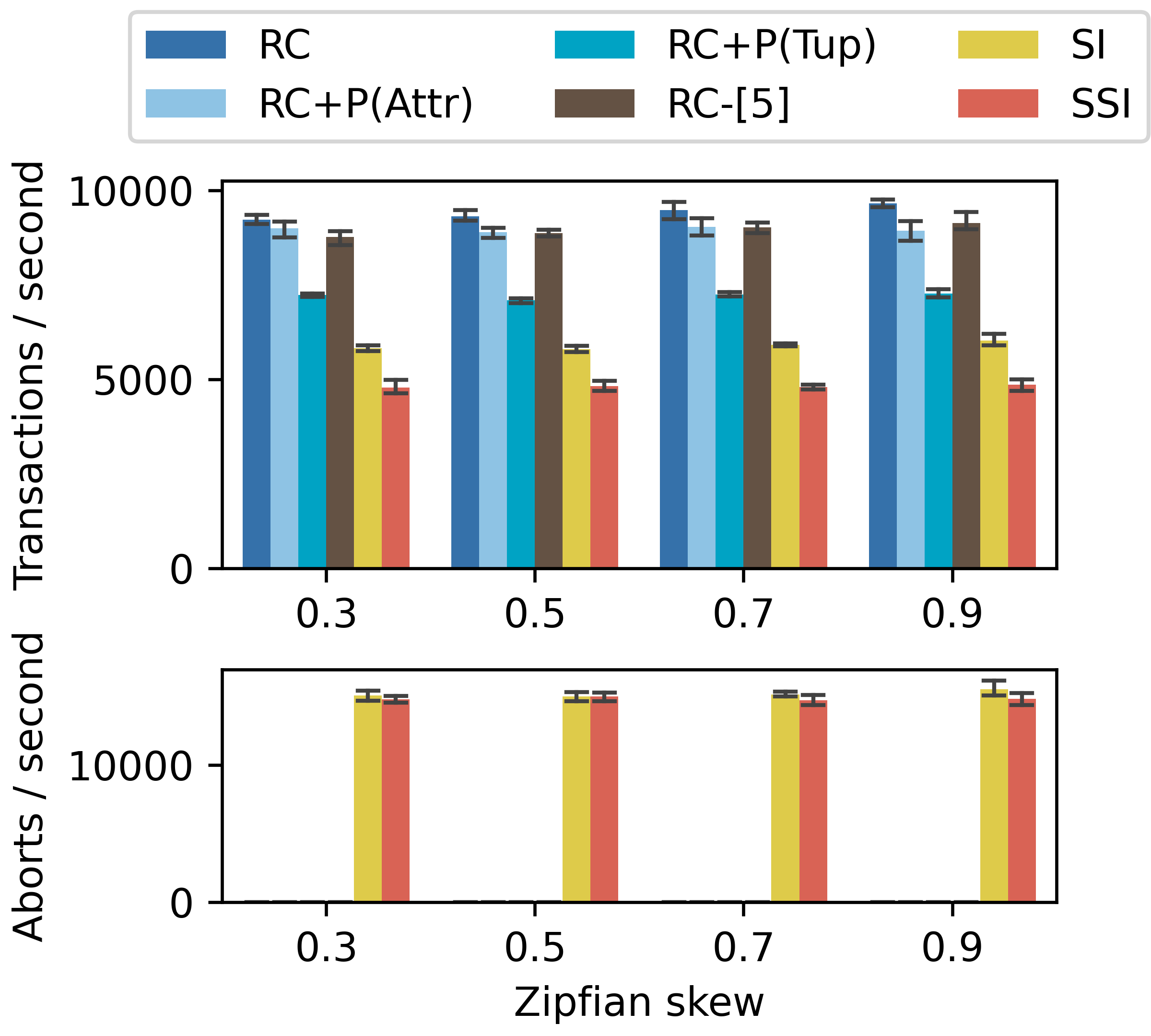}
    \caption{Influence of Zipfian skew over customers when the dataset consists of 10 warehouses.}
        \label{fig:tpcc_exp7}
    \end{subfigure}
    \hfill
    \begin{subfigure}[t]{0.480\linewidth}
        \centering
    \includegraphics[width=\linewidth]{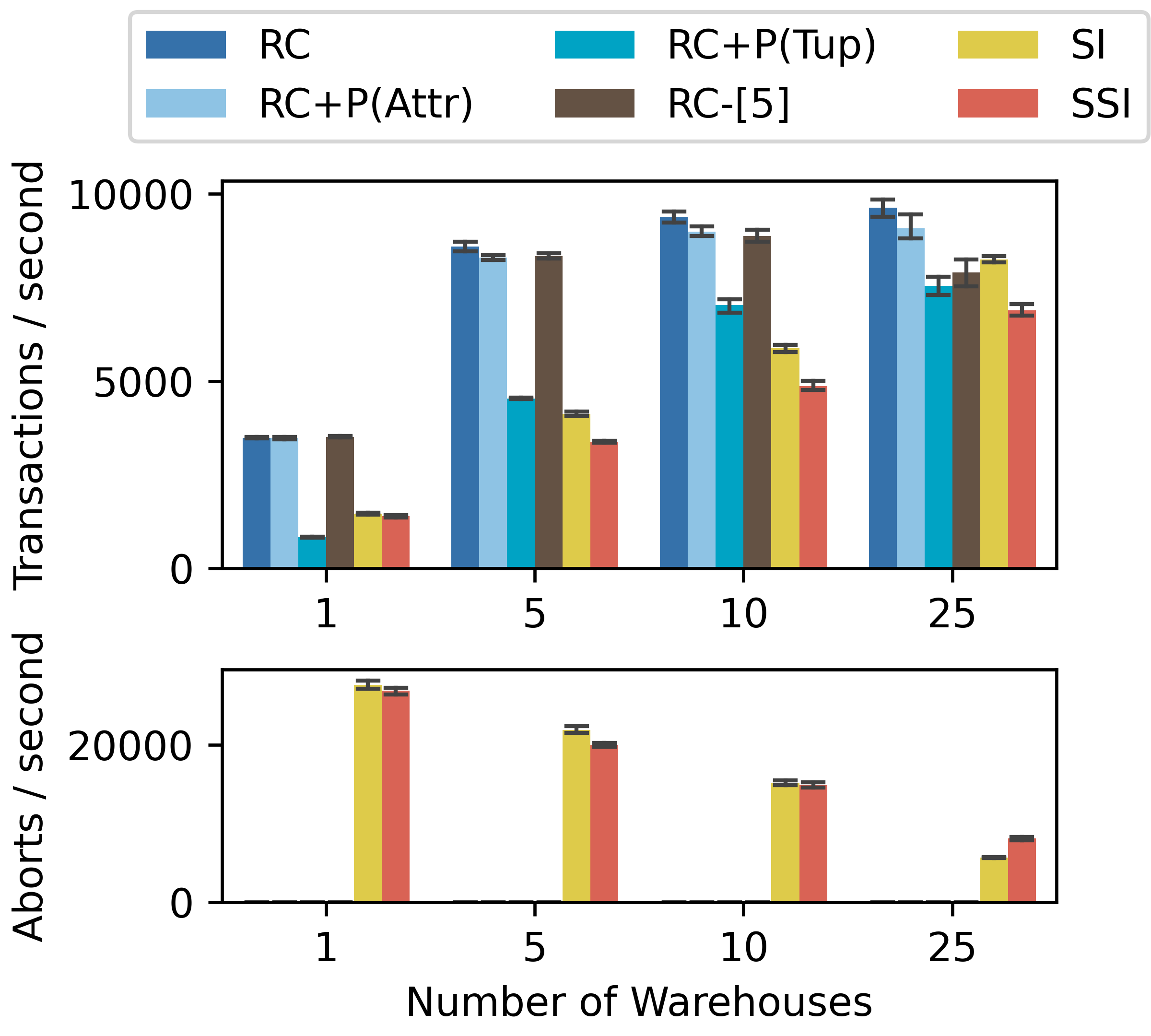}
    \caption{Influence of the number of warehouses with a Zipfian skew of 0.7 over the customers.}
        \label{fig:tpcc_exp12}
    \end{subfigure}
    \removespacetocaption
    \caption{[\tpcckv{} + promotion] Throughput and abort rate;
    200 clients, uniform distribution over warehouses.}
    \label{fig:exp7and12}
\end{figure}

\subsubsection{Promotion.}
{
When a set of \ptranss{} is not robust, we propose a template modification technique based on insights from Definition~\ref{def:mvsplitschedule}:
an equivalent set of \ptranss{} robust against \mvrc can be created by promoting \myR-operations to \myUP-operations that write back the read value.
Such a change does not alter the effect of the \ptrans{}, but the newly introduced write operation will trigger concurrency mechanisms in the database.
We emphasize that 
this is a general technique that can \emph{always} be used to construct an equivalent robust set of \shortptranss{}: 
Definition~\ref{def:mvsplitschedule} requires that operation $b_1$ is rw-conflicting with $a_1$ (Condition~(\ref{c:3})), but not ww-conflicting with $a_1$ (Condition~(\ref{c:1})), so promoting \emph{all\/} \myR-operations to \myUP-operations is sufficient to guarantee robustness against \mvrc.

The promotion approach is inspired by a technique introduced by Fekete et al.~\cite{DBLP:journals/tods/FeketeLOOS05} to make a workload robust against SI. However, in contrast to their approach, which introduces \emph{additional} write operations, we promote an \emph{existing} $\myR$-operation into a $\myUP$-operation.
Fortunately, it is not always necessary to promote all \myR-operations to obtain robustness against \mvrc:
}
to find a minimal set of \myR-operations to promote, 
we can iteratively promote $\myR$-operations to $\myUP$-operations and apply Algorithm~\ref{alg:ptime:template} to check whether the resulting workload is robust. We applied this technique on both SmallBank and \tpcckv{} to guarantee robustness with a minimal number of promotions.

For {SmallBank}, we can obtain robustness by only promoting all \myR-operations over the Checking and Savings relations to \myUP-operations leaving all other \myR-operation intact (cf.~Figure~\ref{fig:balance-promoted} and notice that only 2/5 templates are modified and in total only four reads need to be promoted). 
In our experiments, we refer to this promotion as \emph{RC+P}. Furthermore, this set of promoted \myR-operations is minimal: if one of the \myR-operations over the Checking or Savings relations remains,the application of Algorithm~\ref{alg:ptime:template} reveals that the resulting set of \ptranss{} is not robust against \mvrc.

For {\tpcckv{}}, we can obtain robustness by promoting all \myR-operations over the Customer, Order and OrderLine relations in the OrderStatus template while all other templates remain unchanged. {We refer to this promotion as \emph{RC+P(Attr)}.}
To contrast our approach based on attribute-level conflicts with the one based on tuple-level conflicts, we also investigate how to make \tpcckv{} robust when the read and write sets of operations refer to \emph{all} attributes in the corresponding relations. Again we applied Algorithm~\ref{alg:ptime:template} and obtained that all \myR-operations on tuples over the Warehouse-, Customer- Order- and OrderLine-relations need to be promoted to \myUP-operations, requiring changes in both NewOrder and OrderStatus. {We refer to this promotion as \emph{RC+P(Tup)}.} 
Both promotion strategies are again minimal, since we cannot promote only a strict subset of these \myR-operations to \myUP-operations without losing robustness. When comparing RC+P(Attr) to RC+P(Tup), we see that for \tpcckv{} an analysis on the granularity of tuples requires strictly \emph{more} \myR-operations to be promoted leading to a smaller throughput compared to RC+P(Attr)
as the experiments will show.

We also compare with the Internal Concurrency Exclusion~\cite{DBLP:conf/aiccsa/AlomariF15} approach (referring to the latter as \emph{RC-\cite{DBLP:conf/aiccsa/AlomariF15}}) for making workloads robust against RC extended to attribute-level conflicts.
Each \shortptrans{} is changed by adding additional leading $\myW$-operations overwriting tuples in a newly introduced Conflict relation 
such that every pair of instances that might produce a counterflow edge is guaranteed to write to the same tuple in the Conflict relation. 
The latter is achieved by selecting specific relations in the original benchmark and adding a tuple to relation Conflict for each tuple in the selected relations. Each \shortptrans{} is then changed so that if it accesses one of the tuples of these selected relations, it also writes to the corresponding tuple in relation Conflict.
For SmallBank, it suffices to select relation Account. Figure~\ref{fig:balance-promoted}(right) illustrates the required changes for template \Balance. We stress that this change needs to be done for \emph{every} template.
There are even two $\myW$-operations needed at the beginning of Amalgamate, as it considers two different customers. 
For \tpcckv{}, the selected relations are Stock (thereby requiring additional writes in \shortptranss{} NewOrder and StockLevel) and Customer (thereby requiring additional writes in \shortptranss{} NewOrder, Delivery, OrderStatus and Payment). 
\ifthenelse{\boolean{fullversion}}
{
We refer to the appendix for an overview of the concrete templates used for RC-\cite{DBLP:conf/aiccsa/AlomariF15}.
}
{
We refer 
to~\cite{fullversion} for 
the concrete templates used for RC-\cite{DBLP:conf/aiccsa/AlomariF15}.
}

We also include the performance of the unmodified \shortptranss{} under RC, SI and SSI as a baseline. Recall that both SmallBank and \tpcckv{} are not robust against RC, and SmallBank is not robust against SI. So, to be fair, the performance of the promoted workloads 
should be compared to SSI (for SmallBank) and SI (for \tpcckv).

\subsubsection{SmallBank}

Figure~\ref{fig:exp8} compares the throughput for different numbers of concurrent clients. When contention is lower due to fewer clients, the throughput of RC+P is comparable to \mvrc, SI and SSI. When the number of clients increases, RC+P still outperforms SI and SSI (and also RC-\cite{DBLP:conf/aiccsa/AlomariF15}), although the performance gain is less as 
compared to Figure~\ref{fig:exp1}.

Similarly to Figure~\ref{fig:exp2_all}, we use 200 clients to query the database with different levels of skew, but this time with all \ptranss{} in the SmallBank benchmark and using promoted operations {in RC+P.} 
Figures~\ref{fig:exp4_1000}, \ref{fig:exp4_100} and \ref{fig:exp4_10} show the throughput for different hotspot sizes and probabilities. Figure~\ref{fig:exp5} shows the throughput when a Zipfian distribution is used instead. The experiments show that RC+P
outperforms SI, SSI, and RC-\cite{DBLP:conf/aiccsa/AlomariF15} when the hotspot is smaller, the hotspot probability increases, or skew increases. 
The improvements over SI and SSI result from the high number of aborts under SI and SSI. The improvement over RC-\cite{DBLP:conf/aiccsa/AlomariF15} (which can be an order of magnitude depending on the setting) can be explained by noting that RC+P allows for more concurrency than RC-\cite{DBLP:conf/aiccsa/AlomariF15}. Indeed, for RC-\cite{DBLP:conf/aiccsa/AlomariF15}, two instances of \shortptranss{} that access the same customer can never be concurrent, as they both initially write to the same tuple of type Conflict. For RC+P, some concurrency is still possible in this case: i.e., an execution of DepositChecking can interleave with an execution of TransactSavings over the same customer, as the former only updates the tuple of type Checking, whereas the latter only updates the tuple of type Savings.

Furthermore, the performance of RC+P is usually comparable to RC. In fact, when the hotspot is small (Figure~\ref{fig:exp4_10}), RC+P is even able to outperform the original \shortptranss{} under RC due to a reduced number of deadlocks: the abort rate for RC+P increases to around 20 aborts per second under a hotspot probability of 0.9, whereas RC increases to around 45 aborts per second in this setting.

\subsubsection{\tpcckv{}}

Figure~\ref{fig:tpcc_exp6} and Figure~\ref{fig:tpcc_exp7} show the throughput and abort rate depending on the Zipfian skew over the Customer relation, over a dataset consisting of respectively 25 and 10 warehouses. Noticeably, changing the skew over the Customer relation does not result in a significant change in throughput.
The reason for this is that the throughput bottleneck is not caused by multiple transactions accessing the same customer, but by accessing the same warehouse instead. This is to be expected, since the number of warehouses in our dataset is several magnitudes smaller than the total number of customers.
We further investigate the influence of the number of warehouses (Figure~\ref{fig:tpcc_exp12}), as well as a Zipfian skew over the Warehouse relation (Figure~\ref{fig:tpcc_exp10}). Figure~\ref{fig:tpcc_exp2} investigates the influence of different levels of contention on performance by testing different numbers of concurrent clients.

When comparing both promotion strategies, we conclude that RC+P(Attr) always outperforms RC+P(Tup), especially when the number of warehouses is lowered (Figure~\ref{fig:tpcc_exp12}), or when a larger Zipfian skew is used over the Warehouse relation (Figure~\ref{fig:tpcc_exp10}). In these cases, it should also be noted that RC+P(Attr) clearly dominates the higher isolation levels SI and SSI. For example, when the dataset consists of 5 warehouses in Figure~\ref{fig:tpcc_exp12}, the throughput of RC+P(Attr) averages around 8300 transactions per second, contrasting the average of 4100 transactions per second for SI. Furthermore, the performance loss of RC+P(Attr) compared to RC over the original templates is always relatively small, indicating that our approach allows to achieve serializability guarantees in exchange for a minor performance loss.

When analyzing RC-\cite{DBLP:conf/aiccsa/AlomariF15}, we see that its throughput relative to the throughput of RC+P(Attr) is highly dependent on the number of warehouses (Figure~\ref{fig:tpcc_exp12}), as well as the Zipfian skew over the Warehouse relation (Figure~\ref{fig:tpcc_exp10}). In particular, when the number of warehouses is lowered or the skew increased, the throughput of RC-\cite{DBLP:conf/aiccsa/AlomariF15} is similar to that of RC+P(attr). If on the other hand the number of warehouses is larger and a uniform distribution over the Warehouse relation is used, then RC+P(Attr) clearly outperforms RC-\cite{DBLP:conf/aiccsa/AlomariF15}. Consider for example the setting with 25 warehouses in Figure~\ref{fig:tpcc_exp12}. Then, the average throughputs of RC+P(Attr) and RC-\cite{DBLP:conf/aiccsa/AlomariF15} are respectively around 9000 and 7800 transactions per second.

\subsubsection{Conclusion.}
Our promotion technique outperforms the isolation levels SI and SSI 
under higher contention and guarantees serializability under RC while requiring only a minor performance cost compared to RC over the original templates. When comparing to earlier work based on Internal Concurrency Exclusion~\cite{DBLP:conf/aiccsa/AlomariF15}, on SmallBank our approach significantly outperforms RC-\cite{DBLP:conf/aiccsa/AlomariF15} when contention increases (an order of magnitude in extreme cases). For \tpcckv{}, the performance gain is similar to that of RC-\cite{DBLP:conf/aiccsa/AlomariF15}, although we are still able to identify cases where our technique outperforms RC-\cite{DBLP:conf/aiccsa/AlomariF15} by more than 15\%. Finally, comparing RC+P(Attr) versus RC+P(Tup) for 
\tpcckv{} shows that promotion based on attribute-level analysis significantly outperforms tuple-level analysis.

\ifthenelse{\boolean{fullversion}}{

\section{Conclusion}
\label{sec:concl}

We pushed the frontier of the robustness problem for RC and showed that 
an explicit formalisation detects larger sets of transaction workloads to be robust.
The throughput of a relational database system processing transactions under isolation level Serializable can be improved by an approach based on robustness testing and safely executing transactions under the lower isolation level RC. 
In the future we plan to build further on the gained insights to cover more expressive transaction programs and also explore sufficient conditions
for robustness against \MVRC.

}{}

\begin{acks}
The resources and services used in this work were provided by the VSC (Flemish Supercomputer Center), funded by the Research Foundation -- Flanders (FWO) and the Flemish Government.
\end{acks}

\clearpage

\balance
\bibliographystyle{ACM-Reference-Format}
\bibliography{references}

\ifthenelse{\boolean{fullversion}}
{
\clearpage
\onecolumn

\appendix
\section*{Appendix}


\section{Detailed Benchmark Analysis}

This section provides a detailed overview of the robustness properties for both the SmallBank and TPC-C benchmark. We analyse robustness against \mvrc both on the granularity of attributes and tuples, providing concrete counterexample schedules for all subsets that are not considered robust.

\subsection{SmallBank Transaction Templates}

\begin{figure*}
\begin{minipage}[t]{\textwidth/2-2ex}
\begin{verbatim}
Balance(N):
    SELECT CustomerId INTO :x
      FROM Account
     WHERE Name=:N;
        
    SELECT Balance INTO :a 
      FROM Savings
     WHERE CustomerId=:x;
     
    SELECT Balance + :a 
      FROM Checking
     WHERE CustomerId=:x;
    COMMIT;

Amalgamate(N1,N2):
    SELECT CustomerId INTO :x1
      FROM Account
     WHERE Name=:N1;
    
    SELECT CustomerId INTO :x2
      FROM Account
     WHERE Name=:N2;
    
    UPDATE Savings AS new
       SET Balance = 0
      FROM Savings AS old
     WHERE new.CustomerId=:x1
           AND old.CustomerId=new.CustomerId
    RETURNING old.Balance INTO :a;
    
    UPDATE Checking AS new
       SET Balance = 0
      FROM Checking AS old
     WHERE new.CustomerId=:x1
           AND old.CustomerId=new.CustomerId
    RETURNING old.Balance INTO :b;
    
    UPDATE Checking
       SET Balance = Balance + :a + :b
     WHERE CustomerId=:x2;

DepositChecking(N,V):
    SELECT CustomerId INTO :x
      FROM Account
     WHERE Name=:N;
     
    UPDATE Checking
       SET Balance = Balance + :V 
     WHERE CustomerId=:x;
    COMMIT;
\end{verbatim}
\end{minipage}   
\begin{minipage}[t]{\textwidth/2}
\begin{verbatim}
TransactSavings(N,V):
    SELECT CustomerId INTO :x
      FROM Account
     WHERE Name=:N;
        
    UPDATE Savings
       SET Balance = Balance + :V 
     WHERE CustomerId=:x;
    COMMIT;

WriteCheck(N,V):
    SELECT CustomerId INTO :x
      FROM Account
     WHERE Name=:N;
        
    SELECT Balance INTO :a 
      FROM Savings
     WHERE CustomerId=:x;
     
    SELECT Balance INTO :b 
      FROM Checking
     WHERE CustomerId=:x;
     
    IF (:a + :b) < :V THEN 
        UPDATE Checking
           SET Balance = Balance - (:V + 1) 
         WHERE CustomerId=:x;
    ELSE
        UPDATE Checking
           SET Balance = Balance - :V
         WHERE CustomerId=:x;
    END IF;
    COMMIT;
\end{verbatim}
\end{minipage}
\caption{SmallBank SQL Transaction Templates.}
\label{fig:smallbank:SQL}
\end{figure*}

Figure~\ref{fig:smallbank:SQL} contains the SQL code for the SmallBank transaction templates presented in Figure~\ref{fig:smallbank-abstract-syntax}. We identified three maximal robust subsets of \ptranss{} that are robust against \mvrc:
\begin{itemize}
    \item \{DepositChecking, TransactSavings, Amalgamate\},
    \item \{Balance, DepositChecking\}, and
    \item \{Balance, TransactSavings\}.
\end{itemize}
Figure~\ref{fig:counterex_SB} shows that these are indeed the only robust subsets by providing counterexample multiversion split schedules for sets of templates that are not robust against \mvrc.
We only provide counterexamples over minimal subsets that are not robust against \mvrc, as these schedules immediately serve as counterexamples over larger subsets as well.
An analysis of SmallBank on the granularity of tuples instead of attributes reveals that the robustness analysis remains unchanged. This is to be expected, since for this benchmark all conflicts on the granularity of tuples coincide with conflicts on the granularity of attributes. Indeed, all conflicting operations access the same attribute Balance in the Checking and Savings relations.

\begin{figure*}
    \begin{minipage}[t]{\textwidth}
        $$
        \begin{array}{lccccc}
            \text{$\trans[1]$ (WriteCheck):}&  \R[1]{x} \, \R[1]{y} \, \R[1]{z\ListAttr{C, B}} & & \UP[1]{z\ListAttr{C, B}\ListAttr{B}} \, \CT[1]\\
            \text{$\trans[2]$ (WriteCheck):}& & \R[2]{x} \, \R[2]{y} \, \R[2]{z\ListAttr{C, B}} \, \UP[2]{z\ListAttr{C, B}\ListAttr{B}} \, \CT[2]&
         \end{array}
        $$
        \subcaption{\{WriteCheck\} is not robust against \readmvcom.}\label{fig:counterex_SB_1}
    \end{minipage}
    \begin{minipage}[t]{\textwidth}
        $$
        \begin{array}{lccccc}
            \text{$\trans[1]$ (\Balance):}&  \R[1]{x_1} \, \R[1]{y_1\ListAttr{C, B}} & & \R[1]{z_1\ListAttr{C, B}} \, \CT[1]\\
            \text{$\trans[2]$ (\Amalgamate):}& &  \R[2]{x_1} \, \R[2]{x_2} \, \UP[2]{y_1\ListAttr{C, B}\ListAttr{B}} \, \UP[2]{z_1\ListAttr{C, B}\ListAttr{B}} \, \UP[2]{z_2} \, \CT[2]&
         \end{array}
        $$
        \subcaption{\{Balance, Amalgamate\} is not robust against \readmvcom.}\label{fig:counterex_SB_3}
    \end{minipage}
    \begin{minipage}[t]{\textwidth}
        $$\arraycolsep=.1em
        \begin{array}{lccccc}
            \text{$\trans[1]$ (\Balance):}&  \R[1]{x} \, \R[1]{y\ListAttr{C, B}} & & & & \R[1]{z\ListAttr{C, B}} \, \CT[1]\\
            \text{$\trans[2]$ (\TransactSavings):}& &  \R[2]{x} \, \UP[2]{y\ListAttr{C, B}\ListAttr{B}} \, \CT[2]&\\
            \text{$\trans[3]$ (\Balance):}& & &  \R[3]{x} \, \R[3]{y\ListAttr{C, B}} \, \R[3]{z\ListAttr{C, B}} \, \CT[3]&\\
            \text{$\trans[4]$ (\DepositChecking):}& & & &  \R[4]{x} \, \UP[4]{z\ListAttr{C, B}\ListAttr{B}} \, \CT[4]&
         \end{array}
        $$
        \subcaption{\{Balance, DepositChecking, TransactSavings\} is not robust against \readmvcom.}\label{fig:counterex_SB_7}
    \end{minipage}
    \caption{Counterexamples for robustness against \mvrc for the SmallBank transaction templates. To facilitate readability, we only specify attributes for conflicting operations.}
    \label{fig:counterex_SB}
\end{figure*}


\subsection{\tpcckv{} Transaction Templates}

For the \tpcckv{} transaction templates given in Figure~\ref{fig:tpcc-abstract-syntax}, the corresponding SQL code is given in Figure~\ref{fig:tpcc:SQL}. For this set of templates, the maximal subsets robust against \mvrc are:
\begin{itemize}
    \item \{NewOrder, Payment, Delivery, StockLevel\}, and
    \item \{Payment, OrderStatus, StockLevel\}.
\end{itemize}
For each minimal subset not robust against \mvrc, a counterexample schedule is given in Figure~\ref{fig:counterex_TPCC}.

When analysing the \tpcckv{} transaction templates on the granularity of tuples instead of attributes, we get the following (smaller) subsets robust against \mvrc:
\begin{itemize}
    \item \{Payment, Delivery, StockLevel\},
    \item \{Payment, OrderStatus, StockLevel\}, and
    \item \{NewOrder, StockLevel\}.
\end{itemize}
The schedules given in Figure~\ref{fig:counterex_TPCC} immediately serve as counterexamples on the granularity of tuples, since the schedules in Figure~\ref{fig:counterex_TPCC} exhibit no dirty writes on the granularity of tuples. Counterexample schedules for the remaining minimal subsets not robust against \mvrc are given in Figure~\ref{fig:counterex_TPCC_tuples}.

\begin{figure*}
    \begin{minipage}[t]{\textwidth}
        $$\arraycolsep=.1em
        \begin{array}{lccccc}
            \text{$\trans[1]$ (OrderStatus):}& \R[1]{z} \, \R[1]{s\ListAttr{W,D,O,C,Sta}} & & \R[1]{v_1\ListAttr{$\alpha$}} \, \R[1]{v_2\ListAttr{$\alpha$}} \, \CT[1]\\
            \text{$\trans[2]$ (NewOrder):}& & \R[2]{x} \, \UP[2]{y} \, \R[2]{z} \, \W[2]{s\ListAttr{W,D,O,C,Sta}} \, \UP[2]{t_1} \, \W[2]{v_1\ListAttr{$\alpha$}} \, \UP[2]{t_2} \, \W[2]{v_2\ListAttr{$\alpha$}} \, \CT[2] &
         \end{array}
        $$
        \subcaption{\{NewOrder, OrderStatus\} is not robust against \mvrc. To shorten notation, we use $\alpha$ to denote the set of attributes $\ListAttr{W,D,O,OL,I,Del,Qua}$}\label{fig:counterex_TPCC_2}
    \end{minipage}
    \begin{minipage}[t]{\textwidth}
        $$\arraycolsep=.4em
        \begin{array}{lccccc}
            \text{$\trans[1]$ (OrderStatus):}& \R[1]{z\ListAttr{W,D,C,Inf,Bal}}  & & \R[1]{s} \, \R[1]{v_1\ListAttr{$\alpha$}} \, \R[1]{v_2\ListAttr{$\alpha$}} \, \CT[1]\\
            \text{$\trans[2]$ (Delivery):}& & \UP[2]{s} \, \UP[2]{v_1\ListAttr{$\beta$}\ListAttr{Del}} \, \UP[2]{v_2\ListAttr{$\beta$}\ListAttr{Del}} \, \UP[2]{z\ListAttr{W,D,C,Bal}\ListAttr{Bal}} \, \CT[2]&
         \end{array}
        $$
        \subcaption{\{OrderStatus, Delivery\} is not robust against \mvrc. To shorten notation, we use $\alpha$ and $\beta$ to denote respectively the sets of attributes $\ListAttr{W,D,O,OL,I,Del,Qua}$ and $\ListAttr{W,D,O,OL,Del}$}\label{fig:counterex_TPCC_4}
    \end{minipage}
    \caption{Counterexamples for robustness against \mvrc for the \tpcckv{} transaction templates. To facilitate readability, we only specify attributes for conflicting operations.}
    \label{fig:counterex_TPCC}
\end{figure*}


\begin{figure*}
    \begin{minipage}[t]{\textwidth}
        $$
        \begin{array}{lccccc}
            \text{$\trans[1]$ (NewOrder):}&  \R[1]{x}  & & \UP[1]{y} \, \R[1]{z} \, \W[1]{s} \, \UP[1]{t_1} \, \W[1]{v_1} \, \UP[1]{t_2} \, \W[1]{v_2} \, \CT[1]\\
            \text{$\trans[2]$ (Payment):}& & \UP[2]{x} \, \UP[2]{y} \, \UP[2]{z} \, \CT[2]&
         \end{array}
        $$
        \subcaption{\{NewOrder, Payment\} is not robust against \mvrc.}\label{fig:counterex_TPCC_1}
    \end{minipage}
    \begin{minipage}[t]{\textwidth}
        $$
        \begin{array}{lccccc}
            \text{$\trans[1]$ (NewOrder):}&  \R[1]{x} \, \UP[1]{y} \, \R[1]{z} & & \W[1]{s} \, \UP[1]{t_1} \, \W[1]{v_1} \, \UP[1]{t_2} \, \W[1]{v_2} \, \CT[1]\\
            \text{$\trans[2]$ (Delivery):}& & \UP[2]{s} \, \UP[2]{v_1} \, \UP[2]{v_2} \, \UP[2]{z} \, \CT[2]&
         \end{array}
        $$
        \subcaption{\{NewOrder, Delivery\} is not robust against \mvrc.}\label{fig:counterex_TPCC_3}
    \end{minipage}
    \caption{Counterexamples for robustness against \mvrc for the \tpcckv{} transaction templates when considering conflicts on the granularity of tuples instead of attributes. We omit attributes in our notation, as they are no longer important to decide conflict serializability.}
    \label{fig:counterex_TPCC_tuples}
\end{figure*}

\begin{figure*}
\begin{minipage}[t]{\textwidth/2-2ex}
\begin{verbatim}
NewOrder(WID, DID, CID, ITEMS):
    SELECT Info into :winfo
      FROM Warehouse
     WHERE WarehouseID = :WID;
        
    UPDATE District
       SET NextOrderID = NextOrderID + 1
     WHERE WarehouseID = :WID AND DistrictID = :DID
 RETURNING NextOrderID INTO :nid, Info INTO :dinfo;
     
    SELECT Info
      FROM Customer
     WHERE WarehouseID = :WID AND DistrictID = :DID
           AND CustomerID = :CID;
 
    INSERT INTO Orders
    VALUES (:WID, :DID, :nid, :CID, 'created');
 
 :ordline_id = 1;
 FOR :item_id, :quantity IN :ITEMS {
    UPDATE Stock
       SET Quantity = Quantity - :quantity
     WHERE WarehouseID = :WID AND ItemID = :item_id;
     
    INSERT INTO OrderLine
    VALUES (:WID, :DID, :nid, :ordline_id,
            :item_id, "created", ":quantity");
    
    :ordline_id += 1;
 }
 COMMIT;

Payment(WID, DID, CID, AMOUNT):
    UPDATE Warehouse
       SET YTD = YTD + :AMOUNT
     WHERE WarehouseID = :WID;
    
    UPDATE District
       SET YTD = YTD + :AMOUNT
     WHERE WarehouseID = :WID AND DistrictID = :DID;
    
    UPDATE Customer
       SET Balance = Balance + :AMOUNT
     WHERE WarehouseID = :WID AND DistrictID = :DID
           AND CustomerID = :CID;
 COMMIT;
\end{verbatim}
\end{minipage}   
\begin{minipage}[t]{\textwidth/2}
\begin{verbatim}
OrderStatus(WID, DID, CID, OID):
    SELECT Info INTO :cinfo, Balance INTO :balance
      FROM Customer
     WHERE WarehouseID = :WID AND DistrictID = :DID
           AND CustomerID = :CID;
     
    SELECT Status INTO :ostatus
      FROM Orders
     WHERE WarehouseID = :WID AND DistrictID = :DID
       AND OrderID = :OID;
     
    SELECT DeliveryInfo INTO :delinfos, Quantity INTO :quantities
      FROM OrderLine
     WHERE WarehouseID = :WID AND DistrictID = :DID
           AND OrderID = :OID AND;
 COMMIT;

Delivery(WID, DID, CID, OID, PRICE):
    UPDATE Orders
       SET Status = 'delivered'
     WHERE WarehouseID = :WID AND DistrictID = :DID
           AND OrderID = :OID;

    UPDATE OrderLine
       SET DeliveryInfo = 'delivered'
     WHERE WarehouseID = :WID AND DistrictID = :DID
           AND OrderID = :OID;

    UPDATE Customer
       SET Balance = Balance - :PRICE
     WHERE WarehouseID = :WID AND DistrictID = :DID
           AND CustomerID = :CID;
 COMMIT;

StockLevel(WID, IID):
   SELECT quantity INTO :quantity
     FROM Stock
    WHERE WarehouseID = :WID and ItemID = :IID;
 COMMIT;
\end{verbatim}
\end{minipage}
\caption{\tpcckv{} SQL Transaction Templates.}
\label{fig:tpcc:SQL}
\end{figure*}

\clearpage 

\section{Promoted Benchmarks}

This section provides a detailed overview of all required changes in both benchmarks to obtain robustness (cf. Section~\ref{sec:promotion}). For SmallBank, the required changes for RC+P(CS) and RC-\cite{DBLP:conf/aiccsa/AlomariF15} are presented respectively in Figure~\ref{fig:smallbank-rc_prom} and Figure~\ref{fig:smallbank-rc_ice}. For \tpcckv, the changes in each \shortptrans{} for RC+P(Attr), RC+P(Tup) and RC-\cite{DBLP:conf/aiccsa/AlomariF15} are given in Figure~\ref{fig:tpcc-RC_attr}, Figure~\ref{fig:tpcc-RC_tup} and Figure~\ref{fig:tpcc-RC_ICE}, respectively.

\begin{figure}[h]
\begin{minipage}[c]{0.45\linewidth}
    \centering\small

\begin{minipage}[t]{0.5\textwidth-2ex}
\Balance: 
\[
\begin{array}{l}
\R[]{\vx: \Account\ListAttr{N, C}}\\
\UP[]{\vy: \Savings\ListAttr{C, B}\ListAttr{B}}\\
\UP[]{\vz: \Checking\ListAttr{C, B}\ListAttr{B}}\\
\end{array}
\]
DepositChecking: 
\[
\begin{array}{l}
\R[]{\vx: \Account\ListAttr{N, C}}\\
\UP[]{\vz: \Checking\ListAttr{C, B}\ListAttr{B}}\\
\end{array}
\]
TransactSavings: 
\[
\begin{array}{l}
\R[]{\vx: \Account\ListAttr{N, C}}\\
\UP[]{\vy: \Savings\ListAttr{C, B}\ListAttr{B}}\\
\end{array}
\]
\end{minipage}%
\hfill%
\begin{minipage}[t]{0.50\textwidth-2ex}
Amalgamate: 
\[
\begin{array}{l}
\R[]{\vx_1: \Account\ListAttr{N, C}}\\
\R[]{\vx_2: \Account\ListAttr{N, C}}\\
\UP[]{\vy_1: \Savings\ListAttr{C, B}\ListAttr{B}}\\
\UP[]{\vz_1: \Checking\ListAttr{C, B}\ListAttr{B}}\\
\UP[]{\vz_2: \Checking\ListAttr{C, B}\ListAttr{B}}\\
\end{array}
\]
WriteCheck: 
\[
\begin{array}{l}
\R[]{\vx: \Account\ListAttr{N, C}}\\
\UP[]{\vy: \Savings\ListAttr{C, B}\ListAttr{B}}\\
\UP[]{\vz: \Checking\ListAttr{C, B}\ListAttr{B}}\\
\UP[]{\vz: \Checking\ListAttr{C, B}\ListAttr{B}}\\
\end{array}
\]
\end{minipage}

\removespacetocaption

    \caption{SmallBank \ptranss{} for RC+P(CS). In particular, all $\myR$-operations on \Balance and \Savings are promoted to $\myUP$-operations.
    }
    \label{fig:smallbank-rc_prom}

\end{minipage}%
\hfill%
\begin{minipage}[c]{0.45\linewidth}
    \centering\small

\begin{minipage}[t]{0.5\textwidth-2ex}
\Balance: 
\[
\begin{array}{l}
\W[]{\myvv: \Conflict\ListAttr{N}}\\
\R[]{\vx: \Account\ListAttr{N, C}}\\
\R[]{\vy: \Savings\ListAttr{C, B}}\\
\R[]{\vz: \Checking\ListAttr{C, B}}\\
\end{array}
\]
DepositChecking: 
\[
\begin{array}{l}
\W[]{\myvv: \Conflict\ListAttr{N}}\\
\R[]{\vx: \Account\ListAttr{N, C}}\\
\UP[]{\vz: \Checking\ListAttr{C, B}\ListAttr{B}}\\
\end{array}
\]
TransactSavings: 
\[
\begin{array}{l}
\W[]{\myvv: \Conflict\ListAttr{N}}\\
\R[]{\vx: \Account\ListAttr{N, C}}\\
\UP[]{\vy: \Savings\ListAttr{C, B}\ListAttr{B}}\\
\end{array}
\]
\end{minipage}%
\hfill%
\begin{minipage}[t]{0.50\textwidth-2ex}
Amalgamate: 
\[
\begin{array}{l}
\W[]{\myvv_1: \Conflict\ListAttr{N}}\\
\W[]{\myvv_2: \Conflict\ListAttr{N}}\\
\R[]{\vx_1: \Account\ListAttr{N, C}}\\
\R[]{\vx_2: \Account\ListAttr{N, C}}\\
\UP[]{\vy_1: \Savings\ListAttr{C, B}\ListAttr{B}}\\
\UP[]{\vz_1: \Checking\ListAttr{C, B}\ListAttr{B}}\\
\UP[]{\vz_2: \Checking\ListAttr{C, B}\ListAttr{B}}\\
\end{array}
\]
WriteCheck: 
\[
\begin{array}{l}
\W[]{\myvv: \Conflict\ListAttr{N}}\\
\R[]{\vx: \Account\ListAttr{N, C}}\\
\R[]{\vy: \Savings\ListAttr{C, B}}\\
\R[]{\vz: \Checking\ListAttr{C, B}}\\
\UP[]{\vz: \Checking\ListAttr{C, B}\ListAttr{B}}\\
\end{array}
\]
\end{minipage}

\removespacetocaption

    \caption{SmallBank \shortptranss{} for RC-\cite{DBLP:conf/aiccsa/AlomariF15}. Instances that assign the same tuple to variable $\vx$ (respectively $\vx_i$ in \Amalgamate), should assign the same tuple to variable $\myvv$ (respectively $\myvv_i$ in \Amalgamate) as well.
    }
    \label{fig:smallbank-rc_ice}

\end{minipage}

\end{figure}

\begin{figure}[h]
\begin{minipage}[c]{0.99\linewidth}
    \centering \small

\begin{minipage}[t]{0.37\textwidth-2ex}    
NewOrder:
\[
\begin{array}{l}
\R[]{\vx: \Warehouse\ListAttr{W, Inf}}\\
\UP[]{\vy: \District\ListAttr{W, D, Inf, N}\ListAttr{N}}\\
\R[]{\vz: \Customer\ListAttr{W, D, C, Inf}}\\
\W[]{\myvs: \Order\ListAttr{W, D O, C, Sta}}\\
\UP[]{\vt_1: \Stock\ListAttr{W, I, Qua}\ListAttr{Qua}}\\
\W[]{\myvv_1: \OrderLine\ListAttr{W, D, O, OL, I, Del, Qua}}\\
\UP[]{\vt_2: \Stock\ListAttr{W, I, Qua}\ListAttr{Qua}}\\
\W[]{\myvv_2: \OrderLine\ListAttr{W, D, O, OL, I, Del, Qua}}\\
\end{array}
\]
\end{minipage}
\begin{minipage}[t]{0.63\textwidth}
\begin{minipage}[t]{\textwidth/2-1ex}    
Delivery:
\[
\begin{array}{l}
\UP[]{\myvs: \Order\ListAttr{W, D, O}\ListAttr{Sta}}\\
\UP[]{\myvv_1: \OrderLine\ListAttr{W, D, O, OL, Del}\ListAttr{Del}}\\
\UP[]{\myvv_2: \OrderLine\ListAttr{W, D, O, OL, Del}\ListAttr{Del}}\\
\UP[]{\vz: \Customer\ListAttr{W, D, C, Bal}\ListAttr{Bal}}\\
\end{array}
\]
\end{minipage}
\begin{minipage}[t]{\textwidth/2-1ex}    
OrderStatus:
\[
\begin{array}{l}
\UP[]{\vz: \Customer\ListAttr{W, D, C, Inf, Bal}\ListAttr{Bal}}\\
\UP[]{\myvs: \Order\ListAttr{W, D, O, C, Sta}\ListAttr{Sta}}\\
\UP[]{\myvv_1: \OrderLine\ListAttr{W, D, O, OL, I, Del, Qua}\ListAttr{Del}}\\
\UP[]{\myvv_2: \OrderLine\ListAttr{W, D, O, OL, I, Del, Qua}\ListAttr{Del}}\\
\end{array}
\]
\end{minipage}

\medskip

\begin{minipage}[t]{\textwidth/2-1ex}    
Payment:
\[
\begin{array}{l}
\UP[]{\vx: \Warehouse\ListAttr{W, YTD}\ListAttr{YTD}}\\
\UP[]{\vy: \District\ListAttr{W, D, YTD}\ListAttr{YTD}}\\
\UP[]{\vz: \Customer\ListAttr{W, D, C, Bal}\ListAttr{Bal}}\\
\end{array}
\]
\end{minipage}
\begin{minipage}[t]{\textwidth/2-1ex}
StockLevel:
\[
\begin{array}{l}
\R[]{\vt: \Stock\ListAttr{W, I, Qua}}\\
\end{array}
\]
\end{minipage}
\end{minipage}

\removespacetocaption

    \caption{\tpcckv{} \ptranss{} for RC+P(Attr). In particular, all operations in OrderStatus are promoted. Attribute names are abbreviated.}
    \label{fig:tpcc-RC_attr}
\end{minipage}
\end{figure}

\begin{figure}[h]
\begin{minipage}[c]{0.99\linewidth}
    \centering \small

\begin{minipage}[t]{0.37\textwidth-2ex}    
NewOrder:
\[
\begin{array}{l}
\UP[]{\vx: \Warehouse\ListAttr{W, Inf}\ListAttr{Inf}}\\
\UP[]{\vy: \District\ListAttr{W, D, Inf, N}\ListAttr{N}}\\
\UP[]{\vz: \Customer\ListAttr{W, D, C, Inf}\ListAttr{Inf}}\\
\W[]{\myvs: \Order\ListAttr{W, D O, C, Sta}}\\
\UP[]{\vt_1: \Stock\ListAttr{W, I, Qua}\ListAttr{Qua}}\\
\W[]{\myvv_1: \OrderLine\ListAttr{W, D, O, OL, I, Del, Qua}}\\
\UP[]{\vt_2: \Stock\ListAttr{W, I, Qua}\ListAttr{Qua}}\\
\W[]{\myvv_2: \OrderLine\ListAttr{W, D, O, OL, I, Del, Qua}}\\
\end{array}
\]
\end{minipage}
\begin{minipage}[t]{0.63\textwidth}
\begin{minipage}[t]{\textwidth/2-1ex}    
Delivery:
\[
\begin{array}{l}
\UP[]{\myvs: \Order\ListAttr{W, D, O}\ListAttr{Sta}}\\
\UP[]{\myvv_1: \OrderLine\ListAttr{W, D, O, OL, Del}\ListAttr{Del}}\\
\UP[]{\myvv_2: \OrderLine\ListAttr{W, D, O, OL, Del}\ListAttr{Del}}\\
\UP[]{\vz: \Customer\ListAttr{W, D, C, Bal}\ListAttr{Bal}}\\
\end{array}
\]
\end{minipage}
\begin{minipage}[t]{\textwidth/2-1ex}    
OrderStatus:
\[
\begin{array}{l}
\UP[]{\vz: \Customer\ListAttr{W, D, C, Inf, Bal}\ListAttr{Bal}}\\
\UP[]{\myvs: \Order\ListAttr{W, D, O, C, Sta}\ListAttr{Sta}}\\
\UP[]{\myvv_1: \OrderLine\ListAttr{W, D, O, OL, I, Del, Qua}\ListAttr{Del}}\\
\UP[]{\myvv_2: \OrderLine\ListAttr{W, D, O, OL, I, Del, Qua}\ListAttr{Del}}\\
\end{array}
\]
\end{minipage}

\medskip

\begin{minipage}[t]{\textwidth/2-1ex}    
Payment:
\[
\begin{array}{l}
\UP[]{\vx: \Warehouse\ListAttr{W, YTD}\ListAttr{YTD}}\\
\UP[]{\vy: \District\ListAttr{W, D, YTD}\ListAttr{YTD}}\\
\UP[]{\vz: \Customer\ListAttr{W, D, C, Bal}\ListAttr{Bal}}\\
\end{array}
\]
\end{minipage}
\begin{minipage}[t]{\textwidth/2-1ex}
StockLevel:
\[
\begin{array}{l}
\R[]{\vt: \Stock\ListAttr{W, I, Qua}}\\
\end{array}
\]
\end{minipage}
\end{minipage}

\removespacetocaption

    \caption{\tpcckv{} \ptranss{} for RC+P(Tup). In particular, all \myR-operations on tuples over the Warehouse-, Customer- Order- and OrderLine-relations need to be promoted to \myUP-operations, requiring changes in both NewOrder and OrderStatus. Attribute names are abbreviated.}
    \label{fig:tpcc-RC_tup}
\end{minipage}
\end{figure}

\newcommand{\myvp}{\mathtt{P}}
\newcommand{\myvq}{\mathtt{Q}}

\begin{figure}[h]
\begin{minipage}[c]{0.99\linewidth}
    \centering \small

\begin{minipage}[t]{0.37\textwidth-2ex}    
NewOrder:
\[
\begin{array}{l}
\W[]{\myvp: \Conflict\ListAttr{W, D, C}}\\
\W[]{\myvq_1: \Conflict\ListAttr{W, I}}\\
\W[]{\myvq_2: \Conflict\ListAttr{W, I}}\\
\R[]{\vx: \Warehouse\ListAttr{W, Inf}}\\
\UP[]{\vy: \District\ListAttr{W, D, Inf, N}\ListAttr{N}}\\
\R[]{\vz: \Customer\ListAttr{W, D, C, Inf}}\\
\W[]{\myvs: \Order\ListAttr{W, D O, C, Sta}}\\
\UP[]{\vt_1: \Stock\ListAttr{W, I, Qua}\ListAttr{Qua}}\\
\W[]{\myvv_1: \OrderLine\ListAttr{W, D, O, OL, I, Del, Qua}}\\
\UP[]{\vt_2: \Stock\ListAttr{W, I, Qua}\ListAttr{Qua}}\\
\W[]{\myvv_2: \OrderLine\ListAttr{W, D, O, OL, I, Del, Qua}}\\
\end{array}
\]
\end{minipage}
\begin{minipage}[t]{0.63\textwidth}
\begin{minipage}[t]{\textwidth/2-1ex}    
Delivery:
\[
\begin{array}{l}
\W[]{\myvp: \Conflict\ListAttr{W, D, C}}\\
\UP[]{\myvs: \Order\ListAttr{W, D, O}\ListAttr{Sta}}\\
\UP[]{\myvv_1: \OrderLine\ListAttr{W, D, O, OL, Del}\ListAttr{Del}}\\
\UP[]{\myvv_2: \OrderLine\ListAttr{W, D, O, OL, Del}\ListAttr{Del}}\\
\UP[]{\vz: \Customer\ListAttr{W, D, C, Bal}\ListAttr{Bal}}\\
\end{array}
\]
\end{minipage}
\begin{minipage}[t]{\textwidth/2-1ex}    
OrderStatus:
\[
\begin{array}{l}
\W[]{\myvp: \Conflict\ListAttr{W, D, C}}\\
\R[]{\vz: \Customer\ListAttr{W, D, C, Inf, Bal}}\\
\R[]{\myvs: \Order\ListAttr{W, D, O, C, Sta}}\\
\R[]{\myvv_1: \OrderLine\ListAttr{W, D, O, OL, I, Del, Qua}}\\
\R[]{\myvv_2: \OrderLine\ListAttr{W, D, O, OL, I, Del, Qua}}\\
\end{array}
\]
\end{minipage}

\medskip

\begin{minipage}[t]{\textwidth/2-1ex}    
Payment:
\[
\begin{array}{l}
\W[]{\myvp: \Conflict\ListAttr{W, D, C}}\\
\UP[]{\vx: \Warehouse\ListAttr{W, YTD}\ListAttr{YTD}}\\
\UP[]{\vy: \District\ListAttr{W, D, YTD}\ListAttr{YTD}}\\
\UP[]{\vz: \Customer\ListAttr{W, D, C, Bal}\ListAttr{Bal}}\\
\end{array}
\]
\end{minipage}
\begin{minipage}[t]{\textwidth/2-1ex}
StockLevel:
\[
\begin{array}{l}
\W[]{\myvq: \Conflict\ListAttr{W, I}}\\
\R[]{\vt: \Stock\ListAttr{W, I, Qua}}\\
\end{array}
\]
\end{minipage}
\end{minipage}

\removespacetocaption

    \caption{\tpcckv{} \ptranss{} for RC-\cite{DBLP:conf/aiccsa/AlomariF15}. Instances that assign the same tuple to variable $\vz$ (i.e. a tuple of type \Customer), should assign the same tuple to variable $\myvp$ as well. Analogously, Instances that assign the same tuple to variable $\vt$ (i.e. a tuple of type \Stock), should assign the same tuple to variable $\myvq$ as well. Attribute names are abbreviated.}
    \label{fig:tpcc-RC_ICE}
\end{minipage}
\end{figure}

\clearpage 
\section{Proofs for Section~\ref{sec:robustness-transactions}}
\subsection{Proof for Theorem~\ref{theo:characterization:split-shedules}}

    \emph{(\ref{theo:char:split1} $\to$ \ref{theo:char:split3})} Assume $\transset$ is not robust against $\mvrc$. Then there is a schedule $\schedule$ over $\transset$ allowed under \mvrc with a cycle $C$ in $\cg{\schedule}$.
    We next construct a multiversion split schedule $\schedule'$ based on a sequence $C'$ of conflict quadruples as defined in Definition~\ref{def:mvsplitschedule}. Without loss of generality, we assume that $C$ is a minimal cycle in $\cg{\schedule}$. Let $\trans[1], \trans[2], \ldots \trans[m]$ be the transactions in the order that they appear in $C$, such that $\trans[2]$ is the transaction (among those in $C$) that commits first in $\schedule$.
    In other words, for every transaction $\trans[i] \in C$ different from $\trans[2]$, $\CT[2] <_\schedule \CT[i]$.
    Let $$C' = (T_1, b_1, a_2, T_2), (T_2, b_2, a_3, T_3), \ldots, (T_m,b_m, a_1, T_1)$$ be a sequence of conflict quadruples where for each conflict quadruple $(\trans[i], b_i, a_{i+1}, \trans[i+1])$, we have that $a_{i+1}$ depends on $b_i$ in $\schedule$, that is, $b_i \to_s a_{i+1}$.
    Notice that, since there is an edge from $\trans[i]$ to $\trans[i+1]$ in $\cg{\schedule}$, we can always find such a pair of operations. We take $T_{m+1}$ to be $T_1$.
    We now show that the multiversion split schedule $\schedule'$ based on $C'$ satisfies the conditions in Definition~\ref{def:mvsplitschedule}. 
  
  \noindent
  (Condition \ref{def:mvsplitschedule}.\ref{c:3})
    We assumed that $\CT[2] <_\schedule \CT[1]$. As $\schedule$ is allowed under \mvrc, the existence of a wr- or a ww-dependency from $b_1$ to $a_2$ would imply that $\CT[1] <_\schedule a_2 <_\schedule \CT[2]$. 
    Therefore, $b_1\to_s a_2$ is an rw-antidependency from $b_1$ to $a_2$.
    As a result, {$b_1 <_s \CT[2]$, and}
    $b_1$ and $a_2$ are {rw-conflicting}.
    
    \noindent
    (Condition \ref{def:mvsplitschedule}.\ref{c:1})
    Next, we prove that there is no {ww-conflict between a write operation in $\prefix{\trans[1]}{b_1}$ and a write operation in any of the transactions $\trans[2], \ldots, \trans[m]$.}
    Towards a contradiction, assume that there is a transaction $\trans[i]$ with a write operation $c_i$, {ww-}conflicting with a write operation $c_1$ in $\prefix{\trans[1]}{b_1}$.
    Notice that $c_1 <_\schedule c_i$, as otherwise $c_i <_\schedule \CT[i] <_\schedule c_1 \leq_\schedule b_1 <_\schedule \CT[2]$, contradicting our assumption that $\trans[2]$ commits first.
    Moreover, $\trans[1]$ commits before $c_i$ in $\schedule$, as otherwise $c_1$ and $c_i$ would imply a dirty write.
    Since $\dependson{c_1}{c_i}$ and $C$ is a minimal cycle in $\cg{\schedule}$, it immediately follows that $\trans[i] = \trans[2]$.
    But then $\trans[1]$ commits before $\trans[2]$ in $\schedule$, leading to the desired contradiction.
    
    \noindent
    (Condition \ref{def:mvsplitschedule}.\ref{c:2})
    The last condition to verify is that $b_1 <_{\trans[1]} a_1$ or {$b_m$ and $a_1$ are rw-conflicting}. Towards a contradiction, assume that $a_1 \leq_{\trans[1]} b_1$ \emph{and} {$b_m$ and $a_1$ are not rw-conflicting}.
    We argued above that $\prefix{\trans[1]}{b_1}$ cannot contain a write operation {ww-}conflicting with a write operation in $\trans[m]$. {Therefore, $b_m$ and $a_1$ must be wr-conflicting, and $\dependson{b_m}{a_1}$ is a wr-dependency.}
    Since $\schedule$ is allowed under \mvrc, it follows that $b_m <_\schedule \CT[m] <_\schedule a_1 \leq_\schedule b_1 <_\schedule \CT[2]$, contradicting our assumption that $\trans[2]$ commits first in $\schedule$.
    
    \medskip
    \emph{(\ref{theo:char:split3} $\to$ \ref{theo:char:split1})}
    {
    Let $\schedule$ be a multiversion split schedule for $\transset$ based on $C = (T_1, b_1, a_2, T_2), (T_2, b_2, a_3, T_3), \ldots, (T_m,b_m, a_1, T_1)$ consisting of conflicting quadruples. We can assume that $\schedule$ is read-last-committed. Otherwise,  choosing an appropriate {version order $\ll_\schedule$ and} version function $v_\schedule$. 
    Notice that {$\ll_\schedule$ and} $v_\schedule$ have no influence on the conflict quadruples in $C$. 
    
    First, we show that schedule $\schedule$ is  allowed under \mvrc (c.f.~Definition~\ref{def:isolationlevels}).  We only need to show that $\schedule$ exhibits no dirty writes.  For this, let $b_i$ and $a_j$ be two arbitrary {ww-conflicting} operations in $\schedule$ in two
    different transactions $T_i$ and $T_j$, with $b_i <_\schedule a_j$. If $i > 1$ or $j >m$, it follows from the definition of multiversion split schedule that $b_i <_\schedule \CT[i] <_\schedule a_j$. For $i=1$ and $j \le m$, dirty writes are forbidden by condition (\ref{c:1}) of Definition~\ref{def:mvsplitschedule}.
    }
    
    {It remains} to show that $\schedule$ is not conflict serializable.
    To this end, we argue that for each conflicting quadruple $(\trans[i], b_i, a_j, \trans[j])$ in $C$, the operation $a_j$ depends on $b_i$ in $\schedule$, that is, $b_i\to_s a_j$, thereby showing that the transactions in $C$ represent a cycle in $\cg{\schedule}$.
    If both $\trans[i]$ and $\trans[j]$ are different from $\trans[1]$, then $b_i <_\schedule \CT[i] <_\schedule a_j$ by construction of $\schedule$.
    Since $\schedule$ is read-last-committed, it immediately follows that $\dependson{b_i}{a_j}$, independent of whether $a_j$ and $b_i$ are {rw-, wr- or ww-conflicting}.
    If $\trans[i] = \trans[1]$, then {$b_i = b_1$ and $a_j = a_2$ are rw-conflicting} by Definition~\ref{def:mvsplitschedule}.
    Since $b_1 <_\schedule a_2$ implies that $op_0=v_\schedule(b_1) {\ll_\schedule} a_2$, we obtain an rw-antidependency from $b_1$ to $a_2$.
    
    Lastly, if $\trans[j] = \trans[1]$, then $a_j = a_1$ and $b_i = b_m$. 
    By Definition~\ref{def:mvsplitschedule}, $b_1 <_\schedule a_1$ or {$b_m$ and $a_1$ are rw-conflicting}.
    In the former case, we have that $b_m <_\schedule \CT[m] <_\schedule a_1$, again implying that $\dependson{b_m}{a_1}$.
    {In the latter case, $b_m$ is a read operation on a tuple $\x$ where $v_\schedule(b_m)$ is either $\sstart$ or the write operation on $\x$ that committed last before $b_m$. In both cases, $v_\schedule(b_m) \ll_\schedule a_1$, since $b_m <_\schedule \CT[1]$ and $\ll_\schedule$ coincides with the commit order in $<_\schedule$. The rw-antidependency from $b_m$ to $a_1$ now follows immediately.}

\subsection{Proof for Theorem~\ref{theo:robustness-transactions:complexity}}

Intuitively, Algorithm~\ref{alg:transaction_robustness} applies Theorem~\ref{theo:characterization:split-shedules} and checks whether a multiversion split schedule over $\transset$ exists.
We first argue that Algorithm~\ref{alg:transaction_robustness} is correct, followed by the complexity analysis.
 
\paragraph{Correctness} Assume $\transset$ is not robust against \MVRC. By Theorem~\ref{theo:characterization:split-shedules}, a multiversion split schedule $\schedule$ for $\transset$ based on some $$C = (T_1', b_1', a_2', T_2'), (T_2', b_2', a_3', T_3'), \ldots, (T_m',b_m', a_1', T_1')$$ exists. We argue that Algorithm~\ref{alg:transaction_robustness} returns False.
To this end, assume $\trans[1]$ and $b_1$ in Algorithm~\ref{alg:transaction_robustness} are instantiated by $\trans[1]'$ and $b_1'$, respectively.
Then, there is a path from $\trans[2]'$ to $\trans[m]'$ in $\prefixfreegraph(b_1,T_1, \transset \setminus \{\trans[1]\})$, witnessed by the conflicts in $C$.
Indeed, by Definition~\ref{def:mvsplitschedule}, transactions $\trans[2]', \trans[3]',\ldots,\trans[m]'$ are not ww-conflicting with $\prefix{T_1}{b_1}$.
As a result, $(\trans[2], \trans[m])$ is in $TC$ if we instantiate $\trans[2]$ and $\trans[m]$ by $\trans[2]'$ and $\trans[m]'$, respectively.
If we take $a_1 = a_1'$, $a_2 = a_2'$ and $b_m = b_m'$, the condition in the if-test of Algorithm~\ref{alg:transaction_robustness} is immediate by Definition~\ref{def:mvsplitschedule}, implying that the algorithm correctly returns False.

It remains to argue that Algorithm~\ref{alg:transaction_robustness} returns True when $\transset$ is robust against $\mvrc$.
Towards a contradiction, assume Algorithm~\ref{alg:transaction_robustness} returns False instead, witnessed by transactions $\trans[1], \trans[2], \trans[m] \in \transset$ and operations $b_1, a_1 \in \trans[1]$, $a_2 \in \trans[2]$ and $b_m \in \trans[m]$.
Let $$C = (T_2, b_2, a_3, T_3), (T_3, b_3, a_4, T_4), \ldots, (\trans[m-1],b_{m-1}, a_m, T_m)$$ be the sequence of conflict quadruples witnessing the path from $\trans[2]$ to $\trans[m]$ in $\prefixfreegraph(b_1,T_1, \transset \setminus \{\trans[1]\})$ (notice that $C$ can be the empty sequence in the special case that $\trans[2] = \trans[m]$).
Then, the multiversion split schedule $\schedule$ for $\transset$ based on $$C' = (T_1, b_1, a_2, T_2), C, (\trans[m],b_{m}, a_1, T_1)$$ is a valid multiversion split schedule.
Indeed, the transactions $\trans[2], \ldots, \trans[m]$ do not contain a ww-conflict with an operation in $\prefix{T_1}{b_1}$ by definition of $\prefixfreegraph(b_1,T_1, \transset)$, and the remaining conditions of Definition~\ref{def:mvsplitschedule} are immediate by the if-test in Algorithm~\ref{alg:transaction_robustness}.
According to Theorem~\ref{theo:characterization:split-shedules}, this schedule $\schedule$ contradicts our assumption that $\transset$ is robust against \MVRC.
 
\paragraph{Complexity} Let $k$ be the total number of operations in $\transset$ and $\ell$ the maximum number of operations in a transaction in $\transset$. The two outer for-loops in Algorithm~\ref{alg:transaction_robustness} iterate over all read operations in $\transset$, so there are at most $k$ iterations. Each such iteration consists of three steps: constructing the prefix-conflict-free-graph $G$, computing the reflexive-transitive-closure $TC$ over $G$, and checking a specific condition over the pairs of transactions in $TC$.

The construction of $G$ requires us to verify for each transaction in $\transset \setminus \{\trans[1]\}$ whether it has an operation that is ww-conflicting with an operation in $\prefix{\trans[1]}{b_1}$. We add each such transaction as a node to $G$, and add edges to other transactions in $G$ if they have conflicting operations. Both parts can be done in time ${O}(\ell^2)$. The computation of $TC$ over $G$ can be achieved in time ${O}(|\transset|^3)$ by an application of the Floyd-Warshall algorithm.

The third step checks a specific condition over pairs of transactions in $TC$. Worst case, $TC$ is the complete graph, and the condition will iterate over all triples of operations $(a_1, a_2, b_m)$, with $a_1$ an operation in $\trans[1]$, and $a_2$ and $b_m$ operations in two other transactions occurring in $G$. Therefore, this third step can be done in time ${O}(\ell . k^2)$.

By combining the results above, and since $l \leq k$, we get that
Algorithm~\ref{alg:transaction_robustness} decides whether $\transset$ is robust against \MVRC in
time ${O}(\text{max}\{k.|\transset|^3, k^3.\ell\})$.


\section{Proofs for Section~\ref{sec:robustness-templates}}

\subsection{Proof for Theorem~\ref{theo:ptime:noconstraints}}

{First, we show in Lemma~\ref{lemma:noconstraints:limitedtuples} that for each $D$, we only need to consider one mapping $\bar \mu$ of a canonical form
that partitions the mapped variables into three or four disjoint sets: all variables connected to $o_1$ (in a way to be made precise next), all variables connected to $p_1$ (when $o_1$ and $p_1$ are themselves connected, the two sets coincide), all variables in $\tau_1$ not in the previous two sets, and all remaining variables in all other templates. 
Furthermore, at most four different tuples for each variable type are needed.}
We need the following notion: a variable $\vx$ in $\tau_i$ is \emph{connected} to an operation $o$ in $\tau_j$ in $D$ if either $i=j$ and $\vx$ is the variable of operation $o$; there is a potentially conflicting quadruple $(\tau_i, o', p', \tau_j)$ with $o'$ having the same variable as $o$ and $p'$ having variable $\vx$; or $\vx$ is connected to an operation whose variable is connected to $o$.

We  encode the choice of tuples for variables through (total) functions $c:\schRel\to\objects$ that we call \emph{type mappings} and which map a relation onto a particular tuple of that relation's type.
The canonical mapping $\canmu$ for $D = (\tau_1, o_1, \allowbreak p_2,\allowbreak \tau_2), \ldots, (\tau_m, o_m, p_1, \tau_1)$ is defined relative to four type mappings $c_1$, $c_2$, $c_3$, and $c_4$, whose ranges do not matter as long as they are all different. Then $\canmu$ consists of the following set of $m$ variable mappings $\mu_i$ for occurrences $\tau_i$ of \ptrans{} in $D$. For $\mu_1$,
\[
    \mu_1(\vx) = \left\{
        \begin{array}{ll}
            c_1(\type{\vx}) & \text{if $\vx$ is the variable of $o_1$,} \\
            c_2(\type{\vx}) & \text{if $\vx$ is connected to $p_1$ and not to $o_1$,} \\ 
            c_4(\type{\vx}) & \text{otherwise.} 
        \end{array}\right.
\]
For every $1<i\leq m$, 
\[
    \mu_i(\vx) = \left\{
        \begin{array}{ll}
            c_1(\type{\vx}) & \text{if $\vx$ is connected to $o_1$}, \\
            c_2(\type{\vx}) & \text{if $\vx$ is connected to $p_1$ and not to $o_1$}, \\
            c_3(\type{\vx}) & \text{otherwise.} 
        \end{array}\right.
\]

\begin{lemma}\label{lemma:noconstraints:limitedtuples}
    Let $\workload$ be a set of \ptranss. The following are equivalent:
    \begin{itemize}
        \item $\workload$ is not robust against \mvrc;
        \item there is a multiversion split schedule $\schedule$ for some $C$ over a set of transactions $\transset$ consistent with $\workload$ and a database $\db$, where $C$ is induced by a sequence of potentially conflicting quadruples $D$ over $\workload$ and its canonical variable mapping.
    \end{itemize}
    Furthermore, for every sequence of potentially conflicting quadruples $D$ over $\workload$ and every variable mapping $\bar{\mu}$ for $D$, there is a database $\db$ where the transactions in the induced sequence of conflicting quadruples are consistent with.
\end{lemma}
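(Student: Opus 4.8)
\smallskip\noindent\textbf{Proof plan.}
One direction is immediate: if $\schedule$ is a multiversion split schedule for some $C$ over a set $\transset$ consistent with $\workload$ and $\db$, then Theorem~\ref{theo:characterization:split-shedules} (the implication \ref{theo:char:split3}~$\to$~\ref{theo:char:split1}) gives that $\transset$, and hence $\workload$, is not robust against \mvrc; the canonical‑mapping restriction on the witness plays no role here. The ``furthermore'' clause is a direct construction: given $D=(\tau_1,o_1,p_2,\tau_2),\dots,(\tau_m,o_m,p_1,\tau_1)$ and a variable mapping $\bar\mu=(\mu_1,\dots,\mu_m)$, set $R^{\db}:=\{\mu_j(\vx)\mid 1\le j\le m,\ \vx\text{ a variable of }\tau_j\text{ with }\type{\vx}=R\}$ for each $R\in\schRel$. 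Each $R^{\db}$ is finite, so $\db$ is a database, and by construction every $\mu_j$ is a variable assignment for $\db$, so the set of transactions occurring in $\bar\mu(D)$ is consistent with $\workload$ and $\db$.

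For the converse, assume $\workload$ is not robust against \mvrc. Then some $\transset$ consistent with $\workload$ and some $\db$ is not robust, so by Theorem~\ref{theo:characterization:split-shedules} there is a multiversion split schedule for $\transset$ based on a sequence $C=(T_1,b_1,a_2,T_2),\dots,(T_m,b_m,a_1,T_1)$ of conflict quadruples, which we may take to be a minimal cycle. Write $T_i=\mu_i(\tau_i)$ with $\tau_i\in\workload$. Since instantiation by a variable assignment changes neither the read set nor the write set of an operation, and since conflicting operations lie on a common tuple and are therefore over variables of the same type, each conflict quadruple $(T_i,b_i,a_{i+1},T_{i+1})$ lifts to a potentially conflicting quadruple $(\tau_i,o_i,p_{i+1},\tau_{i+1})$ with $b_i=\mu_i(o_i)$ and $a_{i+1}=\mu_{i+1}(p_{i+1})$, preserving the kind (ww/wr/rw) of the conflict. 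This produces a sequence $D$ of potentially conflicting quadruples with $\bar\mu(D)=C$ for $\bar\mu=(\mu_1,\dots,\mu_m)$. The plan is then to show that the canonical mapping $\canmu$ for $D$ (built from four type mappings $c_1,c_2,c_3,c_4$ with pairwise disjoint ranges) again yields a multiversion split schedule based on $C':=\canmu(D)$; combined with the ``furthermore'' clause applied to $D$ and $\canmu$ this supplies the database and closes the argument.

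What must be checked is that $C'$ satisfies Conditions~(\ref{c:1})--(\ref{c:3}) of Definition~\ref{def:mvsplitschedule}. Conditions~(\ref{c:2}) and~(\ref{c:3}) depend only on the order of operations inside $\tau_1$ (the same for $C$ and $C'$), on intersections of read/write sets (unchanged by instantiation), and on certain pairs of operations being mapped to a common tuple. The latter holds because $\canmu$ sends every variable connected to $o_1$ to the fixed tuple $c_1(\cdot)$ of its type and every variable connected to $p_1$ but not to $o_1$ to $c_2(\cdot)$: the variable of $p_2$ is connected to $o_1$ through the quadruple $(\tau_1,o_1,p_2,\tau_2)$ and the variable of $o_m$ is connected to $p_1$ through $(\tau_m,o_m,p_1,\tau_1)$, so Condition~(\ref{c:3}) transfers from $C$ and, for Condition~(\ref{c:2}), either $o_1<_{\tau_1}p_1$ (a fact of $\tau_1$ inherited from $b_1<_{T_1}a_1$) or the rw‑conflict of $b_m$ with $a_1$ reappears between the $\canmu$‑images; the subcase in which $o_1$ and $p_1$ are themselves connected is handled by collapsing the $c_1$‑ and $c_2$‑classes. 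One also checks, using minimality of $C$ (and shortening $D$ should $\canmu$ fuse two occurrences into one transaction), that $C'$ remains a legal basis with at most two quadruples per transaction.

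The crux is Condition~(\ref{c:1}): $C'$ may contain no write operation in $\prefix{\tau_1}{o_1}$ that is ww‑conflicting with a write operation of the $j$‑th transaction of $C'$ for $2\le j\le m$. Suppose for contradiction that $c\in\prefix{\tau_1}{o_1}$ and $c'\in\tau_j$ are write operations with $\WriteSet{c}\cap\WriteSet{c'}\neq\emptyset$ (preserved by instantiation) whose variables $\canmu$ maps to a common tuple. Since the four type mappings have disjoint ranges, since the $\tau_1$‑component of $\canmu$ uses $c_1$ only for the variable of $o_1$ (besides $c_2$ and $c_4$), and since the $\tau_j$‑component ($j>1$) uses only $c_1,c_2,c_3$, this collision forces both variables into the $c_1$‑class (so the variable of $c$ is that of $o_1$ and the variable of $c'$ is connected to $o_1$) or both into the $c_2$‑class (both connected to $p_1$, neither to $o_1$); in either case both variables are connected, via a chain of potentially conflicting quadruples of $D$, to one and the same anchor variable of $\tau_1$. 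But $\bar\mu(D)=C$ forces $\bar\mu$ to identify the endpoints of every potentially conflicting quadruple of $D$, hence $\bar\mu$ already maps the variables of $c$ and $c'$ to a common tuple; as their write sets intersect, $\mu_1(c)\in\prefix{T_1}{b_1}$ (because $c\le_{\tau_1}o_1$) would then be ww‑conflicting with $\mu_j(c')\in T_j$, contradicting that $C$ satisfies Condition~(\ref{c:1}). Hence $C'$ satisfies Condition~(\ref{c:1}) and a multiversion split schedule based on $C'$ exists. I expect this last step --- showing that passing to the canonical mapping never manufactures a \emph{new} dirty‑write possibility --- to be the main obstacle; it rests on the observation that any identification of a prefix‑variable of $\tau_1$ with a variable of a later transaction made by $\canmu$ is already made by $\bar\mu$, and is therefore harmless.
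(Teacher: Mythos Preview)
Your proposal is correct and follows essentially the same route as the paper: construct $\db$ from the images of the $\mu_j$, invoke Theorem~\ref{theo:characterization:split-shedules} for the easy direction, and for the hard direction lift $C$ to a sequence $D$ of potentially conflicting quadruples, apply the canonical mapping $\canmu$, and verify Conditions~(\ref{c:1})--(\ref{c:3}) for $C'=\canmu(D)$, with the crux being exactly your observation that any identification of a prefix variable of $\tau_1$ with a variable of some later $\tau_j$ forced by $\canmu$ (necessarily via $c_1$ or $c_2$) is already forced by the original $\bar\mu$ through connectedness, so a ww-conflict in $C'$ would already be one in $C$.

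Two small remarks on the extras you add. First, the paper does not invoke minimality of $C$ here, nor does it worry about $\canmu$ ``fusing'' two occurrences into the same transaction; under the paper's convention each $\mu_i'(\tau_i)$ is a transaction with its own id $i$, so the ``at most two quadruples per transaction'' clause is automatic and your shortening step is unnecessary. Second, your remark about ``collapsing the $c_1$- and $c_2$-classes'' when $o_1$ and $p_1$ are connected is not part of the paper's fixed definition of $\canmu$; the paper simply observes that in that case the $c_2$-branch of $\mu_1$ is vacuous, so no extra case analysis is needed. Neither of these affects correctness.
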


\begin{proof}
    First, we observe that for a sequence of potentially conflicting quadruples $D$ for a set $\workload$ and a variable mapping $\bar{\mu}$ for $D$, there always exists a database $\db$ such that the transactions in the sequences of conflicting quadruples $C$ induced by $D$ and $\bar{\mu}$ are consistent with $\workload$ and $\db$. Consistency with $\workload$ is immediate. As $\db$ we can take the database that contains (in its respective relations) all tuples $\mu_i(\vx)$ for every variable $\vx$ in a \ptrans{} $\tau_i$ in $D$ with $\mu_i$ the variable mapping that $\bar{\mu}$ has assigned to $\tau_i$.

    Second, observe that if a variable $\vx$ in some \ptrans{} occurrence $\tau_i$ in a sequence of potentially conflicting quadruples $D$ is connected to an operation $o$ in some (not necessarily different) \ptrans{} occurrence $\tau_j$ in $D$, then every mapping $\bar{\mu}$ for $D$ assigns the same tuple to $\vx$ in $\tau_i$ and the variable of operation $o$ in $\tau_j$.

    \smallskip
    \noindent
    (2 $\Rightarrow$ 1) Is a direct result of Definition~\ref{def:template_robustness} and Theorem~\ref{theo:characterization:split-shedules}.

    \smallskip
    \noindent
    (1 $\Rightarrow$ 2) There is a multiversion split schedule $\schedule$ for $C$ over a set $\transset$ consistent with $\workload$, due to Definition~\ref{def:template_robustness} and Theorem~\ref{theo:characterization:split-shedules}. From $C=(\trans[1], b_1, a_2, \trans[2]), \ldots, (\trans[m], b_m, a_1, \tau_1)$ we can derive a sequence of potentially conflicting quadruples $D = (\tau_1, o_1, p_2, \tau_2), \ldots, (\tau_m, o_m, p_1, \tau_1)$ and a variable mapping $\bar{\mu}$ for $D$ with variable mappings $\mu_i$ for every $\tau_i$, such that $\mu_i(\tau_i) = \trans[i]$, $\mu_i(o_i) = b_i$, and $\mu_i(p_i) = a_i$.

    We claim that the canonical variable mapping $\canmu$ for $D$ induces a sequence of conflicting quadruples $C^c$ and thus a schedule $\schedule'$ for $C^c$ as in Definition~\ref{def:mvsplitschedule}. Since the transactions in $C^c$ are consistent with $\workload$, and we already showed that for every variable mapping for $D$ (including $\canmu$) there exists a database $\db$ where these transactions are also consistent with, we only have to show that schedule $\schedule$ has properties $(1-3)$ of Definition~\ref{def:mvsplitschedule}. In the below argument, we write $\mu'_i$ to denote the variable mapping for \ptrans{} occurrence $\tau_i$ in $D$ implied by $\canmu$. 

    Condition~(1) requires most explanation. Therefore, towards a contradiction, let us assume that Condition~(1) is not true for $\schedule'$. Then there is a write operation in the prefix of $\mu'_1(\tau_1)$ (say with variable $\vx$) that is ww-conflicting with a write operation in another transaction $\mu'_j(\tau_j)$ in $\schedule'$, say with variable $\vy$ in the respective operation. The definition of $\canmu$ for $D$ implies that $\mu_1'(\vx) \in \{c_1,c_2\}$ and $\mu_j'(\vy) \in \{c_1,c_2,c_3\}$. More precisely, by the assumption $\mu_1'(\vx) = \mu'_j(\tau_j)$, we have that $\mu_j'(\vy) \in \{c_1,c_2\}$ implying (again by definition of $\canmu$) that $\vy$ is connected to either $o_1$ or $p_1$ in $\tau_1$. That latter means that also $\mu_1(\vx) = \mu_j(\vy)$ and thus that there is a write operation in the prefix of $\mu_1(\tau)$ that is ww-conflicting with a write operation in transaction $\mu_j(\tau_j)$ in $\schedule$, which contradicts that $\schedule$ is a multiversion split schedule.

    Condition~(2) and Condition~(3) are based on the type of operations, which are fixed in $D$ and thus shared between $C$ and $C^c$. Particularly, for Condition~(2) we have that $\mu_1(o_1) <_{\mu_1(\tau_1)}\mu_1(p_1)$ or $\mu_m(o_m)$ is rw-conflicting with $\mu_1(p_1)$, due to $\schedule$ being a multiversion split schedule, from which follows that $o_1 <_{\tau_1} p_1$ or $o_m$ is potentially rw-conflicting with $p_1$. Since the variable of $o_m$ in $\tau_m$ is connected to $p_1$ in $\tau_1$ it follows that $\mu_1'(o_1) <_{\mu_1'(\tau_1)}\mu_1'(p_1)$ or $\mu_m'(o_m)$ is rw-conflicting with $\mu_1'(p_1)$, thus that Condition~(2) is indeed true for $\schedule'$ as well. Condition~(3) follows similarly, as $\mu_1(o_1)$ is rw-conflicting with $\mu_2(p_2)$ due to $\schedule$ being a multiversion split schedule, implying that $o_1$ is potentially rw-conflicting with $p_2$. Since the variable of $p_2$ is also connected to $o_1$ in $D$, we have that $\mu_1'(p_1)$ is rw-conflicting with $\mu_2'(p_2)$ and thus that Condition~(3) is true in $\schedule'$, which concludes the proof.
\end{proof}

\begin{example}
    We provide a more elaborate example justifying the need for exactly four tuples of the same type in a counterexample. Consider the set of \ptranss{} $\workload = \{\tau_1, \tau_2\}$ with
    \begin{align*}
        \tau_1:& \ \W[1]{Y: S\ListAttr{B}} \, \W[1]{Z: S\ListAttr{A}} \, \W[1]{X: S\ListAttr{A,B}} \, \R[1]{Y: S\ListAttr{A}} \, \W[1]{Z: S\ListAttr{B}},\\
        \tau_2:& \ \W[2]{X: S\ListAttr{A,B}} \, \W[2]{Y: S\ListAttr{A}} \, \W[2]{Z: S\ListAttr{B}},
    \end{align*}
    and let $D$ be the sequence of potentially conflicting quadruples
    $$(\tau_1, \R[1]{Y: S\ListAttr{A}}, \W[2]{Y: S\ListAttr{A}}, \tau_2), (\tau_2, \W[2]{Z: S\ListAttr{B}}, \W[1]{Z: S\ListAttr{B}}, \tau_1).$$
    Then, the multiversion split schedule $\schedule_2$ based on the sequence of conflict quadruples $C$ induced by $D$ and its canonical variable mapping is as follows (we assume $c_i(S) = \x_i$ for $i \in \{1,2,3,4\}$):
    \begin{center}
    \begin{tabular}{l @{\hspace{.2em}} l @{\hspace{.2em}} l}
        $\trans[1]: \, \W[1]{\x_1} \, \W[1]{\x_2} \, \W[1]{\x_4} \, \R[1]{\x_1}$ & & $\W[1]{\x_2} \, \CT[1]$\\
        $\trans[2]:$ & $\W[2]{\x_3} \, \W[2]{\x_1} \, \W[2]{\x_2} \, \CT[2]$ & 
    \end{tabular}
    \end{center}
    There are no dirty writes in $\schedule_2$, as the write operations on $\x_1$ and $\x_2$ in $\prefix{\trans[1]}{\R[1]{\x_1}}$ write to attributes disjoint from the write operations on $\x_1$ and $\x_2$ in $\trans[2]$. It is not possible to construct this schedule with less than four tuples, as trying to replace any two tuples $t_i$ and $t_j$ with one tuple leads to a dirty write invalidating the schedule under RC. \hfill $\Box$
\end{example}

To cycle through all possible sequences $D$, Algorithm~\ref{alg:ptime:template} iterates over the possible split \ptranss{} $\tau_1\in \workload$ and its possible operations $o_1,p_1\in\tau_1$, and relies on a graph referred to as 
$\prefixfreegraphB({o_1},\allowbreak {p_1}, h, \tau_1, \workload)$.
Here, $h\in\{1,2\}$ signals that the prefix and suffix of the split of $\tau_1$ use the same type mapping $c_1$ when $h=1$ and that the suffix uses type mapping $c_2$ when $h=2$.
The graph has as nodes the quadruples $(\tau, o, i, j)$ with $\tau \in \workload$, $o \in \tau$, $i\in\{1,2,3\}$ and $j \in \{\text{in},\text{out}\}$.
Here, $i\in\{1,2,3\}$ encodes that $o$ is assigned the type mapping $c_i$ in $\tau$ (the type mapping $c_4$ is not used). There will be two types of edges: (1) inner edges $(\tau,o,i,\text{in}) \to (\tau, p,i',\text{out})$ that stay within the same transaction $\tau$ and indicate how the type mapping changes (or stays the same) from $c_i$ for $o$ to $c_{i'}$ for $p$; and (2) outer edges $(\tau,o,i,\text{out}) \to (\tau', p,i',\text{in})$ between different occurrences of \ptranss{} encoding a potentially conflicting quadruple $(\tau,o,p,\tau')$ and maintaining information on type mappings as well.

More formally, a quadruple node $(\tau, o, i, j)$ in the graph satisfies the following properties:
\begin{itemize}
\item[(a)] $i=1$ implies that there is no operation $o_1' \in \prefix{\tau_1}{o_1}$ over the same variable as ${o}_1$ in
$\tau_1$ s.t.\ $o_1'$ is potentially ww-conflicting with an operation over the same variable as $o$ in $\tau$.
\item[(b)] $i=h$ implies that there is no operation $o_1' \in \prefix{\tau_1}{o_1}$ over the same variable as ${p}_1$ in
    $\tau_1$ s.t.\ $o_1'$ is potentially ww-conflicting with an operation over the same variable as $o$ in $\tau$. 
\end{itemize}
Conditions (a) and (b) on the nodes, ensure that condition (1) of
Definition~\ref{def:mvsplitschedule} is always guaranteed for all possible variable mappings that are consistent with the particular choice of type mapping.
Furthermore, two nodes $(\tau, o, i, j)$ and $(\tau', o', i', j')$ are connected by a directed edge if either
\begin{itemize}
    \item[$(\dagger)$] $\tau = \tau'$, $j = \text{in}$, $j' = \text{out}$, and
    if $o$ and $o'$ are over the same variable in $\tau$, then $i = i'$
        (i.e., remain within the same transaction and
        change the  {type mapping} only
        when $o$ and $o'$ are not over the same variable); or,
    \item[$(\ddagger)$] $j = \text{out}$, $j' = \text{in}$, $i = i'$ and $o$ and $o'$ are
        potentially conflicting (i.e., the analogy of
        $b$ and $a$ for consecutive transactions in a
        split schedule, but here defined for transaction templates).
\end{itemize}


The correctness of Algorithm~\ref{alg:ptime:template} now follows immediately from the following lemma:

\begin{lemma}\label{lemma:bmt:graph:characterization}
    Let $\workload$ be a set of \ptranss{}. Then, $\workload$ is not robust against \mvrc iff
    for some \ptrans{} $\tau_1 \in \workload$, $o_1, p_1 \in \tau_1$ and $i \in \{1,2\}$, a path in $\prefixfreegraphB(o_1, p_1, i, \tau_1, \workload)$ from a node $(\tau_2, p_2, 1, in)$ to a node $(\tau_m, o_m, i, out)$ exists with the following properties:
    \begin{itemize}
        \item $p_1$ is potentially conflicting with $o_m$;
        \item $o_1$ is potentially rw-conflicting with $p_2$; and
        \item $o_1 <_{\tau_1} p_1$ or $o_m$ is potentially rw-conflicting with $p_1$.
    \end{itemize}
\end{lemma}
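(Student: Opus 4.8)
The plan is to read Lemma~\ref{lemma:bmt:graph:characterization} as a graph-reachability reformulation of Lemma~\ref{lemma:noconstraints:limitedtuples}. By that lemma, $\workload$ is not robust against \mvrc iff there is a multiversion split schedule for some $C$ that is induced, via the canonical variable mapping $\canmu$, by a sequence of potentially conflicting quadruples $D = (\tau_1, o_1, p_2, \tau_2), (\tau_2, o_2, p_3, \tau_3), \ldots, (\tau_m, o_m, p_1, \tau_1)$ over $\workload$. Hence it suffices to show that such a pair $(D,\canmu)$ exists iff $\prefixfreegraphB(o_1, p_1, h, \tau_1, \workload)$ contains a path from $(\tau_2, p_2, 1, \mathrm{in})$ to $(\tau_m, o_m, h, \mathrm{out})$ satisfying the three bulleted side conditions, where the third parameter $h\in\{1,2\}$ (the ``$i$'' of the statement) records whether $\canmu$ maps the variables of $o_1$ and $p_1$ to the same tuple.

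The bridge between the two objects is a direct correspondence. Given $(D,\canmu)$, each quadruple-operation of $D$ lying in an occurrence $\tau_k$ with $k\geq 2$ has its variable assigned by $\canmu$ one of the type mappings $c_1,c_2,c_3$ (class $c_4$ occurs only inside $\tau_1$); recording the index of that class as the third coordinate turns the operation sequence $p_2,o_2,p_3,o_3,\ldots,p_m,o_m$ into a walk $(\tau_2,p_2,1,\mathrm{in}) \to (\tau_2,o_2,\cdot,\mathrm{out}) \to (\tau_3,p_3,\cdot,\mathrm{in}) \to \cdots \to (\tau_m,o_m,h,\mathrm{out})$. The inner-edge condition $(\dagger)$ holds because $\canmu$ gives operations over one variable of one occurrence the same tuple, hence the same index; the outer-edge condition $(\ddagger)$ holds because consecutive quadruples of $D$ make $o_k$ and $p_{k+1}$ potentially conflicting and $\canmu$ sends them to the same tuple (equal index); the start index is $1$ because $\mathrm{var}(p_2)$ is connected to $o_1$ in $D$, and the end index is $h$ because $\mathrm{var}(o_m)$ is connected to $p_1$. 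The node conditions (a),(b) along the walk are exactly Condition~(1) of Definition~\ref{def:mvsplitschedule} specialized to $\canmu$: an operation of $\tau_k$ can collide with $\prefix{\tau_1}{o_1}$ only when its class is $c_1$ (the tuple shared in $\tau_1$ solely with $\mathrm{var}(o_1)$) or $c_h$ (shared with $\mathrm{var}(p_1)$), and (a)/(b) forbid a ww-conflict precisely there; variables of index $3$, or variables never touched by a quadruple of $D$, receive fresh tuples and cannot produce a dirty write. Finally, the three bullets translate literally into Conditions~(2) and (3) of Definition~\ref{def:mvsplitschedule} under $\mu_1(o_1)=b_1$, $\mu_1(p_1)=a_1$, $\mu_2(p_2)=a_2$, $\mu_m(o_m)=b_m$ (noting $b_1<_{\trans[1]}a_1$ iff $o_1<_{\tau_1}p_1$ since variable substitution preserves the order).

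Carrying this out: for ($\Rightarrow$) take the multiversion split schedule granted by Lemma~\ref{lemma:noconstraints:limitedtuples}, read off $D$, $\canmu$ and $h$, and check the walk above lies in $\prefixfreegraphB$ — its node conditions follow from Condition~(1) of that schedule together with the shape of $\canmu$, its edge conditions from the fact (established inside the proof of Lemma~\ref{lemma:noconstraints:limitedtuples}) that connected variables, and in particular potentially conflicting operations, get equal tuples under any variable mapping for $D$. For ($\Leftarrow$), the path spells out a sequence $D$ of potentially conflicting quadruples together with the class assignment of a canonical mapping $\canmu$; set $\transset$ to be the transactions in $C=\canmu(D)$ over a database witnessing consistency (which exists by the last sentence of Lemma~\ref{lemma:noconstraints:limitedtuples}), and verify that $\prefix{\mu_1(\tau_1)}{b_1}\cdot \mu_2(\tau_2)\cdots \mu_m(\tau_m)\cdot \postfix{\mu_1(\tau_1)}{b_1}$ followed by the remaining transactions is a multiversion split schedule for $C$; non-robustness then follows from Lemma~\ref{lemma:noconstraints:limitedtuples}.

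The main obstacle is the bookkeeping around Condition~(1), and especially the converse direction, where the absence of dirty writes must be derived from the node conditions alone, without a pre-existing schedule to appeal to. One has to show that the \emph{local} conditions (a),(b) — which restrict only prefix operations over the single variables of $o_1$ and $p_1$ — together rule out \emph{all} ww-conflicts between $\prefix{\tau_1}{o_1}$ and every occurrence $\tau_2,\ldots,\tau_m$ under $\canmu$. This requires pinning down the interplay between ``$\vx$ is connected to $o$ in $D$'', ``$\canmu$ maps $\vx$ and $\mathrm{var}(o)$ to the same tuple'', and ``$\vx$'s class is the third coordinate of a node on the path'': in $\mu_1$ only $\mathrm{var}(o_1)$ maps to the tuple $c_1(\cdot)$, the $c_1$-class of the other occurrences consists exactly of variables connected to $o_1$, the $c_h$-class is anchored (through connectedness) at $\mathrm{var}(p_1)$, and any variable unreachable through $D$'s quadruples lands in the fresh class $c_3$. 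Making all of this precise, and checking it is preserved across inner and outer edges, is the real work; everything else is a transcription of Definition~\ref{def:mvsplitschedule} through $\canmu$.
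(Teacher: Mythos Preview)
Your proposal is correct and follows essentially the same route as the paper: both directions go through Lemma~\ref{lemma:noconstraints:limitedtuples}, translate between a path in $\prefixfreegraphB$ and a sequence $D$ equipped with the canonical variable mapping $\canmu$, and then check that the node conditions (a),(b) encode Condition~(1) of Definition~\ref{def:mvsplitschedule} while the three bullets encode Conditions~(2) and~(3). Your identification of the main obstacle---deriving the global absence of dirty writes in the $(\Leftarrow)$ direction from the local node conditions on just the variables of $o_1$ and $p_1$, by tracking which variables receive $c_1$, $c_h$, $c_3$, $c_4$ under $\canmu$---is exactly where the paper's verification effort goes.
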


\begin{proof}
    \emph{(if)} Let $P = (\tau_2, p_2, \ell_2, in), (\tau_2, o_2, k_2, out), (\tau_3, p_3, \ell_3, in), \ldots, (\tau_m, o_m, k_m, out)$ be the path in $\prefixfreegraphB(o_1, p_1,\allowbreak i, \tau_1, \workload)$, with $\ell_2 = 1$ and $k_m = i$. From this path $P$, we derive the sequence of potentially conflicting quadruples $D = (\tau_1, o_1, p_2, \tau_2), \ldots,\allowbreak (\tau_m, o_m, p_1, \tau_1)$.
    Note that for each such quadruple $(\tau_j, o_j, p_k, \tau_k)$ in $D$, the operations $o_j$ and $p_k$ are indeed potentially conflicting: if $j = 1$ or $j = m$, this is immediate by the additional conditions stated in Lemma~\ref{lemma:bmt:graph:characterization}. Otherwise, this follows from the fact that there can only be an edge from $(\tau_j, o_j, f_{o_j}, out)$ to $(\tau_k, p_k, f_{p_k}, in)$ if $o_j$ and $p_k$ are potentially conflicting.
    
    For each \shortptrans{} $\tau_j$ in $D$, we next define a variable assignment $\mu_j$ using four disjoint tuple mappings $c_1$, $c_2$, $c_3$ and $c_4$, thereby creating a variable mapping $\bar{\mu}$ for $D$. By construction of $\prefixfreegraphB(o_1, p_1, i, \tau_1, \workload)$, this $\bar{\mu}$ will actually coincide with the canonical mapping for $D$. We first define $\mu_1$:
    \begin{align*}
        \mu_1'(\vx) &= c_1(\type{\vx}) && \text{if $\vx$ is the variable occurring in $o_1$}, \\
        \mu_1'(\vx) &= c_2(\type{\vx}) && \text{if the variables occurring in $o_1$ and $p_1$ are different and not connected, and $\vx$ is the variable occurring in $p_1$}, \\
        \mu_1'(\vx) &= c_4(\type{\vx}) && \text{otherwise.} 
    \end{align*}
    For each $\tau_j$ different from $\tau_1$, the variable assignment $\mu_j$ is constructed as follows:
    \begin{align*}
        \mu_j(\vx) &= c_k(\type{\vx}) && \text{if $\vx$ is the variable occurring in $o_j$ and $(\tau_j, o_j, k, out)$ is a node in $P$}, \\
        \mu_j(\vx) &= c_\ell(\type{\vx}) && \text{if $\vx$ is the variable occurring in $p_j$ and $(\tau_j, p_j, \ell, in)$ is a node in $P$}, \\
        \mu_j(\vx) &= c_3(\type{\vx}) && \text{otherwise.}
    \end{align*}
    This variable assignment $\mu_j$ is well defined for each variable $\vx$, even if $\vx$ is the variable occurring in both $o_j$ and $p_j$. In this case, there can only be an edge from $(\tau_j, p_j, \ell, in)$ to $(\tau_j, o_j, k, out)$ if $k = \ell$. Notice furthermore that $c_3$ is never used in $\mu_1$, and $c_4$ is never used in a $\mu_j$ different from $\mu_1$, as $k, \ell \in \{1,2,3\}$ by construction of $P$.
    
    Let $\schedule$ be the multiversion split schedule based on $C = \bar{\mu}(D) = (\trans[1], b_1, a_2, \trans[2]), \ldots, (\trans[m], b_m, a_1, \trans[1])$. We argue that $\schedule$ satisfies all properties of Definition~\ref{def:mvsplitschedule}, thereby proving that $\workload$ is not robust against \mvrc. Towards a contradiction, assume Condition~(\ref{c:1}) is not true.
    That is, there is an operation $b_1' \in \prefix{\trans[1]}{b_1}$ ww-conflicting with an operation $b_j' \in \trans[j]$.
    By construction of $C$, the operations $o_1' \in \tau_1$ over a variable $\vx$ and $o_j' \in \tau_j$ over a variable $\vy$ corresponding to $b_1'$ and $b_j'$ are potentially ww-conflicting. The variable assignments $\mu_1$ and $\mu_j$ applied type mapping $c_k$ with $k \in \{1,2\}$ on both $\vx$ and $\vy$, as all four type mappings are disjoint and these are the only two type mappings occurring in both $\mu_1$ and $\mu_j$.
    Since we applied $c_1$ or $c_2$, the variable $\vy$ is occurring in $\tau_j$ in either $o_j$ or $p_j$ (or both). But then the corresponding node $(\tau_j, o_j, k, out)$ or $(\tau_j, p_j, k, in)$ cannot occur by construction of $\prefixfreegraphB(o_1, p_1, i, \tau_1, \workload)$, leading to the desired contradiction. Condition~(\ref{c:2}) and Condition~(\ref{c:3}) are immediate by the properties on $P$ specified in Lemma~\ref{lemma:bmt:graph:characterization}.
    
    \emph{(only if)}
    Assume $\workload$ is not robust against \mvrc.
    According to Lemma~\ref{lemma:noconstraints:limitedtuples}, there exists a multiversion split schedule $\schedule$ for some $C$ over a set of transactions $\transset$ consistent with $\workload$ and a database $\db$,
    where $C$ is induced by a sequence of potentially conflicting quadruples $D = (\tau_1, o_1, p_2, \tau_2)\ldots (\tau_m, o_m, p_1, \tau_m)$ over $\workload$ and it canonical variable mapping $\canmu$.
    We introduce a function $f$ mapping each operation in $D$ onto the corresponding type mapping used in the construction of $\schedule$.
    More formally, for each operation $o_j \in \tau_j$ over a variable $X$ appearing in $D$, we have $f(o_j) = i$ such that $\mu_j(X) = c_i(\type{X})$, with $\mu_j$ the corresponding variable mapping in $\canmu$.
    We now argue that the sequence of nodes $P = (\tau_2, p_2, f(p_2), in), (\tau_2, o_2, f(o_2), out), \ldots, (\tau_m, p_m, f(p_m), in), (\tau_m, o_m, f(o_m), out)$ is a valid path in $\prefixfreegraphB(o_1, p_1, i, \tau_1, \workload)$, where $i = 1$ if $p_1$ and $o_1$ are over the same variable in $\tau_1$, and $i = 2$ if not.
    
    We first argue that each node $(\tau_j, o_j, f(o_j), k)$ with $k \in \{in, out\}$ on this path $P$ is indeed a node in $\prefixfreegraphB(o_1, p_1, i, \tau_1, \workload)$.
    To this end, notice that $f(o_j) \in \{1,2,3\}$, as only $c_1$, $c_2$ and $c_3$ are used for operations occurring in $D$.
    If $f(o_j) = 1$, the node appears in the graph as long as there is no $o_j' \in \tau_j$ over the same variable as $o_j$ and potentially ww-conflicting with an operation in $\prefix{\tau_1}{o_1}$ over the same variable as $o_1$. Analogously, if $f(o_j) = 2$ and the operations $o_1$ and $p_1$ are not over the same variable in $\tau_1$, then the node appears in the graph if there is no $o_j' \in \tau_j$ over the same variable as $o_j$ and potentially ww-conflicting with an operation in $\prefix{\tau_1}{o_1}$ over the same variable as $p_1$.
    In both cases, the node not appearing in the graph would imply that the schedule $\schedule$ is not a valid multiversion split schedule, as Condition~(\ref{c:1}) in Definition~\ref{def:mvsplitschedule} would be violated.
    If $f(o_j) = 3$, then the node always occurs in the graph.
    
    We now argue that there is indeed an edge between each consecutive pair of nodes in $P$. For a pair $(\tau_j, p_j, f(p_j), in), (\tau_j, o_j, f(o_j), out)$, this follows trivially. For a pair $(\tau_j, o_j, f(o_j), out), (\tau_{j+1}, p_{j+1}, f(p_{j+1}), in)$, notice that $f(o_j) = f(p_{j+1})$, as otherwise $(\mu_j(\tau_j), o_j, p_{j+1}, \mu_{j+1}(\tau_{j+1}))$ would not be a conflict quadruple in $C$, where $\mu_j$ and $\mu_{j+1}$ are the corresponding variable mappings in $\canmu$.
    
    To conclude, we show that this path $P$ satisfies all required conditions. Since $\schedule$ is a multiversion split schedule, these conditions are immediate by Definition~\ref{def:mvsplitschedule}.
\end{proof}

It remains to argue that Algorithm~\ref{alg:ptime:template} indeed runs in time $\mathcal{O}(k^4.\ell)$, with $k$ the total number of operations in $\workload$ and $\ell$ the maximum number of operations in a \ptrans{} in $\workload$.
The two outer loops of Algorithm~\ref{alg:ptime:template} iterate over each pair of operations in the same template $\tau_1$ implying that the total number of iterations is ${O}(k.\ell)$. During each such iteration, the graph $G$ is constructed, containing at most $6k$ nodes. The transitive closure $TC$ over $G$ can therefore be computed in time ${O}(k^3)$ by an immediate application of the Floyd-Warshall algorithm. The last step of each iteration of the outer loops is to verify for each pair of operations $p_2$ and $o_m$ in $\workload$ whether a specific condition holds. As a result, this check can be verified in time ${O}(k^2)$.
By combining these results, we conclude that Algorithm~\ref{alg:ptime:template} indeed decides whether $\workload$ is robust against \mvrc in time $\mathcal{O}(k^4.\ell)$.

}
{
}

\end{document}